  \newcommand{\miniscule}{\@setfontsize\miniscule{4}{5}}%
  \newcommand{\miniscule}{\@setfontsize\miniscule{5}{6}}%
  \newcommand{\miniscule}{\@setfontsize\miniscule{5}{6}}%
  \newcommand{\nano}{\@setfontsize\miniscule{3}{4}}%
  \newcommand{\nano}{\@setfontsize\miniscule{4}{5}}%
  \newcommand{\nano}{\@setfontsize\miniscule{4}{5}}%
\newtheorem{theorem}{Theorem}[section]
 \newtheorem{prop}[theorem]{Proposition}
  \newtheorem{claim}[theorem]{Claim}
\newtheorem{cor}[theorem]{Corollary}
  \theoremstyle{definition}
\newtheorem{definition}[theorem]{Definition}
\newtheorem{example}[theorem]{Example}
\newcommand{\balita}{\raisebox{2pt}{\text{ \nano$\bullet$ }}} 
\newcommand{\bala}{\raisebox{1pt}{\text{ \tiny$\bullet$ }}} 
\theoremstyle{remark}
\newtheorem{remark}[theorem]{Remark}
\newtheorem*{acknowledgements}{Acknowledgements}
\newcommand{\vast}{\bBigg@{4}}
\newcommand{\Vast}{\bBigg@{17.30}}
    \newcommand{\emutres}[3]{e_{\mu_{#1}}e_{\mu_{#2}}e_{\mu_{#3}}}
\numberwithin{equation}{section}
    \def\moverlay{\mathpalette\mov@rlay}
    \def\mov@rlay#1#2{\leavevmode\vtop{%
       \baselineskip\z@skip \lineskiplimit-\maxdimen
       \ialign{\hfil$\m@th#1##$\hfil\cr#2\crcr}}}
    \newcommand{\charfusion}[3][\mathord]{
        #1{\ifx#1\mathop\vphantom{#2}\fi
            \mathpalette\mov@rlay{#2\cr#3}
          }
        \ifx#1\mathop\expandafter\displaylimits\fi}
\newcommand{\cupdot}{\charfusion[\mathbin]{\cup}{\cdot}}
 \definecolor{VerdeFH}{HTML}{009374}
    \definecolor{azulf}{HTML}{0092D2}
\definecolor{azulc}{HTML}{00AEEF}
    \numberwithin{equation}{section}
    \newcommand{\CD}[1]{\mtr{CD}_{#1}}
    \newcommand{\Zc}{\mathcal{Z}}
    \newcommand{\Aut}{\mathrm{Aut}}
    \newcommand{\Out}{\mathrm{Inn}}
    \newcommand{\Inn}{\mathrm{Out}}
    \renewcommand{\and}{\,\mbox{and}\,}
    \newcommand{\Tr}{\mathrm{Tr}\hspace{1pt}}
    \newcommand{\TrN}{\mathrm{Tr}_{\hspace{-.9pt}N}\hspace{1pt}}
    \newcommand{\TrV}{\mathrm{Tr}_{\hspace{-.3pt}V}\hspace{1pt}}
    \newcommand{\TrM}{\Tr_{\hspace{-1pt}M_N(\C)}\hspace{1pt}}
    \newcommand{\hmu}{\hat{\mu}}
    \newcommand{\hnu}{\hat{\nu}}
    \newcommand{\Km}{K_{\mu}}
    \newcommand{\Kn}{K_{\nu}}
    \newcommand{\Xm}{X_{\mu}}
    \newcommand{\Xn}{X_{\nu}}
    \newcommand{\mtr}[1]{\mathrm{#1}}
    \newcommand{\mtf}[1]{\mathfrak{#1}}
    \newcommand{\A}{\mathcal{A}}
    \newcommand{\dif}[1]{\mathrm{d}#1}
    \newcommand{\re}{\mathbb{R}}
    \newcommand{\diag}{\mtr{diag}}
    \renewcommand{\H}{\mathcal{H}}
    \newcommand{\C}{\mathbb{C}}
    \newcommand{\peqsubind}[1]{_{\text{\tiny{#1} } } }
     \newcommand{\peqsupind}[1]{^{\text{\tiny{#1} } } }
    \newcommand{\udot}[1]{\underaccent{\cdot}{#1}}
    \newcommand{\ga}[1]{\gamma^{{#1}}}
    \newcommand{\gam}[1]{\gamma^{\mu_{#1}}}
    \newcommand{\ii}{\mathrm{i}}
    \newcommand{\ee}{\mathrm{e}}
    \newcommand{\inv}{^{-1}}
    \newcommand{\mtc}[1]{\mathcal{#1}}
    \newcommand{\Z}{\mathbb{Z}}
    \newcommand{\N}{\mathbb{N}}
    \newcommand{\hp}[1]{^{(#1)}}
    \newcommand{\where}{\mbox{where}\,\,}
\newcommand{\itemB}{\item[\bala]}
\newcommand{\itemW}{\item[\raisebox{2pt}{\scalebox{.45}{$\diamondsuit$}}]}
    \newcommand{\acomm}[1]{\{ #1, \balita \hspace{3pt} \}}
    \newcommand{\comm}[1]{[ #1, \balita \hspace{3pt}]}
\colorlet{tableheadcolor}{gray!19} 
\newcommand{\headcol}{\rowcolor{tableheadcolor}} %
\colorlet{tablerowcolor}{gray!10} 
\newcommand{\rowcol}{\rowcolor{tablerowcolor}} %
\newcommand{\topline}{\arrayrulecolor{black}\specialrule{0.1em}{\abovetopsep}{0pt}%
            \arrayrulecolor{tableheadcolor}\specialrule{\belowrulesep}{0pt}{0pt}%
            \arrayrulecolor{black}}
\newcommand{\midline}{\arrayrulecolor{tableheadcolor}\specialrule{\aboverulesep}{0pt}{0pt}%
            \arrayrulecolor{black}\specialrule{\lightrulewidth}{0pt}{0pt}%
            \arrayrulecolor{white}\specialrule{\belowrulesep}{0pt}{0pt}%
            \arrayrulecolor{black}}
\newcommand{\bottomlinec}{\arrayrulecolor{tablerowcolor}\specialrule{\aboverulesep}{0pt}{0pt}%
            \arrayrulecolor{black}\specialrule{\heavyrulewidth}{0pt}{\belowbottomsep}}%
\newcommand{\Ht}{Q}
\newcommand{\Lt}{R}
 \DeclareRobustCommand{\gobblefive}[5]{}
\newcommand{\pair}[2]{\TrN(#1)\cdot \TrN(#2) }
\newcommand{\numerada}{\refstepcounter{equation}\tag{\theequation}}
\newenvironment{salign} 
  {\csname align*\endcsname}
  {\csname endalign*\endcsname} 
\begin{document}

\title[From random NCG 
to bi-tracial multimatrix models]{Computing the 
spectral action for fuzzy geometries: from random noncommutative geometry 
to bi-tracial multimatrix models}

  \author[Computing the 
spectral action for fuzzy geometries]{\small{Carlos I. P\'erez-S\'anchez}}
\address{Faculty of Physics, University of Warsaw\\
 ul. Pasteura 5, 02-093 Warsaw, Poland 
} 
\email{cperez@fuw.edu.pl}

\keywords{Noncommutative geometry;  Random geometry; Spectral Action; Spectral triples; Matrix models; Fuzzy spaces; Chord diagrams; Noncommutative polynomials; Free probability}

\begin{abstract}  
A fuzzy geometry is a certain type of spectral triple whose Dirac operator crucially turns out to be a finite matrix. This notion was introduced in [J. Barrett, {\em J. Math. Phys.} 56, 082301 (2015)] and accommodates familiar fuzzy spaces like spheres and tori. In the framework of random noncommutative geometry, we use Barrett's characterization of Dirac operators of fuzzy geometries in order to systematically compute the spectral action $S(D)= \mathrm{Tr} f(D)$ for $2n$-dimensional fuzzy geometries. In contrast to the original Chamseddine-Connes spectral action, we take a polynomial $f$ with $f(x)\to \infty$ as $ |x|\to\infty$ in order to obtain a well-defined path integral that can be stated as a random matrix model with action of the type $S(D)=N \cdot \mathrm{tr}\, F+\textstyle\sum_i \mathrm{tr}\,A_i \cdot  \mathrm{tr} \,B_i $, being $F,A_i $ and $B_i $  noncommutative polynomials in $2^{2n-1}$ complex $N\times N$ matrices that parametrize the Dirac operator $D$. For arbitrary signature---thus for any admissible KO-dimension---formulas for 2-dimensional fuzzy geometries are given up to a sextic polynomial,  and up to a quartic polynomial for 4-dimensional ones, with focus on the octo-matrix models  for Lorentzian and Riemannian signatures. The noncommutative polynomials $F,A_i $ and $B_i$ are obtained via chord diagrams and satisfy: independence of $N$; self-adjointness of the main polynomial $F$ (modulo cyclic reordering of each monomial); also up to cyclicity, either self-adjointness or anti-self-adjointness of $A_i $ and $B_i $ simultaneously, for fixed $i$. Collectively, this favors a free probabilistic perspective for the large-$N$ limit we elaborate on.
\end{abstract}

\maketitle
\fontsize{10.0}{13.0}\selectfont  
\setcounter{tocdepth}{2}
\tableofcontents
 
\fontsize{11.4}{14.0}\selectfont  


\section{Introduction}\label{sec:intro} 

In some occasions, the core concept 
of a novel research avenue can be 
traced back to a defiant attitude towards 
a no-go theorem. However uncommon this is,
some prolific theories that 
arose from a slight perturbation of the original
assumptions, aiming at an escape from the no-go,
have shaped the modern landscape of mathematical physics. 
Arguably, the best-known story fitting this description is supersymmetry.

Another illustration is found in   
noncommutative geometry (NCG) applications 
to particle physics:
 In an attempt to unify all fundamental interactions, the proposal of trading \textit{gravitation coupled to matter 
on a usual spacetime manifold} $M$
by \textit{pure gravitation on an extended space}  $M\times F$  is bound to fail---as a well-known symmetry 
argument (amidst other objections) shows---as far as
spacetime is extended by an ordinary manifold $F$.
Rebelling against the no-go result, 
while not giving up a gravitational unification approach,
shows the way out  
of the realm of commutative spaces (manifolds) 
after restating 
the symmetries in an algebraic fashion.
For the precise argument we refer to \cite[Sec. 9.9]{ConnesMarcolli} and 
for the details on the obstruction to \cite{thurston,mather,mather2,epstein}.
\par 
Following the path towards a
noncommutative description of the 
`internal space' 
$F$ (initially a two-point space), Connes
was able to incorporate 
the Higgs field on a geometrically equal footing with 
gauge fields, simultaneously  
avoiding the Kaluza-Klein tower 
that an augmentation of spacetime by an ordinary space $F$ would cause. 
As a matter of fact, not only the Higgs sector but
the whole classical action of the 
Standard Model of particle physics 
has been geometrically derived \cite{CCM,BarrettSM}
from the \textit{Chamseddine-Connes spectral action}\footnote{To be precise, in this 
article `spectral action' means 
`bosonic spectral action'. The 
derivation of the Standard Model requires 
also a fermionic spectral action $\langle J\tilde\psi, D \tilde\psi \rangle$
where $\tilde\psi $ is a matrix (see \cite{DabrowskiSitarzDAndrea}) of classical 
fermions. See 
 \cite{LizziCorfu} for a physics review
 and \cite{WvSbook} and \cite[Secs.9-18]{ConnesMarcolli} for 
 detailed mathematical exposition.} 
 \cite{Chamseddine:1996zu}. 
The three-decade-old history of the impact 
of Connes' groundbreaking idea 
on the physics beyond the Standard Model is told in \cite{surveySpectral} 
(to whose comprehensive references one could add later works
\cite{BesnardUone,Jordan,Martinetti:2019hhq,Bochniak:2020lab});
see his own review \cite{ConnesHighlights} for the impact of the spectral 
formalism on mathematics.
\par 

On top of the very active quest for the noncommutative internal space
$F$ that corresponds to a chosen field theory,
it is pertinent to point out that such theory is 
classical and that quantum field theory 
tools (for instance, the renormalization group) are adapted to it. 
The proposals on presenting noncommutative
geometries in an inherently quantum setting are diverse: 
A spin network approach led to the concept 
of \textit{gauge networks}, along with a blueprint for spin foams in NCG, 
as a quanta of NCG  \cite{MvS}; therein, from the spectral 
action (for Dirac operators) on gauge networks, 
the Wilson action for Higgs-gauge lattice theories and the
Kogut-Susskind Hamiltonian (for a 3-dimensional lattice) were derived, 
as an interesting result of the interplay
among lattice gauge theory, spin networks and NCG. 
Also, significant progress on the matter of  
fermionic second quantization of the spectral action, 
relating it to the
von Neumann entropy, has been proposed in  \cite{EntropySpectral};
and a bosonic second quantization was
undertaken more recently in \cite{KhalkhaliQuantization}. 
The context of this paper is a different, random geometrical approach 
motivated by the path-integral
quantization of noncommutative geometries
\begin{align} \label{eq:QSA}
 \mathcal Z = \int_{\mathcal M} \ee^{-  \Tr f(D) }\dif{D}\,,
\end{align}
where $S(D)=\Tr f(D)$ is the (bosonic) spectral 
action. The integration is 
over the space $\mathcal M$ of geometries 
encoded by Dirac operators $D$ on a Hilbert space 
that, in commutative geometry, corresponds to 
the square integrable spinors $\H_M$ (well-defining 
this $\mathcal Z$ is a fairly simplified version of
the actual open problem stated in \cite[Ch. 18.4]{ConnesMarcolli}). 
The meaning of this partition function $\mtc Z$ is not clear
for Dirac operators corresponding to an ordinary spacetime $M$.
In order to get a finite-rank Dirac operator
one can, on the one hand, truncate the algebra 
$C^\infty(M)$ and the Hilbert space $\H_M$ 
in order to get a well-defined measure
$\dif{D}$ on the space of geometries
$\mathcal M$, now parametrized by finite, albeit large, 
matrices. On the other hand, 
one does not want to fall in the class of 
\textit{lattice geometries}  \cite{PaschkeLattices} nor \textit{finite geometries}  \cite{KrajewskiDiagr}. 
\par 

Fuzzy geometries are finite-dimensional 
geometries that escape the classification
of finite geometries 
given in \cite{KrajewskiDiagr}, depicted in terms 
of the Krajewski diagrams, and \cite{PaschkeSitarz}. In fact, 
fuzzy geometries retain also a (finite dimensional) model 
of the spinor space that is not present in 
a finite geometry. Moreover, in contradistinction to lattices, fuzzy geometries
are genuinely---and not only in spirit--- noncommutative. In particular, the path-integral quantization of fuzzy geometries differs also from the approach in 
\cite{PaschkeLattices} for lattice geometries.
\par 

Of course, fuzziness is not new \cite{MadoreS2} and 
can be understood as limited
spatial resolution on spaces. The prototype is 
the space spanned by finitely many spherical harmonics 
approximating the algebra of 
functions on the sphere $S^2$. 
This picture is in line with models of quantum gravity,  
since classical spacetime 
is expected to break down at scales below Planck length \cite{DoplicherFredenhagenRoberts}. 
\par 

Although the three components of a \textit{spectral triple} 
have sometimes been evoked in the study of fuzzy 
spaces \cite{DolanHuetOConnor} and their Dirac operators 
on some fuzzy spaces are well-studied
(e.g. the Grosse-Pre\v{s}najder Dirac operator \cite{Grosse:1994ed}),
a novelty in \cite{BarrettMatrix} 
is their systematic spectral triple formulation;
for instance, fuzzy tori, elsewhere addressed (e.g. \cite{DolanOConnor,SteinackerFuzzy}), 
acquire a spectral triple \cite{BarrettGaunt}. 
Spectral triples are data that algebraically
generalize spin manifolds. More precisely, when 
the spectral triple is commutative 
(i.e. the algebraic 
structure that generalizes the algebra of coordinates is commutative, with additional assumptions we omit)
a strong theorem is the ability to construct, out of it, an oriented, smooth manifold, with its metric and spin$^c$ structure.
This has been proven by Connes \cite{Reconstruction} taking 
some elements from 
previous constructs by Rennie-V\'arilly \cite{RennieVarilly}. 
\par

This paper computes the spectral 
action for fuzzy geometries.  
Compared with the smooth case, our methods are simpler. 
For an ordinary manifold $M$ or an almost 
commutative space $M\times F$ (being $F$ a finite geometry \cite[Sec. 8]{WvSbook}), one commonly 
relies on a heat kernel expansion 

\[\Tr ( \ee^{-t D^2 } ) \sim \sum_{n \geq 0} t^{\frac{n-\dim(M)}{2}} a_{n}(D^2) \qquad \qquad( t^+\to 0 )\,,  \]
which allows, for $f$ of the Laplace-Stieltjes transform type 
 $f(x)=\int_{\re^+}  \ee^{-t x^2 } \dif \nu(t)$,  to determine the spectral action $\Tr f(D/\Lambda)$  
in terms of the Seeley-DeWitt coefficients $a_{2n}(D^2)$ \cite{Gilkey}, being $t=\Lambda\inv$ the inverse of the cutoff $\Lambda$; see also \cite{EcksteinIochum}. 
The elements of Gilkey's theory are not used here. Crucially,  
$f$ is instead assumed to be a polynomial
(with $f(x)\to \infty$ for  $ |x|\to\infty$), which enables one to 
directly compute traces of powers of the Dirac operator. This
alteration of the Chamseddine-Connes spectral
action ---in which $f$ is typically a symmetric bump function
around the origin--- comes from a convergence requirement 
for the path-integral \eqref{eq:QSA}, as initiated in \cite{BarrettGlaser} 
(a polynomial spectral action itself is  already considered in \cite{MvS}, though,
for gauge networks arising from embedded quivers in a spin manifold).
The motivation of Barrett-Glaser is  
to access information about fuzzy geometries 
by looking at the statistics of the eigenvalues of $D$
using Markov chain Monte Carlo simulations. 
This and a posterior study  \cite{BarrettDruceGlaser} deliver 
evidence for a phase transition to a 2-dimensional behavior (also of significance in quantum gravity \cite{CarlipDimRed}). \par

Finally, the paper is organized as follows:
the next section, based on \cite{BarrettMatrix}, introduces spectral triples and fuzzy geometries
in a self-contained way. The definition is slightly technical, but
the essence of a fuzzy geometry can be understood
from its matrix algebra, its Hilbert space $\H$ and Barrett's characterization
of Dirac operators (Secs. \ref{sec:GenDirac} and \ref{sec:CharacDirac}). In Section
\ref{sec:computeSA} we compute 
the spectral action in a general setting. A convenient 
graphical description of `trace identities' for gamma matrices 
(due to the Clifford module structure of $\H$, Sec. \ref{sec:gammas}) 
is provided in terms of \textit{chord diagrams}, which later serve as organizational 
tool in the computation of $\Tr (D^{m})$, $m\in \N$. As 
the main results in Sections \ref{sec:d2} and \ref{sec:d4}, 
we derive formulas for the spectral action for 
$2$- and $4$-dimensional fuzzy geometries, respectively.
In the latter case, we elaborate on the Riemannian 
and Lorentzian cases, being these 
the first reported (analytic) derivations for the spectral action 
of $d$-dimensional fuzzy geometries with general Dirac operators
in $d>3$.  Formulas for the spectral action for geometries of signature 
 (0,3), which lead to a tetra-matrix model\footnote{Strictly seen, these lead to an   octo-matrix model, but a simplification is allowed by  
the fact that the product of all 
gamma matrices is a scalar.}, were presented 
in \cite[App. A.6]{BarrettGlaser}. 
Later, Glaser explored the phase transition of the
fuzzy-sphere--like (1,3) case---that is of KO-dimension 2, as satisfied by 
the Grosse-Pre\v{s}najder operator---together with that of (1,1) and (2,0) geometries of KO-dimensions 0 and 6, respectively. For 
the (1,3) geometry, the spectral action used in the numerical simulations of
\cite{GlaserScaling} 
was obtained inside MCMCv4, a computer code aimed 
at simulating random fuzzy geometries; 
the formula  (in C++ language) for the spectral action can be found in the file \texttt{Dirac.cpp} 
of  \cite{GlaserCode}.
Our solely analytic approach to spectral action computations 
yields, out of a single general proof, a formula for any admissible KO-dimension,
as it will become apparent in Proposition \ref{thm:Haupt}.
\par
In Section \ref{sec:freeP}, we restate our results, 
aiming at free probabilistic tools
towards the large-$N$ limit (being $N$ the matrix size
in Barrett's parametrization of the Dirac operator). In order to define 
noncommutative (NC) distributions, one often departs from a
self-adjoint NC polynomial. It turns out that only a
weaker concept (`cyclic self-adjointness') defined here is satisfied by the main NC polynomial $P$ in $\Tr f(D)= N \cdot \TrN  P + \sum_i
\TrN \Phi_i \TrN \Psi_i $; the other NC polynomials 
$\Phi_i$ and $\Psi_i$ (for fixed $i$) either satisfy this very condition, or they are both cyclic anti-self-adjoint. The trace $\TrN$
cannot tell apart these conditions from the actual
self-adjointness of a NC polynomial.

The conclusions and the outlook are presented 
in the last two sections. The short Appendix \ref{sec:App} 
contains some useful information about (anti-)hermiticity 
of products of gamma matrices for general 
signature. To ease legibility, 
some steps in the proof of Proposition \ref{thm:Dsix}
(the sextic term in dimension 2) have been placed in
Appendix \ref{sec:appB}, which is also 
intended as a stand-alone example on how to gain 
NC polynomials from chord diagrams. 

\section{Fuzzy geometries as spectral triples} \label{sec:ST} 

The formalism of spectral triples in noncommutative geometry
can be very intricate and its full machinery will not be used here. 
We refer to \cite{WvSbook} for more details 
on the usage of spectral triples in high energy physics.

The essential structure is the \textit{spectral triple}
$(\A,\H,D)$, where  $\A$ is a unital, involutive algebra
of bounded operators on a Hilbert space $\H$. 
The \textit{Dirac operator} $D$ is 
a self-adjoint operator on $\H$ with compact resolvent
and such that $[D,a]$ is bounded for all $a\in \A$.
On $\H$, the algebraic behavior between of the Dirac operator 
and the algebra $\A$---and 
later also among $D$ and some 
additional operators on $\H$---encodes geometrical
 properties. For instance, 
  the geodesic distance $d_g$ between two points $x$ and $y$ of a Riemannian (spin) manifold $(M,g)$,
  can be recovered from $ d_g(x,y)=\sup _{a\in \A}  \{ | a(x)-a(y) | \,:\, a\in \A \mbox{ and } || [D,a] || \leq 1\}$,
  being $\A$ the algebra of functions on $M$ and $D$ the canonical Dirac operator \cite[Sect VI.1]{ConnesNCG}.   \par 
  
Precisely those additional operators lead to the concept of
\textit{real, even} spectral triple, which allows to build
physical models. 
Next definition, taken from \cite{BarrettMatrix},
is given here by completeness, since fuzzy geometries
are a specific type of real (in this paper all of them even) spectral triples.

\begin{definition} \label{def:realST}
A \textit{real, even spectral triple} of KO-dimension $s\in \Z/8\Z$
consists in the following objects and relations:
\begin{enumerate}[label=(\roman*)] \setlength\itemsep{.4em}
\item  an algebra $\A$ with involution $*$
 \item  a Hilbert space $\H$ together with a faithful, $*$-algebra
 representation $\rho:\A \to \mathcal L(\H)$
\item  an anti-linear unitarity (called \textit{real structure}) $J:\H\to \H$, $\langle J v,J w \rangle = \langle w,u\rangle$,
being $\langle \balita\hspace{2pt}, \balita \hspace{2pt}\rangle$ the inner product of $\H$ 
\item  a self-adjoint operator $\gamma:\H\to \H$ commuting with the representation $\rho$
and satisfying $\gamma^2=1$ (called \textit{chirality})
\item  for each $a,b\in \A$, $[\rho(a), J\rho(b)J\inv]=0$
\item a self-adjoint operator $D$ on $\H$ that satisfies
\[[\,[D,\rho(a)]\,, J \rho(b) J\inv]=0,\qquad a,b\in \A\, \]
\item  the relations%
\begin{subequations}
  \begin{align}
 J^2& =\epsilon   \\
 JD & =\epsilon' DJ  \label{eq:except} \\ 
 J\gamma & =\epsilon '' \gamma J
 \end{align}
 \end{subequations}
with the signs $\epsilon,\epsilon',\epsilon''$ determined  by  $s$ according to the following table:

\end{enumerate} \centering
\begin{tabular}{ccccccccc}
  \topline
  \headcol $s$ & 0 & 1 & 2 & 3 & 4& 5 &6 &7  \\
  \midline
  \rowcol $\epsilon$ & $+$ & $+$ & $-$ & $-$ &$-$&$-$& $+ $ &$+ $ \\
 $\epsilon'$ & $+$ & $-$ & $+$ & $+$ &$+$&$-$&$+$&$+$ \\
 \rowcol  $\epsilon''$ & $+$ & + & $-$ & + &$+$& +&$-$& +\\
  \bottomlinec
\end{tabular}
\end{definition}

A \textit{fermion space} of KO-dimension $s$ is 
a collection of objects $(\A,\H,J,\gamma)$ satisfying
axioms (i) through (v) and (vii), except for 
eq. \eqref{eq:except}.

\subsection{Gamma matrices and Clifford modules}\label{sec:Clifford}
Given a \textit{signature} $(p,q)\in \Z^2_{\geq 0}$, a \textit{spinor (vector) space} $V$ is a representation $c$ of 
 the Clifford algebra\footnote{We recall that $\mathcal C\ell(p,q)$ is the tensor algebra
 of $\re^{p+q}$ 
 modulo the relation $2g(v,w)=v\otimes w+w\otimes v$ for each $u,w\in \re^{p+q}$, being $g =\diag(+,\ldots,+,-,\ldots,-) $ 
 the quadratic form with $p$ positive and $q$ negative signs.} $\mathcal C\ell(p,q)$.
 Thus, 
elements of the basis $e^a$ and $e^{\dot a}$ of
 $\re^{p,q}$,
$a=1,\ldots,p$ and $\dot a=1,\ldots, q$, 
become endomorphisms $c(e^a)=\gamma^a, c(e^{\dot a})=\gamma^{\dot a}$
 of $V$. 
  If $d=q+ p$ is even, $V$ is assumed 
 to be irreducible, whereas only 
 the eigenspaces $V^{+}, V^{-} \subset V$ of $\gamma$ are,
 if $q+ p$ is odd.   
 The size of these square matrices (the Dirac \textit{gamma matrices}) is $2^{\lfloor{p+q}\rfloor}$.

It follows from the relations of the Clifford algebra that 
 \[\gamma^{\mu}\gamma^{\nu}= \gamma^{[\mu}\gamma^{\nu]} + \frac12\{\gamma^\mu,\gamma^\nu\}= \gamma^{[\mu}\gamma^{\nu]}+g^{\mu\nu}1_V\,,\]
 which can be used to iteratively compute products of gamma matrices
in terms of $g^{\mu\nu}$ and their anti-symmetrization.
Taking their trace $\TrV$ (contained in the spectral action)
gets rid of the latter, so we 
are left with $\dim V \cdot g^{\mu_1\nu_1}g^{\mu_2\nu_2}\cdots$.
A product of an odd number of gamma matrices is traceless; 
the trace of a product of $2n$ gamma matrices can be expressed as 
a sum of over $(2n-1)!!$ products of $n$ bilinears $g^{\mu\nu}$
that will be represented diagrammatically. 

\subsection{Fuzzy geometries} \label{sec:matrix}
Section \ref{sec:matrix} is based on \cite{BarrettMatrix}.  
A fuzzy geometry can be thought of as a finite-dimensional approximation 
to a smooth geometry.
A simple matrix algebra  $M_N(\C)$ conveys information about
the resolution of a space (an inverse power of $N$, e.g. $\sim 1/\sqrt{N}$
for the fuzzy sphere $\mathbb S^4_N$ \cite{SperlingSteinacker}) where the
noncommutativity effects are no longer negligible. 
To do geometry on a matrix algebra one needs additional information,
which, in the case of fuzzy geometries, is in line with the spectral 
formalism of NCG.  

\begin{definition}[Paraphrased from \cite{BarrettMatrix}]\label{def:fuzzy}
A \textit{fuzzy geometry} of \textit{signature} 
$(p,q)\in \Z_{\geq 0}^2$ is given by 

\begin{itemize}  \setlength\itemsep{.4em}
 \itemB a simple matrix algebra $\A$ with coefficients in $\mathbb K=\re,\C,\mathbb H$; in
 the latter case, $M_{N/2}(\mathbb H)\subset  M_{N}({\C})$, otherwise $\A$ is $M_N(\re)$ or
 $M_N(\C)$ --- in this paper we take always $ \A=M_N(\C)$ 
 
 \itemB a Hermitian $\mtc C\ell (p,q)$-module $V$ with a \textit{chirality} $\gamma$.
That is a linear map $\gamma:V\to V$ satisfying $\gamma^*=\gamma $ and $  \gamma^2=1$ 

 \itemB a Hilbert space $\H = V\otimes M_N({\C})  $ with inner product
 $\langle v\otimes R, w\otimes S \rangle = (v,w) \Tr(R^* S) $ for each $R,S\in M_N(\C)$,
 being $(\hspace{-1pt}\balita\hspace{1pt},\hspace{-2pt}\balita)$ the inner product of $V$
 \itemB a left-$\A$ representation $\rho(a)(v\otimes R) = v\otimes (a R)$ on $\H$, $a\in \A$ and $v\otimes R\in \H$

\itemB three signs $\epsilon, \epsilon',\epsilon''\in \{-1,+1\}$ determined
 through $s:=q-p$ by the following table: 
\[ 
\begin{tabular}{ccccccccc}
  \topline
  \headcol $s\equiv q-p \,\,\mtr{( mod }\, 8)$ & 0 & 1 & 2 & 3 & 4& 5 &6 &7  \\
  \midline
  \rowcol $\epsilon$ & $+$ & $+$ & $-$ & $-$ &$-$&$-$& $+ $ &$+ $ \\
 $\epsilon'$ & $+$ & $-$ & $+$ & $+$ &$+$&$-$&$+$&$+$ \\
 \rowcol  $\epsilon''$ & $+$ & + & $-$ & + &$+$& +&$-$& +\\
  \bottomlinec
\end{tabular}\]

 \itemB a \textit{real structure} $J= C\otimes *$, where $*$ is complex conjugation and $C$ is an 
 anti-unitarity on $V$ satisfying $C^2=\epsilon$ and $C\gamma^\mu=\epsilon'\gamma^\mu C$ for all the gamma matrices $\mu=1,\ldots,p+q$
  
 \itemB a self-adjoint operator $D$ on $\H$
satisfying the \textit{order-one condition}
\[
[\,[ D,\rho(a)]\,, J\rho(b)J\inv]=0 \qquad \mbox{for all }a,b\in \A
\]
 
 \itemB a chirality $\Gamma=\gamma\otimes 1_\A$ for $\H$,
 where $\gamma$ is the chirality of $V$.
These signs impose on the operators the following conditions:
  \begin{align*}
 J^2 =\epsilon \, , \qquad 
 JD =\epsilon' DJ \,,  \qquad
 J\Gamma  =\epsilon '' \Gamma J\,.
 \end{align*}
\end{itemize}
For $s$ odd, $\Gamma$ can be thought of as the identity $1_{\H}$.
The number $d=p+q$ is the \textit{dimension} of the spectral triple and $s=q-p$
is its \textit{KO-dimension}.
\end{definition}
We pick gamma matrices
that satisfy
\begin{subequations}
 \label{eq:ConvGammas}
\begin{align} 
(\gamma^\mu)^2&=+1, \quad \mu=1,\ldots, p,  &&\gamma^\mu  \mbox{ Hermitian}, \\
(\gamma^\mu)^2&=-1, \quad \mu=p+1,\ldots, p+ q, && \gamma^\mu \mbox{ anti-Hermitian}\,,
\end{align}
\end{subequations}
in terms of which the chirality for $V$
is given by $\gamma=(-\ii)^{s(s-1)/2} \gamma^1\cdots \gamma^{p+q}$. 
For mixed signatures it will be convenient to separate spatial from time like indices,
and denote by lowercase Roman 
the former ($a=1,\ldots,p$)
and by dotted indices\footnote{Dotted indices are here unrelated to their usual interpretation
in the theory of spinors. Also for the Lorentzian signature,
the $0,1,2,3$ numeration (without any dots) is used.} ($\dot c=p+1,\ldots, p+q$) the latter.
The gamma matrices $\gamma^a$ are
 Hermitian matrices squaring to $+1$,
 and $\gamma^{\dot c}$'s denote here 
the anti-Hermitian matrices squaring to $-1$.
Greek indices are spacetime indices $\alpha, \beta,\mu,\nu, \ldots \in\Delta_d := \{1,2,\ldots, d\}$.

We let the gamma matrices generate $\Omega:= \langle \gamma^1,\ldots,\gamma^d \rangle_\re$ as algebra;
this splits as $
\Omega = \Omega^+\oplus \Omega ^-$, 
where $\Omega^+$ contains products of even number of gamma
matrices and $\Omega^-$  an odd number of them.

\subsection{General Dirac operator}\label{sec:GenDirac}
Using the spectral triple axioms for fuzzy geometries, 
their Dirac operators can be characterized as
self-adjoint operators of the
form \cite[Sec. 5.1]{BarrettMatrix}
\begin{align} \label{eq:Dcharac}
D (v\otimes R)= \sum _I \, 
\omega^I v \otimes (  K_I R + \epsilon ' R K_I)\,, \quad  v\in V,R\in M_N(\C)\,,\end{align}
with $\{\omega^I \}_I$ a linearly independent set and 
 $I$ an abstract index  to be clarified now. For $r\in\N_{\leq d}$, let $\Lambda_d^r$
 be the set of $r$-tuples of increasingly ordered spacetime indices $\mu_i\in \Delta_d$, i.e. 
$
\Lambda_d^r=\{(\mu_1,\ldots,\mu_{r}) \,| \,\  
\mu_i < \mu_j, \mbox{ if } i<j
\}\,.$ We let $  \Lambda_d^{\phantom -}= \cupdot_r \Lambda_d^r\,, 
$
whose odd part is denoted by $\Lambda_d^-$, 
\begin{align*}
\Lambda_d^-&=\{(\mu_1,\ldots,\mu_{r}) \,| \,\mbox{for some odd $r$, $1\leq r \leq d $ } \,\&\, 
\mu_i<\mu_j \mbox{ if } i<j
\}
\\
&=  \{1,\ldots, d\}
\cup \{ (\mu_1,\mu_2,\mu_3)\, | \,1\leq \mu_1 < \mu_2 < \mu_3 \leq d \}
\\
&\quad \cup  
 \{ (\mu_1,\mu_2,\mu_3,\mu_4,\mu_5)\, | \,1\leq \mu_1<\mu_2 \ldots < \mu_5 \leq d  \} \cup \ldots 
\end{align*}
The most general Dirac operator in dimension $d$
writes in terms of products $\Gamma^I$ of gamma matrices 
that correspond to indices  $I$ in these sets, each bearing a matrix coefficient $k_I$,
\begin{align}
D\hp{p,q}
=\begin{cases}
 \sum_{I\in \Lambda_d^-} \Gamma^I \otimes k_I&
\mbox{for $d=p+q$ even,} \\
\sum_{I\in \Lambda_d^{\phantom{-}}}\Gamma^I \otimes k_I &
\mbox{for $d=p+q$ odd.}
 \end{cases}
\end{align}
We elaborate on each of the tensor-factors, $\Gamma^I$ and $k_I$. 
First, $\Gamma^I$ is the ordered product 
of gamma matrices with all single indices 
appearing in $I$, 
\[\Gamma^I=\gam 1 \cdots  \gam r\,,\]
for $I=(\mu_1,\ldots,\mu_r) \in \Lambda_d^r$. 
This can be thought of as
each gamma matrix $\gam{}$ 
corresponding to a one-form $\dif x^\mu$ 
(in fact, via Clifford
multiplication for canonical spectral triples) and $\Lambda_d$
as the basis elements  of the exterior 
algebra.
The set $\Lambda_d=\cupdot_r\Lambda^r_d$
can thus be seen as an abstract backbone of 
the de Rham algebra $\Omega^{*}_{\mtr{dR}}=\bigoplus_r\Omega^r_{\mtr{dR}}$
and $\Lambda^r_d$ of the $r$-forms $\Omega^r_{\mtr{dR}}$.
There are $\#(\Lambda^r_d)=\binom{d}{r}$ independent 
$r$-tuple products of gamma matrices. 
We now separate the cases according to 
the parity of $s$ (or of $d$).  Second, $k_I(R)=(  K_I R + \epsilon ' R K_I)$ is an
operator on $M_N(\C)\ni R$, which needs a
dimension-dependent characterization. 

\subsection{Characterization of the Dirac operator in even dimensions}\label{sec:CharacDirac}

We constrain the discussion to even (KO-)dimension. 
In Definition \ref{def:fuzzy} the table implies $\epsilon'=1$;
on top of this, the self-adjointness of $D$ implies that
for each $I$, both $\omega^I$ and $R \mapsto (  K_I R +   R K_I^*)$
are either Hermitian or both anti-Hermitian. 
In terms of the matrices $K_I$, this condition 
reads $K_I^*=+K_I$ or
$K_I^*=-K_I$, respectively. In the first
case we write $K_I=H_I$, in the latter 
$K_I=L_I$.
One can thus split the sum in eq. \eqref{eq:Dcharac}
as
\begin{align}
D (v\otimes R)= \sum _  {I}  \omega^I v \otimes (  H_I R +   R H_I)
+
\sum_ {I}   \omega^{I} v \otimes (  L_{I} R -  R L_{I})\,.
\end{align}
Additionally, since $d=2p-s$ is even,  $\gamma \gamma^a + \gamma^a \gamma=0$. Hence
the same anti-commutation relation  $
\gamma \omega +  \omega \gamma=0 \mbox{ holds for each $\omega\in \Omega^-\,.$ }
$
This leads to the splitting
\begin{align}
D = \sum _  {I\in \Lambda_d^-}  \tau^{I} \otimes \{ H_I,  \balita\, \}
+
\sum_ {I\in \Lambda_d^-}   \alpha^I\otimes [ L_I,  \balita\,]\,,
\end{align} where
each $\alpha^I$ and $\tau^I$ is an odd product of gamma matrices,
and $\Lambda_d^-$ is the set of multi-indices of an odd number of
indices $\mu\in \{1,\ldots, d\}$. 
In summary, 
\[
(\tau^I)^* = \tau^I\in  \Omega^-\,, \qquad (\alpha^I)^*=-\alpha^I \in  \Omega^-,\qquad\, H_I^*=H_I\,,
\qquad\, L_I^*=-L_I\,.
\]

We generally treat commutators and anti-commutators 
as (noncommuting) letters $k_I=\{K_I,\balita\hspace{3pt}\}_{\pm}$, 
for each $I\in \Lambda_d$.
The sign $e_I=\pm$ determines the type of the letter 
for $k_I$, being the latter defined by the rule
\begin{align}
\text{if }
e_I=\begin{cases}
        +1 &  \text{then } K_I=H_I \qquad \text{therefore } k_I=  h_I, \\
        -1 & \text{then } K_I=L_I \qquad\, \text{therefore } k_I=  l_I,
        \end{cases}
        \label{eq:HorL}
\end{align}
so 
$k_I(R)=
K_I R + e_I R K_I $, for $R\in M_N(\C)$.
Explicitly, one has
\begin{align} \nonumber
D\hp{p,q}
&=
\sum_{\mu}^{\phantom d}  \gam{} \otimes k_\mu
+\sum_{\mu,\nu,\rho} \ga{\mu}\ga\nu\ga\rho \otimes k_{\mu\nu\rho}+\ldots
+
\sum_{\widehat{\mu\nu\rho}} \Gamma^{\widehat{\mu\nu\rho} } 
\otimes k_{\widehat{\mu\nu\rho} }
+
\sum_{\hat{\mu}} \Gamma^{\hat \mu} \otimes k_{\hat \mu}\,,
    \end{align}
which runs through 
$\sum_{\boldsymbol\mu }  \gam 1 \gam 2 \cdots  \gam {d/2} 
    \otimes k_{\mu_1\mu_2\ldots \mu_{d/2}}$
 if the $4$ divides $d$,
or through \vspace{4pt}
 \[ \sum_{\boldsymbol\mu } \Gamma^{\mu_1 \ldots \mu_ {d/2-1}}
    \otimes k_{\mu_1\mu_2\ldots \mu_{d/2-1}} 
    +    
\sum_{{\boldsymbol\mu}} \Gamma^{\mu_1 \ldots \mu_ {d/2+1}}
    \otimes k_{\mu_1\mu_2\ldots \mu_{d/2-1}} \,  \] 
if $d$ is even but not divisible by $4$. 
Hatted indices are, as usual, those excluded from $\{1,\ldots, d\}$,
\begin{align}
\widehat{\mu \nu\ldots \rho }=(1,2,\ldots \mu-1, \mu+1,\ldots,
\nu-1, \nu+1,\ldots \rho-1, \rho+1,\ldots, d)\,.
\end{align}

In order for a Dirac operator to be self-adjoint, 
$k_I$ is constrained by the parity of $r=r(I)$, being $|I|=2r-1$,
and by the number $u(I)$ of 
spatial gamma matrices in the product $\Gamma^I$. In a mixed signature setting, 
$p,q >0$, an arbitrary $I\in \Lambda_d^-$
has the form $I=(a_1\ldots,a_t,\dot c_1,\ldots \dot c_u)$ for 
$0 \leq t  \leq p 
, \,0  \leq u \leq q $,
and so the corresponding matrix satisfies 
\begin{align} \label{eq:autoadjuncion}
(\Gamma^I)^*=(-1)^{u+ \lfloor (u+t)/2 \rfloor} \Gamma^I = (-1)^{u+r-1} \Gamma^I\,.
\end{align}
The first equality is shown in detail in Appendix \ref{sec:App}.
The second is just due to
$(-1)^{\lfloor (u+t)/2 \rfloor}=(-1)^{\lfloor (2r-1)/2 \rfloor}
=(-1)^{r-1}$. This decides whether $k_I$ 
should be an `$h_I$-operator' or an `$l_I$-operator' (see eq. \ref{eq:HorL}),
which is summarized in Table \ref{tab:HorL}.  

\begin{table}
\begin{tabular}{ccc}
  \topline
  \headcol $u(I)$ & $r(I)$  & $k_I$\\
  \midline
 \rowcol  even & odd & $h_I$ \\
           odd & odd & $l_I$ \\
 \rowcol  even & even & $l_I$\\
          odd & even &$h_I$ \\
  \bottomlinec 
\end{tabular}\vspace{.5cm}
 \caption{For $I=(a_1,\ldots,a_u,\dot c_1,\ldots, \dot c_t)\in \Lambda^{2r-1}_d$
a Hermitian matrix $H_I$ or an anti-Hermitian matrix $L_I$
parametrizes 
 $k_I$ according to the shown operators
 $h_I=\acomm{H_I}$ 
 or 
 $l_I=\comm{L_I}$\label{tab:HorL}}
\end{table}
For indices running where the dimension
bounds allow, one has
\begin{align*}
 D\hp{p,q}
 & = \sum_{a=1}^p \gamma^a\otimes h_a +
 \sum_{\dot c=p+1}^{p+q} \gamma^{\dot c}\otimes l_{\dot c} \\
& +  
 \sum_{a,b,c}  \gamma^a\gamma^b\gamma^c\otimes l_{abc} +
  \sum_{a,b,\dot c}  \gamma^a\gamma^b\gamma^{\dot c}\otimes h_{ab\dot c}
 \\
  & +\sum_{a,\dot b,\dot c}  \gamma^a\gamma^{\dot b} 
  \gamma^{\dot c}\otimes l_{a\dot b\dot c}  
  +\sum_{\dot a,\dot b,\dot c}  \gamma^{\dot a}\gamma^{\dot b}
  \gamma^{\dot c}\otimes l_{a\dot b\dot c}  +\ldots \\
  & + \begin{cases}
    \sum_a \Gamma^{\hat a} \otimes h_{\hat a}+ \sum_{\dot c} \Gamma^{\hat{\udot c}}\otimes l_{\hat{\udot c}} &  \text{if $q$ and $d/2$ have same parity,} \\    
   \sum_a \Gamma^{\hat a} \otimes l_{\hat a}+ \sum_{\dot c} \Gamma^{\hat{\udot c}} \otimes h_{\hat{\udot c}}&  \text{if $q$ and $d/2$ have opposite parity}.
    \end{cases}\,
\end{align*}
The last term is a product of $d-1=p+q-1$ matrices. This expression is
again determined by 
observing that the operator $k_{\hat \mu}$
is self-adjoint if $(-1)^{u+d/2}$ 
equals $+1$ and otherwise anti-Hermitian, 
being $u$ the number of spatial gamma matrices in $\Gamma^{\hat\mu}
=\gamma^1\cdots \widehat{\ga{ \mu}}\cdots \gamma^d$. 
We proceed to give some examples.

\begin{example}[Fuzzy $d=2$ geometries]
The next operators 
appear in \cite{BarrettMatrix}:

\begin{itemize}
 \itemB \textit{Type} (0,2). Then $s=d=2$, so $\epsilon'=1$.
 The gamma matrices are anti-Hermitian
 and satisfy $(\gamma^i)^2=-1$. The Dirac operator is
 \[ D\hp{0,2}=\gamma^{\dot 1} \otimes [L_1, \balita\,] +\gamma^{\dot 2} \otimes [L_2, \balita\,]
\]

  \itemB \textit{Type}   (1,1).
  Then $d=2$, $s=0$, so $\epsilon'=1$.
 \[ D\hp{1,1}=\gamma^1 \otimes \{H, \balita\,\} +\gamma^{\dot 2} \otimes [L, \balita\,] \]

 \itemB \textit{Type} (2,0). Then $d=2$, $s=6$, so $\epsilon'=1$.
 The gamma matrices are Hermitian
 and satisfy $(\gamma^i)^2=+1$. The Dirac operator is
  \[ D\hp{2,0}=\gamma^1 \otimes \{H_1, \balita\,\} +\gamma^2 \otimes \{H_2, \balita\,\} \]

\end{itemize}

\end{example}

\begin{example}[Fuzzy $d=4$ geometries] \label{ex:4d}
For realistic models the most important $4$-fuzzy geometries
have signatures $(0,4)$ and $(1,3)$ corresponding to the
Riemannian and Lorentzian cases. We derive
 the first one in detail in order to arrive at the result
in \cite[Ex. 10]{BarrettMatrix}. The rest 
follows from considering eq. \eqref{eq:autoadjuncion}.
\begin{itemize}
       \itemB \textit{Type}  (0,4), $s=4$,  \textit{Riemannian}.
       Notice that the gamma matrices are all 
       anti-Hermitian and square to $-1$ in this case. Therefore,
       products of three gamma matrices are self-adjoint:
       $(\gamma^{\dot a}\gamma^{\dot b}\gamma^{\dot c})^*=(-)^3
       \gamma^{\dot c}\gamma^{\dot b}\gamma^{\dot a}=\gamma^{\dot a}\gamma^{\dot b}
       \gamma^{\dot c}$.
       The accompanying operators  have the form $  \{  H_{ \dot a \dot b \dot c}, \balita\,\}$ 
       for $  H_{\dot a \dot b \dot c}$ self-adjoint:
     \begin{equation} \label{eq:Riem4d}
     D\hp{0,4}= \sum_{\dot a}\gamma^{\dot a} \otimes [L_{\dot a}, \balita\,]+
     \sum_{\dot a < \dot b < \dot c}\gamma^{\dot a}\gamma^{\dot b}\gamma^{\dot c}\otimes  
     \{  H_{\dot a \dot b \dot c}, \balita\,\}
     \end{equation}

        \itemB \textit{Type}  (1,3), $s=2$, \textit{Lorentzian}.  Call $\gamma^0$
        the only gamma matrix that squares to $+1$, and denote the rest by $\gamma^{\dot c}$, ${\dot c}=1,2,3$. Then
  \begin{align} \nonumber
     D\hp{1,3} & = \gamma^0 \otimes \acomm{H}+
      \gamma^{\dot c}  \otimes   \comm{L_{\dot c}} \\
     &   + \sum_{{\dot a}<{\dot c}}\gamma^0 \gamma^{\dot a}\gamma^{\dot c} \otimes \comm{L_{{\dot a}{\dot c}}}
       +  \gamma^{\dot 1}  \gamma^{\dot 2} \gamma^{\dot 3}\otimes  \{\tilde H , \balita\,\}
       \label{eq:Lor4d}
     \end{align}
     (For the Lorentzian signature,  
     the dotted-index convention is redundant 
     with the usual $0,1,2,3$ spacetime indices;
     then we henceforward drop it).

       \itemB \textit{Type}  (4,0), $s=-4=4$, mod $8$. 
       The opposite case to `Riemannian': now all gamma matrices are Hermitian,
       square to $+1$, and triple products $\Gamma^{\hat a}$ are 
       skew-Hermitian:
       \begin{equation}
     D\hp{4,0}= \sum_a\gamma^a \otimes \acomm{H_a}+ \sum_{a<b<c} \gamma^a  \gamma^b \gamma^c\otimes  [L_{abc}, \balita\,]
     \end{equation}

   \itemB \textit{Type}  (2,2), $s=0$, \textit{two times}. Choosing the first two gamma matrices
   such that they square to +1, and $\gamma^{\dot 3}$ and $\gamma^{\dot 4}$ 
   to $-1$, one gets
  \begin{align} \nonumber
     D\hp{2,2} & = \sum_{a=1,2}\gamma^a \otimes \acomm{H_a}+
      \gamma^a  \gamma^{\dot 3} \gamma^{\dot 4}\otimes  [\tilde L_{a}, \balita\,]  \\
     &   + \sum_{\dot c=\dot 3,\dot 4}\gamma^{\dot c} \otimes \comm{L_{\dot c}}
       +  \gamma^1  \gamma^2 \gamma^{\dot c}\otimes  \{\tilde H_{\dot c}, \balita\,\} \
     \end{align}
 Here we made the notation lighter, writing $\tilde L_{a}=L_{\hat a}=L_{1\ldots \hat a \ldots 4}$
  for the $L$-matrix
 with all indices but $a$. Similarly,  $\tilde H_{\dot c}=H_{1\ldots \hat{\udot c}\ldots 4}$.

               \itemB \textit{Type}  (3,1), $s=6$.
                 \begin{align} \nonumber
     D\hp{3,1} & =  \sum_{a=1,2,3}\gamma^a \otimes \acomm{H_a} + \gamma^1\gamma^2\gamma^3 \otimes \comm{\tilde L} \\
     &   +  \gamma^{\dot 4} \otimes \comm{L}
       +  \sum_{a<c}\gamma^a  \gamma^c \gamma^{\dot 4}\otimes   \acomm{ H_{ac{\dot 4}}}
     \end{align}
\end{itemize}

\end{example}
 
Fuzzy geometries with odd $s$ allow elements of $\Omega^+$ also to parametrize
Dirac operators, 
\begin{align} \nonumber
D\hp{p,q}
&=
\sum_{\mu} \gam{} \otimes k_\mu
+
\sum_{\mu_1,\mu_2} \gam{1}\gam{2} \otimes k_{\mu_1\mu_2}
+\sum_{\mu,\nu,\rho} \ga{\mu}\ga\nu\ga\rho \otimes k_{\mu\nu\rho}+
\\ &\qquad
\ldots +
\sum_{\widehat{\mu\nu\rho}} \Gamma^{\widehat{\mu\nu\rho} } 
\otimes k_{\widehat{\mu\nu\rho} }
+
\sum_{\widehat{\mu \nu}} \Gamma^{\widehat{\mu\nu}} \otimes k_{\widehat{\mu \nu}}
+
\sum_{\hat{\mu}} \Gamma^{\hat \mu} \otimes k_{\hat \mu}\,.
\end{align}
Examples of $D$ for a $d=3$ geometry are given in \cite{BarrettMatrix} 
and are not treated here.

\subsection{Random fuzzy geometries}
Given a fermion space of fixed signature $(p,q)$,
that is to say, a list $(\A,\H, \balita\,,J,\Gamma)$ satisfying
the listed properties in Definition \ref{def:realST} ignoring those
concerning $D$,
we consider the space $\mathcal{M}\equiv \mathcal{M}(\A,\H,J,\Gamma,p,q)$
of all possible Dirac operators $D$ that make of $(\A,\H,D,J,\Gamma)$ a real even
spectral triple of signature $(p,q)\in \Z_{\geq 0}^2$. 
 \par 
The symmetries of a spectral triple are encoded in $\Aut (\A)$,
$\Out (\A)$ and $\Inn(\A)$, none of which implies
the Dirac operator. This can be compared with the classical situation,
in which fixing
the data $(\A,\H,J,\Gamma)$ can be interpreted as
imposing symmetries on the system and subsequently finding
compatible geometries, encoded in $D\in\mathcal{M}$, typically
via the extremization $\delta S(D_0)=0$ of an action functional $S(D)$
that eventually selects a unique classical solution  $D_0\in\mathcal{M}$.
The random noncommutative setting that appears in
\cite{BarrettGlaser}, on the other hand, considers
`off-shell' geometries. These can be stated as
the following matrix integral
\begin{align}\label{eq:Z_spectral}
\mathcal Z\hp{p,q}=\int_{\mathcal{M}}  \, \ee^{-S(D)} \dif D,\quad S(D) =\Tr f(D)\,,
\end{align}
being $f(x)$ an ordinary\footnote{In contrast to noncommutative polynomials mentioned below.} polynomial
of real coefficients and no constant term. 
We next compute the spectral action $\Tr f(D)$.
 
\section{Computing the spectral action}\label{sec:computeSA}

In the spectral action \eqref{eq:Z_spectral} 
the trace is taken on the Hilbert space $\H$. We do not label it
but, to avoid confusion, we label traces on other spaces:
the trace $\TrV$ is that of the 
spinor space $V$, the trace of operators 
on the matrix space $M_N(\C)$ is 
denoted by $\TrM$, and $\TrN$ stands for the trace on $\C^N$.\par
A homogeneous element spanning the  
Dirac operator $D=\sum_I\omega_I \otimes k_I$ contains a first factor $\omega_I$, consisting 
of products of gamma matrices, and a second factor $k_I$
determined by a matrix that is either 
Hermitian or anti-Hermitian \cite{BarrettMatrix}.
We describe each factor and then give a
general formula to compute the spectral
action.

\subsection{Traces of gamma matrices}\label{sec:gammas}

We now rewrite the quantity
\begin{align}
\langle  {\mu_1}\ldots {\mu_{2n}} \rangle :=\frac{1}{\dim V} 
 \Tr_V(\gamma^{\mu_1}\ldots \gamma^{\mu_{2n}})
\end{align}
in terms of \textit{chord diagrams} of $2n$ points\footnote{In a more involved context, these are called `chord diagrams with one backbone' \cite{GenChordDiags}.}, to wit $n$ (disjoint) pairings 
among  $2n$ cyclically ordered points. These are 
typically placed on a circle in whose interior
the pairings are represented by chords that might cross. 
 One finds
  \begin{align} \label{eq:Chords}
\langle  {\mu_1}\ldots {\mu_{2n}}\rangle = 
\sum_{\substack{ 2n\text{\scriptsize -pt chord} 
\\ \text{\scriptsize diagrams }  \chi  } }
(-1)^{\#\{\text{crossings of chords in }\chi\}} 
\prod_{\substack{i,j =1\\ i\sim_\chi j} }^{2n} g^{\mu_{i}\mu_{j}}
\end{align}
where $\sim_\chi$ means that the point $i$ is joined 
with $j$ in the chord diagram $\chi$.
We denote the total number of crossings of chords by 
$\mtr{cr(}\chi) $. We count only 
simple crossings; for instance, the 
sign of the `pizza-cut' 8-point chord diagram with 
longest chords in the upper left corner of Figure 
\ref{Fig:Eight} is $(-1)^6$.

For a mixed signature, $q, p >0$, any non-vanishing
$\langle  {\mu_1}\ldots {\mu_{2n}} \rangle $
has the form (up to a reordering sign) $\langle a_1 \ldots a_{2r} \dot c_1 \ldots \dot c_{2u} \rangle $
with $r+u=n$. Since $g^{a\dot c}$ vanishes,
any chord diagram $\chi$ in the sum 
of eq. \eqref{eq:Chords}
splits into a pair $(\sigma,\rho)$ of smaller
chord diagrams,
of $2r$ and $2u$ points, whose 
chords do not cross (see Fig. \ref{Fig:Cutting}),
so $\mtr{cr(}\chi)=\mtr{cr(}\sigma)+\mtr{cr(}\rho)$. Therefore
\begin{align} \nonumber
\langle a_1 \ldots a_{2r} \dot c_1 \ldots \dot c_{2u} \rangle
&= 
\sum_{\substack{ 2n\text{\scriptsize -pt chord} 
\\ \text{\scriptsize diagrams }  \chi  } }
(-1)^{\mtr{cr}( \chi)} 
\prod_{\substack{i,j \\ i\sim_\chi j} } g^{a_{i}a_{j}}\times\prod_{\substack{u,v \\ u\sim_\chi v} } g^{\dot c_{u}\dot c_{v}}  \\
&=\sum_{\substack{ (2r,2u)\text{\scriptsize -pt chord} 
\\ \text{\scriptsize diagrams }  (\rho,\sigma) } }
(-1)^{\mtr{cr}(  \sigma)  } 
\prod_{\substack{i,j \\ i\sim_\rho j} } g^{a_{i}a_{j}}
\times \nonumber
(-1)^{\mtr{cr}(  \rho ) } 
\prod_{\substack{u,v \\ u\sim_\sigma v} } g^{\dot c_{u}\dot c_{v}}
\\
& = \langle a_1 \ldots a_{2r} \rangle\langle  \dot c_1 \ldots \dot c_{2u} \rangle
\label{eq:ChordsSplit}\,.
\end{align}
For the metric $g^{\mu\nu}=\diag(+,\ldots,+,-,\ldots,-)$ the two factors are 
\begin{subequations}
 \begin{align}
 \langle a_1 \ldots a_{2r} \rangle & = 
 \sum_{\substack{2r\text{\scriptsize -pt chord} 
\\ \text{\scriptsize diagrams }  \rho } }
(-1)^{\mtr{cr}(  \sigma)  } 
\prod_{\substack{i,j \\ i\sim_\rho j} } \delta^{a_{i}a_{j}} \,,
 \\
 \langle  \dot c_1 \ldots \dot c_{2u} \rangle& = (-1)^u  
 \sum_{\substack{ 2u\text{\scriptsize -pt chord} 
\\ \text{\scriptsize diagrams } \sigma  } }
 (-1)^{\mtr{cr}(  \rho ) } 
\prod_{\substack{w,v \\ w\sim_\sigma v} } \delta^{\dot c_{w}\dot c_{v}} \,.
\end{align}
\label{eq:WithDeltas}
\end{subequations}

\begin{figure}
\includegraphics[width=1\textwidth]{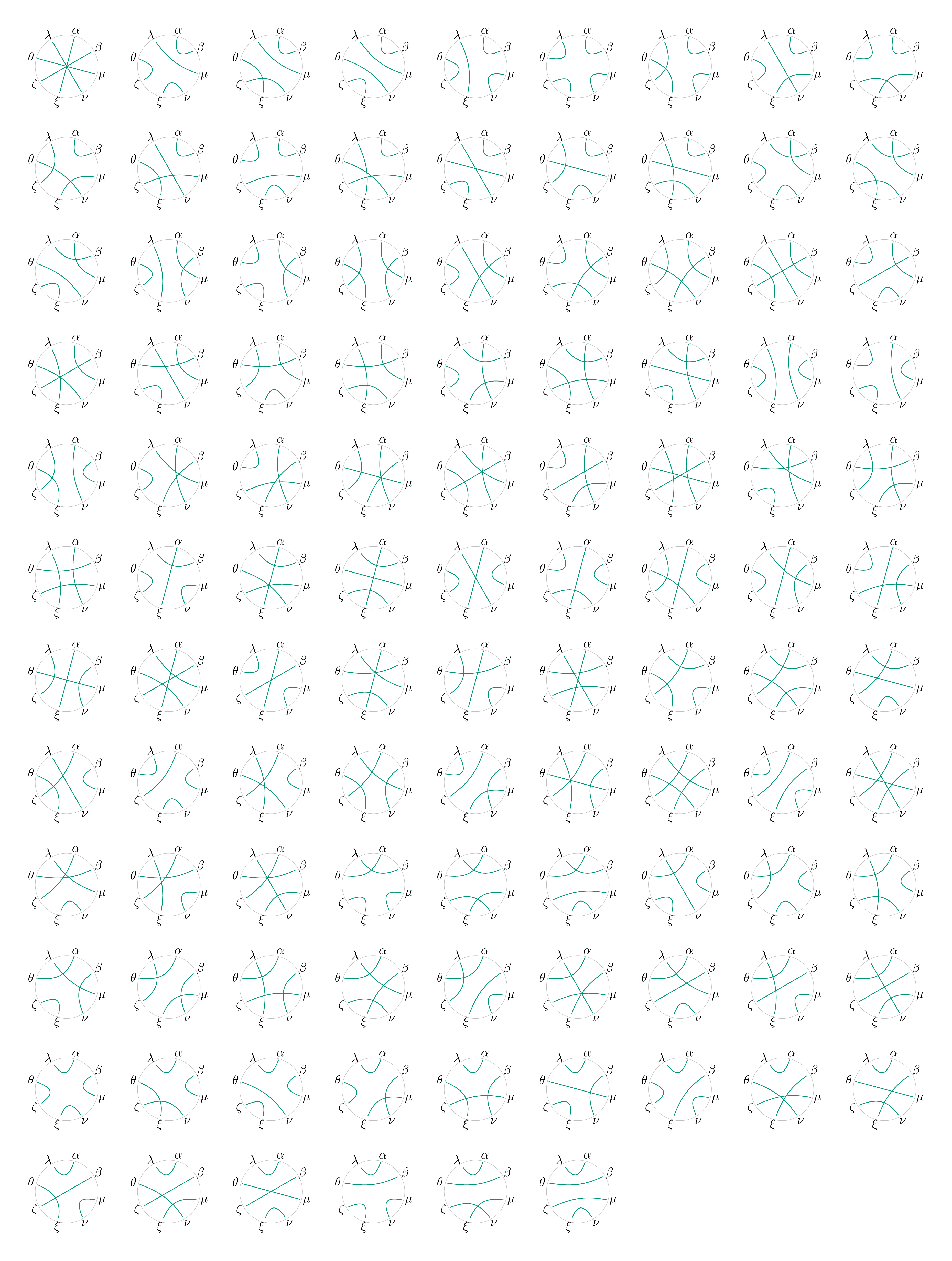}
\caption{\label{Fig:Eight} The $7!!=105$ chord diagrams with eight 
points. These assist to compute $\Tr_V(\gamma^\alpha 
\gamma^\beta\cdots \gamma^\theta\gamma^\lambda)$
in any dimension with diagonal metric of any signature. The sign of a diagram $\chi$
is $(-1)^{\#\{\text{simple crossings of }\chi\}}$.
Thus, the 'pizza-cut' diagram in the upper left
corner that appears as a summand in the normalized trace
$\langle \alpha\beta \mu\nu\xi\zeta \theta\lambda\rangle$ 
evaluates to $(-1)^{1+2+3}  g^{\xi\alpha} 
g^{\zeta\beta}
g^{\theta\mu}
g^{\lambda\nu}   $, where $\langle \cdots \rangle=(1/\dim V)\Tr_V(\cdots)$
} 

\end{figure}

\begin{figure}
\includegraphics[width=10cm]{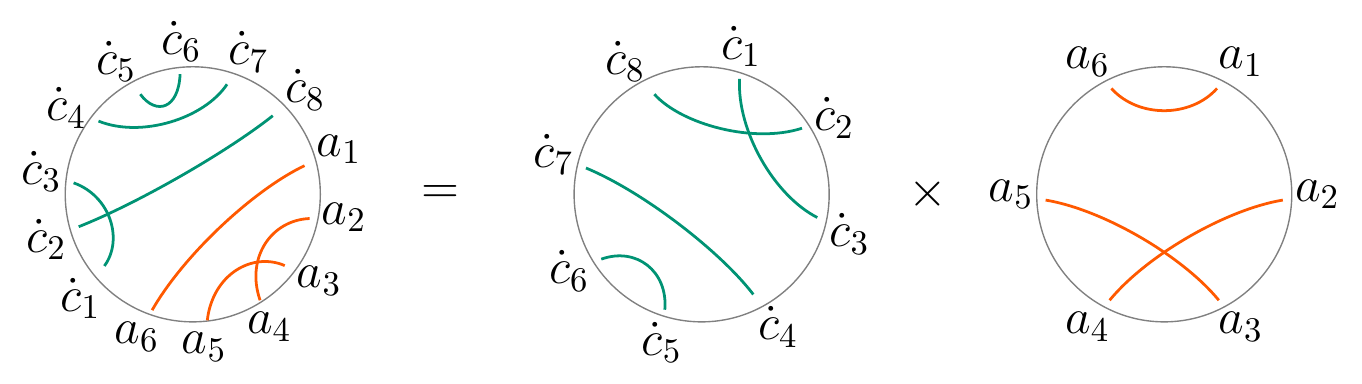}
\caption{Splitting of a chord diagram for indices 
of a mixed signature. 
One of the diagrams 
appearing in the computation of
$\langle a_1\cdots a_6 \dot c_1\cdots \dot c_8 \rangle$
14 points, all of which split into two (of 8 and 6 points). 
The equality of diagrams means 
equality of the product of the bilinears $g^{a_ia_j}$
and $g^{\dot c_l \dot c_k}$ determined by the depicted chords
and the signs for simple crossing
\label{Fig:Cutting} 
}
\end{figure}
It will be convenient to denote by $\mtr{CD}_{2n}$
the \textit{set of $2n$-point chord diagrams} and to associate a tensor $\chi^{\mu_1 \dots \mu_{2n} }$ with $\chi\in \CD{2n} $ 
and an index set 
$\mu_1,\ldots, \mu_{2n}\in \Delta_d$:
\begin{align}
\label{eq:tensorchi}
\chi^{\mu_1 \dots \mu_{2n} }
=
(-1)^{\#\{\text{crossings of chords in }\chi\}} 
\prod_{\substack{i,j =1\\ i\sim_\chi j} }^{2n} g^{\mu_{i}\mu_{j}} \,\,.
\end{align}
This $\chi$-tensor is a
version of the chord diagram $\chi$ whose $i$-th point 
is decorated with the spacetime index $\mu_i$; thus $\chi^{\mu_1 \dots \mu_{2n} }$ depends on the
dimension, although it is not
explicitly so denoted. 
All known identities for traces of gamma 
matrices can be stated in terms of these tensors,
for instance $\Tr_{\C^4}(\gamma^{\mu_1}\gamma^{\mu_2}\gamma^{\mu_3}\gamma^{\mu_4}) =  4 ( g^{\mu_1\mu_2} g^{\mu_3\mu_4}
-g^{\mu_1\mu_3} g^{\mu_2\mu_4}
+g^{\mu_1\mu_4} g^{\mu_2\mu_3}
)$ in four dimensions: If $\theta,\xi,\zeta$ denote the
three $4$-point chord diagrams,
one can rewrite in terms of their corresponding tensors 
\begin{align} \label{eq:chitensors} 
\raisetag{4.5\normalbaselineskip}
\raisebox{-.43\height}{\includegraphics[height=2.1cm]{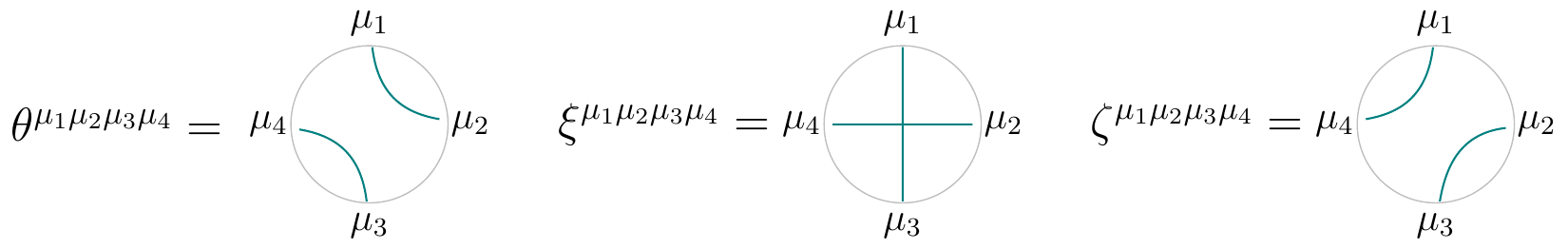} } 
\end{align}
the aforementioned trace identity as   
\[\Tr_V(\gamma^{\mu_1}\gamma^{\mu_2}\gamma^{\mu_3}\gamma^{\mu_4}) = \dim V ( \theta^{\mu_1\mu_2\mu_3\mu_4}
+\xi^{\mu_1\mu_2\mu_3\mu_4}
+\zeta^{\mu_1\mu_2\mu_3\mu_4}
)\,.\] 
For small $n$, this seems to be a 
heavy notation, which however will pay off 
for higher values (the double factorial growth 
$\#\CD{2n} = (2n-1)!!$ notwithstanding).

\subsection{Traces of random matrices}
The aim of this subsection is to compute traces of words of the form
$\TrM(k_{I_1}\cdots k_{I_{2t}})$ 
using the isomorphism $M_N(\C)=\mathbf N \otimes \bar{\mathbf N}$ (being
 $\mathbf N$ the fundamental representation) 
at the level of the operators. By \cite{BarrettGlaser}, 
\begin{align*}
k_I= 
K_I\otimes 1_N + e_I \cdot (1_N\otimes K^T_I)\,,\quad e_I=\pm\,.
\end{align*}
The sign $e_I$ is determined by Table \ref{tab:HorL}.

\begin{prop} \label{thm:RandomPart}
For any $r\in \N$
\begin{align} \label{eq:RandomPart}
\TrM(k_{I_1}\cdots k_{I_{r}})
=\sum_{\Upsilon\in \mathscr P_{r}} \mathrm{sgn}(I_\Upsilon) \cdot \TrN\big( K_{I_{\Upsilon^{\mtr c}}}\big) \cdot \TrN \big[ (K^T)_{I_\Upsilon} \big]\,,
\end{align}
where 
\begin{itemize}
 \itemB $\TrN$ and $\TrM$  are  the traces on $\mtr{End}(\mathbf N)$ and $\mtr{End}(M_N(\C)) $, 
 respectively

 \itemB $\mathscr P_{r}$ is the power 
set $2^{\{1,\ldots,r\}}$ of $\{1,\ldots,r\}$,
and  $\Upsilon^{\mtr c}=\{1,\ldots,r\}\setminus \Upsilon$

\itemB $\mathrm{sgn}(I_\Upsilon)$ is $(-1)^{\#\{\text{commutators appearing in all the $k_{I_j}$ with $j\in\Upsilon$}\}}$, 
that is   \[\mathrm{sgn}(I_\Upsilon)=\big(\prod_{i\in \Upsilon} e_{I_i}\big) \in \{-1,+1\}
           \,
          \]

\itemB and, finally, the cyclic order  $(\ldots \to r \to 1\to 2 \to 3 \to\ldots)$
on the set $\{1,\ldots,r\}$, which
can be read off from the trace in the LHS 
of eq. \eqref{eq:RandomPart},
induces a cyclic order on a given subset $\Xi=\{b_1,\ldots,b_{\xi}\} \in \mathscr P_{r}$.
Respecting this order, define
\vspace{.152cm}
\begin{itemize}
\itemW 
$K_{I_\Xi}=K_{I_{b_1}}K_{I_{b_2}}\cdots K_{I_{b_\xi}}$ and 
\itemW $(K^T)_{I_\Xi}=K^T_{I_{b_1}}K^T_{I_{b_2}}\cdots K^T_{I_{b_\xi}}=
(K_{I_{b_\xi}}\cdots K_{I_{b_2}} K_{I_{b_1}})^T$\,.

\end{itemize}

\end{itemize}
\end{prop}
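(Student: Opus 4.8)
The plan is to reduce everything to the single operator identity
\[
k_{I_j}= K_{I_j}\otimes 1_N + e_{I_j}\,(1_N\otimes K_{I_j}^T),
\]
recorded just above from \cite{BarrettGlaser} under the isomorphism $M_N(\C)=\mathbf N\otimes\bar{\mathbf N}$, and then to expand the product and take the trace slot by slot. First I would expand $k_{I_1}\cdots k_{I_r}=\prod_{j=1}^r\big(K_{I_j}\otimes 1_N + e_{I_j}(1_N\otimes K_{I_j}^T)\big)$ by the distributive law into $2^r$ summands, each obtained by selecting, for every factor $j$, either the \emph{left} term $K_{I_j}\otimes 1_N$ or the \emph{right} term $e_{I_j}(1_N\otimes K_{I_j}^T)$. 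Indexing a summand by the subset $\Upsilon\in\mathscr P_r$ of those $j$ for which the right term was chosen (so that $\Upsilon^{\mtr c}$ collects the left choices) organizes the sum exactly over the power set appearing in the claim.

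The key step is to evaluate a single $\Upsilon$-summand. Using repeatedly $(A\otimes B)(C\otimes D)=AC\otimes BD$, the string of chosen factors collapses to one simple tensor: the first slot receives the $K_{I_j}$ with $j\notin\Upsilon$ (and $1_N$ otherwise), the second slot receives the $K_{I_j}^T$ with $j\in\Upsilon$ (and $1_N$ otherwise), each product taken in increasing order of $j$—which is exactly a representative of the cyclic order read off from the trace. The crucial point to verify here is that this multiplication rule preserves the relative order of the $K$'s within each slot, so no factors living in a common slot are ever transposed and no extra sign is produced (the tensor product is ordinary, not super). Thus the $\Upsilon$-summand equals $\big(\prod_{j\in\Upsilon}e_{I_j}\big)\,K_{I_{\Upsilon^{\mtr c}}}\otimes(K^T)_{I_\Upsilon}$, matching the two ordered words $K_{I_{\Upsilon^{\mtr c}}}$ and $(K^T)_{I_\Upsilon}$ and the sign $\mathrm{sgn}(I_\Upsilon)=\prod_{i\in\Upsilon}e_{I_i}$ of the statement.

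It then remains to apply $\TrM$. Under $M_N(\C)=\mathbf N\otimes\bar{\mathbf N}$ one has $\mtr{End}(M_N(\C))=\mtr{End}(\mathbf N)\otimes\mtr{End}(\bar{\mathbf N})$, so the trace is multiplicative on simple tensors, $\TrM(A\otimes B)=\TrN(A)\cdot\TrN(B)$, the trace on the conjugate slot being again the ordinary $N\times N$ trace $\TrN$. Applying this termwise to the expansion and pulling the scalar $\mathrm{sgn}(I_\Upsilon)$ out of the trace yields $\sum_{\Upsilon\in\mathscr P_r}\mathrm{sgn}(I_\Upsilon)\,\TrN\big(K_{I_{\Upsilon^{\mtr c}}}\big)\cdot\TrN\big[(K^T)_{I_\Upsilon}\big]$, which is precisely \eqref{eq:RandomPart}.

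Conceptually the statement is immediate once the tensor identity for $k_I$ is in hand; the only genuine work—and the single place an error could creep in—is the bookkeeping in the middle step, namely keeping the two words $K_{I_{\Upsilon^{\mtr c}}}$ and $(K^T)_{I_\Upsilon}$ in their correct cyclic order and confirming that the accumulated product of the $e_{I_j}$ is exactly $\prod_{i\in\Upsilon}e_{I_i}$ rather than acquiring a spurious sign from the reshuffling of the commuting tensor factors. I would therefore spell out this middle step carefully, perhaps testing it against a small case such as $r=2$ to confirm that both the ordering and the sign land as claimed.
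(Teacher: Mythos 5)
Your proposal is correct and follows essentially the same route as the paper: the paper establishes the identity $k_{I_1}\cdots k_{I_r}=\sum_{\Upsilon}\big(\prod_{i\in\Upsilon}e_{I_i}\big)K_{I_{\Upsilon^{\mtr c}}}\otimes (K^T)_{I_\Upsilon}$ by induction on $r$, which is just a formal packaging of the $2^r$-term distributive expansion you carry out directly, and then applies $\Tr_{\mathbf N\otimes\bar{\mathbf N}}(A\otimes B)=\TrN(A)\TrN(B)$ exactly as you do. Your explicit attention to the preservation of the ordering within each tensor slot is the right point to check and is handled correctly.
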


\begin{proof}  By induction on the number $r-1$ of products,
we prove first that 
\[k_{I_1}\cdots k_{I_r}=
\sum_{\Upsilon\in \mathscr P_{r}} \prod_{i\in \Upsilon} \mtr{sgn}(I_{\Upsilon})   
K_{I_{\Upsilon^{\mtr c}}} \otimes (K^T)_{I_\Upsilon} \,.
\]
The statement holds for $r=2$, by direct computation; we now prove that 
the statement being true for $r$
 implies its veracity for $r+1$. 
 In the first line of the RHS of the 
 next equation we use the assumption and then 
 directly compute:
 \allowdisplaybreaks[3]
\begin{align*}
 (k_{I_1}\cdots k_{I_r})k_{I_{r+1}} & =
\prod_{w=1}^r \big[
 K_{I_{w}}\otimes 1_N + e_{I_{w}} \cdot (1_N\otimes K^T)_{I_{w}}\big]
\\
& \qquad \cdot 
 \big(
K_{I_{r+1}}\otimes 1_N + e_{I_{r+1}} \cdot (1_N\otimes K^T)_{I_{r+1}}\big)
 \,
\\
& = \bigg(
\sum_{\Upsilon\in \mathscr P_{r}} \Big(\prod_{i\in \Upsilon} e_{I_i}  \Big) 
K_{I_{\Upsilon^{\mtr c}}} \otimes (K^T)_{I_\Upsilon} \bigg)
\\&
\qquad \cdot 
 \big(
K_{I_{r+1}}\otimes 1_N + e_{I_{r+1}} \cdot (1_N\otimes K^T)_{I_{r+1}}\big)
\,\\
&=
\sum_{\Upsilon\in \mathscr P_{r}} \Big(\prod_{i\in \Upsilon} e_{I_i}\Big)
K_{I_{\Upsilon^{\mtr c}}}K_{I_{r+1}} \otimes (K^T)_{I_\Upsilon}  
\\
& \qquad +  
\sum_{\Upsilon\in \mathscr P_{r}}\Big( \prod_{i\in \Upsilon} e_{I_i} \Big)e_{I_{r+1}} 
K_{I_{\Upsilon^{\mtr c}}} \otimes (K^T)_{I_\Upsilon} K^T _{I_{r+1}}
  \big)\, \\
&=
\sum_{\Theta\in \mathscr P_{r+1}}\Big( \prod_{i\in \Theta} e_{I_i}\Big)
  \cdot 
K_{I_{\Theta ^{\mtr c}}} \otimes (K^T)_{I_\Theta}\,.
\end{align*}%
 \allowdisplaybreaks[0]%
To the last equality one arrives by considering that  
any set  $\Theta \in \mathscr P_{1+r}$ 
either contains $r+1$ (thus $\Theta =\Upsilon \cup \{r+1\}$  
for some  $\Upsilon   \in \mathscr P_{r}$) or does not ($\Theta =\Upsilon\in \mathscr P_{r} $). 
These two sets are listed in the sum after the third 
equal sign (concretely,
the second term and the first one, respectively). 
Then, it only remains to take traces
\begin{align*}
\TrM(k_{I_1}\cdots k_{I_{r}})  
& =\sum_{\Upsilon\in \mathscr P_{r}} \mathrm{sgn}(I_\Upsilon) \cdot
\Tr_{\mathbf N\otimes \bar{\mathbf N}}  \big(K_{I_{\Upsilon^{\mtr c}}} \otimes (K^T)_{I_\Upsilon} \big) \\ 
& = \sum_{\Upsilon\in \mathscr P_{r}}  \mathrm{sgn}(I_\Upsilon) \cdot
\TrN \big[ K_{I_{\Upsilon^{\mtr c}}} \big] \cdot \TrN \big[ (K^T)_{I_\Upsilon} \big]\,. \qedhere \end{align*} 
\end{proof}

\subsection{The general structure of $\Tr D^m$}

From the analysis of the gamma matrices one infers
that for a polynomial $f(x)=\sum f_t x^t$ 
the spectral action selects only the 
even coefficients $\Tr f(D) =\sum f_{2t} \Tr(D^{2t})$.
In order to compute the spectral action of 
any matrix geometry we only need to 
know the traces of the even powers, which
we now proceed to compute.

\begin{prop} \label{thm:EvenPowers}
Given a collection of multi-indices $I_i\in \Lambda_d^-$, let $2n$
denote the total of indices, $2n=2n(I_1,\ldots,I_{2t}):=|I_1|+\ldots + |I_{2t}|$. 
The even powers of the Dirac operator satisfy
\begin{align} \label{eq:EvenPowers}
 \frac1{\dim V}  \Tr(D^{2t}) &=
\sum_{ I_1,\ldots, I_{2t}\in\Lambda_d} \bigg\{ 
\sum_{  \chi \in \CD{2n}  }   \chi^{I_1\ldots I_{2t}}\\ \nonumber
& \hspace{2.2cm} \times
\Big[ \sum_{\Upsilon\in \mathscr P_{2t}} \mathrm{sgn}(I_\Upsilon) \cdot
\TrN (K_{I_{\Upsilon^{\mtr c}}} ) \cdot  \TrN ((K^T)_{I_\Upsilon}) \Big] \bigg\}\,,
\end{align}
in whose terms the spectral 
action $S(D)=   \Tr f(D)=
 \sum_t f_{2t} \Tr(D^{2t})$
can be completely evaluated. \par
\end{prop}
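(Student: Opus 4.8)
The plan is to reduce the statement to the two results already in place: the chord-diagram formula for traces of gamma matrices from Section~\ref{sec:gammas} and Proposition~\ref{thm:RandomPart} for the matrix factor. The single structural fact that makes this work is that $D$ is a sum of simple tensors on $\H=V\otimes M_N(\C)$, so that traces factorize across the tensor product.

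First I would expand the power. Writing $D=\sum_I \omega^I\otimes k_I$ with $\omega^I=\Gamma^I\in\mathrm{End}(V)$ an ordered product of gamma matrices and $k_I\in\mathrm{End}(M_N(\C))$, and using $(\omega^I\otimes k_I)(\omega^J\otimes k_J)=\omega^I\omega^J\otimes k_Ik_J$ together with multilinearity, I obtain
\[
D^{2t}=\sum_{I_1,\ldots,I_{2t}}(\omega^{I_1}\cdots\omega^{I_{2t}})\otimes(k_{I_1}\cdots k_{I_{2t}})\,,
\]
the sum running over the multi-indices that actually parametrize $D$ (i.e.\ $\Lambda_d^-$ in the even-dimensional case). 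Since the trace on $\H$ factorizes over the tensor product, $\Tr(A\otimes B)=\Tr_V(A)\cdot\TrM(B)$, applying it term by term gives
\[
\Tr(D^{2t})=\sum_{I_1,\ldots,I_{2t}}\Tr_V(\omega^{I_1}\cdots\omega^{I_{2t}})\cdot\TrM(k_{I_1}\cdots k_{I_{2t}})\,.
\]

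Next I would evaluate the two factors separately. For the spinorial factor, the product $\omega^{I_1}\cdots\omega^{I_{2t}}$ is an ordered string of exactly $2n=|I_1|+\ldots+|I_{2t}|$ gamma matrices; as each $|I_i|$ is odd and there are $2t$ of them, $2n$ is even, so the trace need not vanish. Reading off the $2n$ individual spacetime indices from the concatenated multi-index $(I_1,\ldots,I_{2t})$ and applying the chord-diagram identity \eqref{eq:Chords}, in the tensor notation \eqref{eq:tensorchi}, yields $\tfrac{1}{\dim V}\Tr_V(\omega^{I_1}\cdots\omega^{I_{2t}})=\sum_{\chi\in\CD{2n}}\chi^{I_1\ldots I_{2t}}$. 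For the matrix factor, I would invoke Proposition~\ref{thm:RandomPart} with $r=2t$ to rewrite $\TrM(k_{I_1}\cdots k_{I_{2t}})$ as the sum over the power set $\mathscr P_{2t}$, carrying the signs $\mathrm{sgn}(I_\Upsilon)$ and the two $\TrN$ factors. Substituting both evaluations back, dividing by $\dim V$, and collecting the sums over $\chi$ and $\Upsilon$ produces \eqref{eq:EvenPowers}.

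The argument is essentially bookkeeping once the two earlier results are available; the only points requiring care are the correct identification of the concatenated index string feeding the $\chi$-tensor, and the parity observation guaranteeing that $2n$ is even (dually, that $\Tr(D^{\mathrm{odd}})=0$, which is precisely what lets the spectral action $\Tr f(D)=\sum_t f_{2t}\Tr(D^{2t})$ retain only even powers). I expect no genuine obstacle, since the factorization of the trace over $V\otimes M_N(\C)$ holds verbatim on the simple-tensor summands of $D^{2t}$.
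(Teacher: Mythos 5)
Your proposal is correct and follows essentially the same route as the paper: expand $D^{2t}$ over simple tensors, factor the trace over $V\otimes M_N(\C)$, evaluate the spinorial factor with the chord-diagram identity \eqref{eq:Chords} in the notation \eqref{eq:tensorchi}, and the matrix factor with Proposition \ref{thm:RandomPart}. The added parity remark (each $|I_i|$ odd, $2t$ factors, hence $2n$ even) is a harmless and correct supplement.
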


\begin{proof}
 For $t\in \N$,
  \begin{align} \label{eq:D2t}
 \frac{1}{\dim{V}}\Tr(D^{2t})&= \frac{1}{\dim{V}} \Tr\bigg[\big(\sum_{I\in\Lambda_d} 
 \Gamma^I \otimes k_I \big)^{2t}\bigg] \\
 & =  \sum_{I_1,\ldots I_{2t}\in\Lambda} \nonumber
   \frac{1}{\dim{V}}\Tr_V(\Gamma^{I_1}\cdots \Gamma^{I_{2t}}) \TrN(k_{I_1} \cdots k_{I_{2t}} \big) \\
   &= \sum_{I_1,\ldots I_{2t}\in\Lambda}   \langle I_1 \ldots I_{2t} \rangle \TrM(k_{I_1} \cdots k_{I_{2t}} ) \,.\nonumber
 \end{align}
 One uses then eq. \eqref{eq:Chords} and Proposition \ref{thm:RandomPart}
 with the notation of eq. \eqref{eq:tensorchi}.
\end{proof}

Notice that since the indices $\mu_i$ 
of a multi-index $I=(\mu_1\ldots \mu_{|I|})\in \Lambda_d$ are 
pairwise different, the traces of the gamma matrices
greatly simplify. This also ensures that 
there are no contractions 
between indices of the same $k$-operator, say $g^{\mu\nu}  
k_{\mu \nu\ldots }$ ($k$'s 
with repeated indices do not exist).\par

In even dimension $d$, the Dirac operator
is spanned by the number $\kappa(d)$ of independent
odd products of gamma matrices. This equals 
$
\kappa(d)= \# (\Lambda^-_d)  =
\binom{d}{1}+
\binom{d}{3} 
+\ldots + \binom{d}{d-1}$
which can be rearranged (using Pascal's identity) as 
$
\kappa(d)= 
\binom{d-1}{0}+\binom{d-1}{1}
+\ldots + \binom{d-1}{d-2}+\binom{d-1}{d-1}=2^{d-1}$.
The Dirac operator has then as many `matrix coefficients' 
and is therefore parametrized by (what will turn out to be a subspace of)
$ M_{N}(\C)^{\oplus \kappa({d})}$. In this manner,
the `random spectral action' \eqref{eq:Z_spectral}
becomes a $\kappa(d)$-tuple matrix model.

\begin{definition}
Given integers $t,n \in \N$ 
(interpreted as in Prop. \ref{thm:EvenPowers}) and a chord diagram $\chi\in \CD{2n}$,
its \textit{action (functional)} 
$\mathfrak a_n(\chi) $
is a $\C$-valued functional on the matrix space
$ M_{N}(\C)^{\oplus \kappa({d})}  $ 
defined by 
\begin{align}\label{eq:afrak}
\mathfrak a_n(\chi)[\boldsymbol K]& = \sum_{\substack{I_1,\ldots,I_{2t}\in \Lambda_d^- \\[2pt]
2n= \sum_i |I_i|}} 
\chi^{I_1\ldots I_{2t}} \Big[ \sum_{\Upsilon\in \mathscr P_{2t}} \mathrm{sgn}(I_\Upsilon) \cdot
\TrN (K_{I_{\Upsilon^{\mtr c}}} ) \cdot  \TrN ((K^T)_{I_\Upsilon}) \Big] \raisetag{1.2\normalbaselineskip}
\end{align}
for $\boldsymbol K=\{K_{I_i} \in M_N(\C) \mid I_i\in \Lambda_d^-\} \in M_{N}(\C)^{\oplus \kappa({d})}$. 
We often shall omit the dependence 
on the matrices and write only $\mathfrak a_n(\chi)$.
We define the \textit{bi-trace functional} as 
a sum over the non-trivial subsets $\Upsilon$
in eq. \eqref{eq:afrak}
\begin{align}\raisetag{1.2\normalbaselineskip}
\phantom{N\cdot }\mathfrak b_n(\chi)[\boldsymbol K]& = \sum _{\substack{I_1,\ldots,I_{2t}\in \Lambda_d^- \\[2pt]
2n= \sum_i |I_i|}} 
\chi^{I_1\ldots I_{2t}} \Big[ \sum_{\substack{\Upsilon\in \mathscr P_{2t}\\ 
\Upsilon,\Upsilon^ c \neq \emptyset }} \mathrm{sgn}(I_\Upsilon) \cdot
\TrN (K_{I_{\Upsilon^{\mtr c}}} ) \cdot  \TrN ((K^T)_{I_\Upsilon}) \Big] 
\end{align}
and the \textit{single trace functional}
$\mathfrak{s}_n(\chi)$ via $ \mathfrak{a}_n(\chi)=N\cdot \mathfrak s_n(\chi)+ \mathfrak b_n(\chi)$ .
The factor $N$ ensures that  
$\mathfrak{s}_n$ does not depend on $N$ (cf. Sec. \ref{sec:leading}).
\end{definition}
The restriction $2n= \sum_i |I_i|$ allows one 
to exchange the sums over the multi-indices
$I$ and the chord diagrams in eq. \eqref{eq:EvenPowers}. 
Then one can restate Proposition \ref{thm:EvenPowers} as
$(1/\dim V)\Tr(D^{2t})=N\mathcal S_{2t}+\mathcal{B}_{2t}$, where
 \begin{align} \label{eq:Sdost}
\mathcal S_{2t}&= \sum_{n=t}^{t\cdot(d-1)}  
 \sum_{\chi \in \CD{2n}} \mathfrak s_{n}(\chi) \,, \\
 \mathcal{B}_{2t}&= \sum_{n=t}^{t\cdot(d-1)}\sum_{\chi \in \CD{2n}}\mathfrak b_{n}(\chi) \,. \label{eq:Bdost}
 \end{align}
The parameter $n$ lists all the 
numbers of points $2n=2t,2(t+1),\ldots,2t\cdot(d-1)$ that chord diagrams contributing to 
$\Tr(D^{2t})$ can have in dimension $d$.  
\par

In view of  eqs. 
\eqref{eq:Sdost} and \eqref{eq:Bdost}, all boils down to computing the 
 single trace $\mathfrak s_n(\chi)$ and multiple trace
 part $\mathfrak b_n(\chi)$ of chord diagrams.
 We begin with the former.

\subsection{Single trace matrix model in the spectral action---manifest $\mathcal O(N)$}
\label{sec:leading}

For geometries with even KO-dimension $s=q-p$, the spectral
action's \textit{manifest} leading order in $N$  can be found from 
last proposition (see Sec.  
\ref{sec:freeP}). 
Aiming at their large-$N$ limit we state 
the following

\begin{cor}
\label{thm:largeN}
Let $d=q+p$ (thus $s$) be even.
The spectral action \eqref{eq:Z_spectral}
for the fuzzy $(p,q)$-geometry with 
Dirac operator $ D=D\hp{p,q}$
satisfies, for any polynomial $f(x)=\sum_{1 \leq r\leq m} f_r x^r$, the following:
\begin{align}
 \frac1{\dim V}\Tr f(D)
 &=  \frac1{\dim V}\sum_{1 \leq 2t\leq m}  f_{2t}\Tr\big([D\hp{p,q}]^{2t}\big) =N \sum_{1 \leq 2t \leq m} f_{2t} \mathcal S_{2t} + \mathcal{B} \nonumber
\end{align}
where $2n(I_1,\ldots,I_{2t})=\sum_i |I_i|$. 
Here $\mathcal{B}$ stands for products of two traces,
whose coefficients are all independent of $N$,
with  $\mathcal S_{2t}$ given by eq. \eqref{eq:Sdost}
and for $\chi\in \CD{2n}$, 
\begin{align}\label{eq:sgeneral}
\mathfrak s_n(\chi)
& = \sum_{\substack{I_1,\ldots,I_{2t}\in \Lambda_d^- \\[2pt]
2n= \sum_i |I_i|}} 
\chi^{I_1\ldots I_{2t}}
\Big\{ \TrN(K_{I_1}K_{I_2}\cdots K_{I_{2t}} )  
\\[-20pt]
& \hspace{3.5cm}{+(e_{I_1}\cdots e_{I_{2t}})\TrN(K_{I_{2t}}K_{I_{2t-1}}\cdots K_{I_{1}}) }\nonumber
\Big\}\,.
\end{align}
\end{cor}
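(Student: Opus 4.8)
The plan is to start from the master formula of Proposition \ref{thm:EvenPowers}, which already expresses $(1/\dim V)\Tr(D^{2t})$ as a sum over chord diagrams $\chi\in\CD{2n}$ weighted by the $\chi$-tensors $\chi^{I_1\ldots I_{2t}}$ and by the random-matrix trace supplied by Proposition \ref{thm:RandomPart}. The only work remaining is to isolate, inside the inner sum $\sum_{\Upsilon\in\mathscr P_{2t}}$, exactly the terms carrying an explicit factor of $N$, and to verify that these reproduce the asserted single-trace functional $\mathfrak s_n(\chi)$.

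First I would observe that a factor $\TrN(1_N)=N$ is produced precisely when one of the two arguments of the bi-trace in eq. \eqref{eq:RandomPart} is an empty product of matrices, i.e. when $\Upsilon=\emptyset$ or $\Upsilon=\{1,\ldots,2t\}$. These are the only two subsets for which $\Upsilon$ or $\Upsilon^{\mathrm c}$ is empty, and they are exactly the subsets excluded from the bi-trace functional $\mathfrak b_n(\chi)$; the remaining subsets (with both $\Upsilon,\Upsilon^{\mathrm c}\neq\emptyset$) assemble into $\mathfrak b_n(\chi)$ and hence into $\mathcal B$, a sum of genuine products of two $\TrN$'s whose coefficients are manifestly independent of $N$.

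Next I would evaluate the two distinguished subsets. For $\Upsilon=\emptyset$ one has $\mathrm{sgn}(I_\emptyset)=1$ (empty product of signs), $(K^T)_{I_\emptyset}=1_N$, and $K_{I_{\Upsilon^{\mathrm c}}}=K_{I_1}\cdots K_{I_{2t}}$ in the cyclic order read off the trace, yielding $N\cdot\TrN(K_{I_1}\cdots K_{I_{2t}})$. For $\Upsilon=\{1,\ldots,2t\}$ one has $K_{I_{\Upsilon^{\mathrm c}}}=1_N$, $\mathrm{sgn}(I_\Upsilon)=e_{I_1}\cdots e_{I_{2t}}$, and $(K^T)_{I_\Upsilon}=(K_{I_{2t}}\cdots K_{I_1})^T$; applying the identity $\TrN(A^T)=\TrN(A)$ converts this into $N\,(e_{I_1}\cdots e_{I_{2t}})\TrN(K_{I_{2t}}\cdots K_{I_1})$. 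Summing the two recovers exactly $N$ times the brace in eq. \eqref{eq:sgeneral}, i.e. $N\cdot\mathfrak s_n(\chi)$. The statement then follows by summing over $\chi$ and over the admissible $n$ as in eqs. \eqref{eq:Sdost}--\eqref{eq:Bdost}, and by recalling that odd powers of $D$ trace to zero (a product of an odd number of gamma matrices is traceless), so that $\Tr f(D)=\sum_t f_{2t}\Tr(D^{2t})$.

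The computation is essentially bookkeeping; the only point demanding care is the transpose step, where one must track that the cyclic order on $\Upsilon=\{1,\ldots,2t\}$ induced by the trace produces $(K^T)_{I_\Upsilon}=(K_{I_{2t}}\cdots K_{I_1})^T$ rather than the forward product, so that $\TrN(A^T)=\TrN A$ yields the reversed-order trace $\TrN(K_{I_{2t}}\cdots K_{I_1})$ together with the sign $e_{I_1}\cdots e_{I_{2t}}$, matching the second summand in \eqref{eq:sgeneral}. I expect this orientation/transpose bookkeeping to be the main (and in fact the only real) obstacle.
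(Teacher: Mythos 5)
Your proposal is correct and follows essentially the same route as the paper: isolate the two trivial subsets $\Upsilon=\emptyset$ and $\Upsilon=\{1,\ldots,2t\}$ in Proposition \ref{thm:EvenPowers}, note that only these carry a factor $\TrN(1_N)=N$, and relegate all the remaining subsets to $\mathcal B$. Your extra care with the transpose step, $\TrN\big[(K^T)_{I_\Upsilon}\big]=\TrN(K_{I_{2t}}\cdots K_{I_1})$, is exactly the bookkeeping the paper leaves implicit.
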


\begin{proof}
  For any given collection of 
  multi-indices $I_1,\ldots I_{2t}$ 
  one selects in Proposition \ref{thm:EvenPowers} 
  the two sets $\Upsilon=\emptyset $ and $\Upsilon=\{1,\ldots, 2t\}$
  which correspond with the first and 
  second summands between curly brackets. 
 The overall $N$-factor corresponds to 
 $\TrN(1_N)$. 
  Clearly any other subset $\Upsilon$
  has no factor of $N$ since is of the form
  \[
\TrN \bigg(\prod^{\to}_{i\in \{1,\ldots, 2t\}\setminus\Upsilon} K_{I_i} \bigg) \times \TrN \bigg(\prod^{\leftarrow}_{i\in \Upsilon} K_{I_i}\bigg)\,,
\]
where none of the products is empty.
The arrows indicate the order in which the product is performed ($\to$ 
preserves it and 
$\leftarrow$ inverts it, but this irrelevant to the point of this corollary).  Therefore no trace of $1_N$ appears.
One easily arrives then to eq. \eqref{eq:sgeneral} by
excluding from $\mathfrak{a}_n(\chi)$ all 
the non-trivial sets, that is $\Upsilon, \Upsilon^c \neq \emptyset$. 
\end{proof}

\subsection{Formula for $\Tr D^2$ in any dimension and signature}

We evaluate in this section $\Tr(D^2)$
for Dirac operators $D$ of a fuzzy geometry in any
signature $(p,q)$.  

\begin{prop} \label{thm:Dsquare}
The Dirac operator of a fuzzy geometry
of signature $(p,q)$ satisfies for odd $d=p+q$
\begin{align*}
\frac{1}{\dim V}\Tr\big([D\hp{p,q}]^2\big)
= 2
\sum_{I\in\Lambda_d}(-1)^{u(I)+\binom{|I|}{2}} 
\big[N\cdot \TrN(K_I^2)+
  e_I(\TrN K_I)^2\big]\,,
\end{align*}
being $ u(I)$ 
the number of spatial indices in $I$.
If $d$ is even, then the sum is only over $I\in\Lambda_d^-$,
and the  expression reads
\begin{align*}
\frac{1}{\dim V}\Tr\big([D\hp{p,q}]^2\big)
=
2
\sum_{I\in\Lambda_d^-}(-1)^{u(I)+r(I)-1} 
\big[N\cdot \TrN(K_I^2)+
  e_I(\TrN K_I)^2\big]\,,
\end{align*}
with $|I|=2r(I)-1$.
\end{prop}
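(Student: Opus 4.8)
The plan is to specialize the master identity of Proposition~\ref{thm:EvenPowers} to $t=1$, namely
\begin{align*}
\frac{1}{\dim V}\Tr(D^2)=\sum_{I_1,I_2}\langle I_1 I_2\rangle\,\TrM(k_{I_1}k_{I_2})\,,
\end{align*}
and to evaluate the spinorial factor $\langle I_1 I_2\rangle$ and the matricial factor $\TrM(k_{I_1}k_{I_2})$ separately. The summation range is $I_1,I_2\in\Lambda_d$ for odd $d$ and $I_1,I_2\in\Lambda_d^-$ for even $d$, precisely the range in which $D$ is parametrized (Sec.~\ref{sec:CharacDirac}); this is what restricts the even-$d$ statement to $\Lambda_d^-$. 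The crucial observation is that $\langle I_1 I_2\rangle$ vanishes unless $I_1=I_2$. Expanding it through the chord-diagram formula \eqref{eq:Chords} on $2n=|I_1|+|I_2|$ points, every chord carries a factor $g^{\mu_i\mu_j}$, which is zero off the diagonal. As the indices inside a single multi-index are pairwise distinct, no chord may join two points of $I_1$ (nor two of $I_2$); hence every chord runs between the two blocks, and a non-vanishing product of metrics forces the two ordered index sets to coincide, $I_1=I_2=:I$.

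First I would pin down the diagonal coefficient. For $I=(\mu_1,\dots,\mu_\ell)$ with $\ell=|I|$, the sole surviving chord diagram joins the point carrying $\mu_i$ in the first block to the unique point carrying the same index in the second, i.e. $i\sim \ell+i$; these $\ell$ parallel chords cross pairwise, producing $\binom{\ell}{2}$ crossings, while $\prod_i g^{\mu_i\mu_i}=(-1)^{u(I)}$ records the negative-signature (timelike) indices, $u(I)$ being the same count as in \eqref{eq:autoadjuncion}. Hence
\begin{align*}
\langle I I\rangle=(-1)^{\binom{|I|}{2}+u(I)}\,.
\end{align*}
As a cross-check, the same number equals $(\Gamma^I)^2/\dim V$: each $\gamma^\mu$ is unitary in our conventions, so $\Gamma^I$ is unitary and $(\Gamma^I)^{-1}=(\Gamma^I)^*=(-1)^{u(I)+\binom{|I|}{2}}\Gamma^I$ by the general-parity version of \eqref{eq:autoadjuncion}; thus $(\Gamma^I)^2=(-1)^{u(I)+\binom{|I|}{2}}1_V$, which in the even case $|I|=2r(I)-1$ becomes $(-1)^{u(I)+r(I)-1}$ since $\binom{2r-1}{2}\equiv r-1\pmod2$.

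Next I would compute the matricial factor through Proposition~\ref{thm:RandomPart} with $r=2$, summing over $\Upsilon\in\mathscr P_2=\{\emptyset,\{1\},\{2\},\{1,2\}\}$. The complementary pair $\Upsilon\in\{\emptyset,\{1,2\}\}$ each contributes $\TrN(K_I^2)\,\TrN(1_N)=N\,\TrN(K_I^2)$ (using $\TrN K^T=\TrN K$ and $e_I^2=1$), whereas $\Upsilon\in\{\{1\},\{2\}\}$ each contributes $e_I(\TrN K_I)^2$, since $\mathrm{sgn}(I_{\{1\}})=\mathrm{sgn}(I_{\{2\}})=e_I$. Collecting the two identical pairs produces the overall factor $2$:
\begin{align*}
\TrM(k_I^2)=2\big[N\,\TrN(K_I^2)+e_I(\TrN K_I)^2\big]\,.
\end{align*}

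Finally I would multiply the two factors and sum over $I$, obtaining the odd-$d$ formula verbatim, and then specialize to even $d$ by restricting the sum to $\Lambda_d^-$ and rewriting $\binom{|I|}{2}\equiv r(I)-1$. The only genuine hazard is the sign bookkeeping: counting exactly $\binom{\ell}{2}$ crossings of the parallel diagram, matching $\prod_i g^{\mu_i\mu_i}$ to the timelike count $u(I)$, and verifying that the exponent $u(I)+\binom{|I|}{2}$ found here is consistent, via $\binom{2r-1}{2}\equiv r-1$, with the hermiticity exponent $u(I)+r(I)-1$ of \eqref{eq:autoadjuncion} used in the even-dimensional statement. No analytic subtlety beyond this arises.
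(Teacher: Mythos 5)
Your proposal is correct and follows essentially the same route as the paper's proof: reduce to $t=1$ of Proposition \ref{thm:EvenPowers}, argue that $\langle I_1 I_2\rangle$ forces $I_1=I_2$ with only the pizza-cut diagram surviving (giving $(-1)^{u(I)+\binom{|I|}{2}}$), evaluate $\TrM(k_I^2)$ via Proposition \ref{thm:RandomPart}, and convert $\binom{2r-1}{2}\equiv r-1 \pmod 2$ in the even case. The unitarity cross-check via $(\Gamma^I)^2$ is a pleasant addition not present in the paper, but the argument is otherwise the same.
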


\begin{proof}

In order to use eq. \eqref{eq:D2t},
notice that $\langle I_1 I_2 \rangle \neq 0$
implies that $I_1,I_2\in \Lambda_d$ have the 
same cardinality. If that were not the 
case (wlog $|I_1| > |I_2| $), since any non-zero term from 
$\langle I_1 I_2 \rangle$ arises from 
a contraction of indices (cf. eq. \eqref{eq:Chords}), a 
different number of indices would imply
that there is a chord connecting two indices  $\mu_i,\mu_j$ of 
$I_1=(\mu_1\ldots, \mu_r)$. 
Since $I_1\in \Lambda_d$, 
those indices are different, so $g^{\mu_i\mu_j}=0$.
Thus only chord diagrams for pairings $g^{\mu\nu}$ of indices
with $\mu\in I_1$ and $\nu\in I_2$ survive. Since
the indices  of $I_1$ and $I_2$ are strictly increasing,
both ordered sets have to agree.
This means 
that we only have to care about evaluating $\langle I I'\rangle $,
with $I'=(\mu_1',\ldots,\mu'_w)$ being a copy of 
$I=(\mu_1,\ldots,\mu_w)$, i.e. $\mu_i'=\mu_i$. Since
this last equality is the only possible index repetition
  \begin{align} \nonumber
\langle  {\mu_1,\ldots,\mu_w\mu_1',\ldots,\mu'_w}\rangle &= 
\sum_{\substack{ 2w\text{\scriptsize -pt chord} 
\\ \text{\scriptsize diagrams }  \chi  } }
(-1)^{\text{cr}(\chi)} 
\prod_{\substack{i,j \\ i\sim j} } g^{\mu_{i}\mu'_{j}} \delta_{ij}= 
(-1)^{\text{cr}(\pi)} \prod_{\mu=1}^w  g^{\mu\mu} 
\nonumber 
\end{align}
 where $\pi$ is the (pizza-cut)
 diagram with longest chords, that is joining 
 antipodal points. 
The number of crossings $\mtr{cr}(\mtr{\pi})$
 is $\binom{w}{2}$.  
 An additional sign $(-1)^{u}$ 
comes from $\prod_{i=1}^w  g^{\mu_i\mu_i} $,
being $u\leq q$ the number of spatial 
indices in $I$, yielding
\begin{align} \label{eq:II}
 \langle I_1 I_2 \rangle = \delta^{I_1}_{I_2} 
 (-1)^{u(I_1)+\binom{w}{2}}   \,.
\end{align}
From eq. \eqref{eq:D2t} with $t=1$ one has
\allowdisplaybreaks[1]
 \begin{align*}
\frac{1}{\dim V}\Tr\big([D\hp{p,q}]^2\big)
&=
 \sum_{I_1   I_{2}\in\Lambda}   \langle I_1 I_2 \rangle \TrM(k_{I_1} k_{I_{2}} \big)\\ 
&  =
  \sum_{I\in\Lambda}
  (-1)^{u(I)+\binom{|I|}{2}}
  \Tr_{\mathbf N\otimes \bar {\mathbf N}}\Big[\big(K_I\otimes 1_N +e_I \otimes K_I^T
  \big)^2\Big]\\
  & =  \sum_{I\in\Lambda}
  (-1)^{u(I)+\binom{|I|}{2}}\Big[
\TrN(K_I) \TrN (1_N)+ \TrN (1_N)\TrN (K_I^T) \\ 
& \hspace{3.5cm} + 2 e_I \TrN(K_I) \TrN(K_I^T)\Big]\,.
 \nonumber
 \end{align*}%
 \allowdisplaybreaks[0]%
 In the second equality we used eq. \eqref{eq:II}. 
 The third one follows from Proposition \ref{thm:RandomPart}.
For $p+q$ even, the sum runs only
over $I\in \Lambda_d^-$. In the sign appearing in eq. \eqref{eq:II},
$\binom{|I|}{2}$  
could then be replaced by 
$r(I)-1 $ with $2r(I)-1=|I| $,
for $\binom{|I|}{2} \equiv r-1$ (mod 2).
\end{proof}

\section{Two-dimensional fuzzy geometries in general signature} \label{sec:d2}

We compute traces of $D^2$, $D^4$, $D^6$ 
for 2-dimensional fuzzy geometries general signatures. 
Concretely, for $d=2$ the spinor space is $V=\C^2$.

\subsection{Quadratic term}\label{sec:d2D2}
For a metric $g=\diag(e_1,e_2)$  
notice that $e_\mu=(-1)^{u(\mu)}$. Therefore, by Proposition \ref{thm:Dsquare} one gets
\begin{align}
\frac{1}{4}
\Tr\big[(D\hp{p,q})^2\big]  \nonumber
& = 
\sum_\mu (-1)^{u(\mu)} N \cdot \Tr_N (K_\mu ^2)
 + \sum_{\mu}[\TrN (K_\mu) ]^2
\\
&=  N \sum_{\mu,\nu}  g^{\mu\nu} 
\Tr_N (K_\mu K_\nu)
 + \sum_{\mu}[\TrN (K_\mu) ]^2 \,,\label{eq:GeneralD2d2}
\end{align}
where $u(\mu) =0$ if $\mu $ is 
time-like and if its spatial,
$u(\mu) =1$. Case by case, 
\begin{align} \nonumber
 \begin{cases}
 \sum\limits_{a=1}^{ 2} \big(-N \cdot \TrN(L_{\dot a}^2)    + [\TrN(L_{\dot a}) ]^2 \big)& \mbox{for } (p,q)=(0,2) \\[9pt]
 N \cdot \TrN(H^2-L^2)    +   [\TrN(H) ]^2 +[\TrN(L) ]^2  & \mbox{for } (p,q)=(1,1) \\[2pt]
  \sum\limits_{ a=1}^{ 2}\big(+ N \cdot \TrN(H_a^2)    +  [\TrN(H_a)]^2  \big)& \mbox{for } (p,q)=( 2,0)  
\end{cases}
\end{align}
reproducing some of the formulae reported in \cite[App. A]{BarrettGlaser}. 

\subsection{Quartic term}\label{sec:d2D4}
In \textit{op. cit.} also the quartic term for $d=2$ was computed.
We recompute for a general $d=2$ geometry of arbitrary signature
 aiming at illustrating the chord diagrams at work.
Since $d=2$, multi-indices $\mu \in \Lambda^-_2$ are just spacetime indices 
$\mu=1,2$. Hence, after Proposition \ref{thm:EvenPowers},
\allowdisplaybreaks[1]
\begin{salign}
 \frac{1}{2} \Tr(D^4)
 &=\sum_{\mu_1,\ldots,\mu_4\in \Lambda_2^-} \bigg(
\raisebox{-.43\height}{\includegraphics[height=1.6cm]{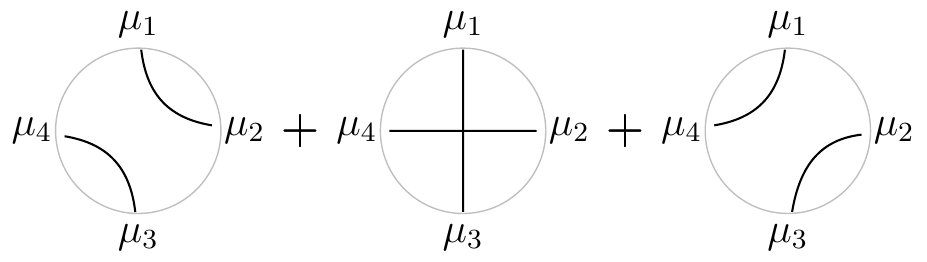} } \bigg) \\
& \times \bigg\{
N\cdot  \TrN( K_{\mu_1}K_{\mu_2}K_{\mu_3}  K_{\mu_4})
\\
&+\sum_{i=1}^4    e_{\mu_i} 
\TrN (K_{\mu_1}\cdots \widehat{ K_{{\mu_i}}} \cdots K_{\mu_4} ) \TrN (K_{\mu_i})
\\ & + \sum_{1\leq i<j\leq 4}^{\phantom .}
e_{\mu_i}e_{\mu_i} 
\big[
\TrN (K_{\mu_v} K_{\mu_w}) \TrN (K_{\mu_i} K_{\mu_j}) \numerada
\big]
\\
& +\sum_{i=1}^4 e \cdot e_{\mu_i} \cdot 
 \TrN (K_{\mu_i})
\TrN (K_{\mu_4}\cdots \widehat{ K_{{\mu_i}}} \cdots K_{\mu_1} )
\\
&
+ e \cdot N\cdot \TrN(K_{\mu_4}K_{\mu_3}K_{\mu_2} K_{\mu_1}) \bigg \}\,,
\end{salign}%
\allowdisplaybreaks[0]%
with $e=e_{\mu_1} e_{\mu_2}e_{\mu_3}e_{\mu_4}$
and $\{i,j,v,w\}=\Delta_4$.
In the first line, the value of the chord diagrams is $g^{\mu_1\mu_2}g^{\mu_3\mu_4}-
g^{\mu_1\mu_3}g^{\mu_2\mu_4}
+g^{\mu_1\mu_4}g^{\mu_2\mu_3}$,
the signs $e_\mu=\pm$ appearing in $g=\diag(e_1,e_2)$ 
being determined by 
$(p,q)$. 
Summing over all indices, one gets 
\begin{align}
  \frac{1}{4} \Tr\big([D\hp{p,q}]^4\big)\label{eq:GeneralDcuatro}
 &= N\Big[ \TrN(K^4_1)+\TrN(K^4_2) \\[3pt]
 &\nonumber +4e_1e_2 \TrN(K_1^2K_2^2) -2e_1e_2 \TrN(K_1K_2K_1K_2)\Big] \\[6pt]
 &+4\Big\{[\TrN(K_1 K_2 )]^2+ \sum_{\mu=1,2} \TrN {K_\mu} \cdot \TrN\big[K_\nu(e_1K_1^2+e_2K^2_2)\big] \Big\} \nonumber \\
  &+3\sum_{\mu=1,2}\big[\TrN(K_\mu^2)\big]^2 
 +2e_1e_2 \TrN (K_1^2)\cdot \TrN (K_2^2)\,. \nonumber
\end{align}
One gets directly the results of \cite[App. A.3, A.4, A.5]{BarrettGlaser}
by setting 
\begin{align}
K_1=\begin{cases}
H_1 & \text{for }(p,q)= (2,0) \, \and \,(1,1)\,, \\ 
L_{\dot 1} & \text{for }(p,q)= (0,2)\,,
\end{cases}
\end{align}
and 
\begin{align}
K_2=\begin{cases}
H_2 & \text{for }(p,q)= (2,0)\,, \\ 
L_{\dot 2} & \text{for }(p,q)= (1,1)\, \and\, (0,2)\,.
\end{cases}
\end{align}
The conventions for which these hold are 
$(\gamma^\nu)^*=e_\nu\gamma^\nu$ (no sum, $\nu=1,2$).
 
 \subsection{Sextic term}\label{sec:d2D6}
We now compute the sixth-order term.
 \allowdisplaybreaks[1]
\begin{prop}\label{thm:Dsix}
Let $g=\diag(e_1,e_2)$ denote the quadratic 
form associated to the signature $(p,q)$ of a 
2-dimensional fuzzy geometry with Dirac operator $D$. Then
\begin{align*}
 \frac{1}{2}
 \Tr(D^6)&= N  \cdot \mathcal S_6[K_1,K_2] + \mathcal{B}_6[K_1,K_2]\,,
\end{align*}
where the single-trace part is given by  
\begin{align*}
 \mathcal S_6[K_1,K_2]&= 
 2 \cdot\TrN\big\{  e_1K_1^6+ 6e_2 K_1^4 K_2^2-6 e_2 K_1^2(K_1K_2)^2+3e_2  (K_1^2K_2)^2     \\
& \hspace{42.5pt} +
e_2K_2^6+ 6e_1 K_2^4K_1^2 - 6 e_1 K_2^2(K_2K_1)^2 + 3e_1 (K_2^2K_1)^2    \big\} 
\end{align*}
and the bi-trace part is
\begin{align*}
\mathcal{B}_6[K_1,K_2]&= 
6 \TrN(K_1) \bigg\{ 2 \TrN(K_1^5)+2 \TrN(K_1 K^4_2) \\
& \hspace{2.2cm}
+    6 e_1 e_2 \TrN(K_1^3K_2^2) -2 e_1 e_2 \TrN(K_1^2K_2K_1K_2) \big] \bigg\}\\
&+\vphantom{\bigg\}}
6 \TrN(K_2) \bigg\{2 \TrN(K_2^5)+2 \TrN(K_2 K^4_1)  \\
& \hspace{2.2cm}
+    6 e_1 e_2 \TrN(K_2^3K_1^2) -2 e_1 e_2 \TrN(K_2^2K_1K_2K_1) \big] \bigg\} \\
& +\vphantom{\bigg\}} 48 \TrN (K_1 K_2  )  \cdot
 \big[ e_1 \TrN ( K_1^3K_2)+ e_2 
\TrN (  K_2 ^3 K_1) \big] \\ &\vphantom{\bigg\}}
+6
 \TrN (K_1^2  ) \cdot \Big\{e_2  \big[8 \TrN (K_1 ^2 K_2 ^2 )-2 
\TrN (K_2 K_1 K_2 K_1 ) \big] \\ & \hspace{2.2cm}+ e_1  \big[5 \TrN (K_1 ^4   ) +
\TrN (K_2  ^4 ) \big] \Big\} \\ &+6 \TrN (K_2^2 ) \cdot 
 \Big\{  e_1   \big[8 \TrN ( K_1 ^2 K_2 ^2)-2 \TrN (K_1 K_2 K_1 K_2 )  \big]\\
  \vphantom{\bigg\}}& \hspace{2.2cm}+e_2   \big[5 \TrN (K_2^4 ) +\TrN (K_1 ^4)\big] \Big\} \\
  \vphantom{\Big\}}&+ 4  \big(5 [\TrN (K_1^3 )]^2+6 e_1 e_2 \TrN (K_1 K_2^2 ) \TrN (K_1^3 )+9[ \TrN K_1^2 K_2 ]^2\\
 \vphantom{\Big\}}&\quad+5 [\TrN (K_2^3 )]^2+6 e_1 e_2 \TrN (K_1^2 K_2 ) \TrN (K_2^3 )+9[ \TrN K_1 K_2^2]^2 \big)\,.
\end{align*}

\end{prop}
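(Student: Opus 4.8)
The plan is to specialize Proposition \ref{thm:EvenPowers} to $t=3$ and $d=2$, where the situation simplifies drastically: every multi-index $I_i\in\Lambda_2^-$ is a single spacetime index $\mu_i\in\{1,2\}$, so $|I_i|=1$ and the total number of points is forced to be $2n=6$. Consequently only the $5!!=15$ chord diagrams in $\CD 6$ enter, and since $\dim V=2$ we have $\tfrac12\Tr(D^6)=\tfrac1{\dim V}\Tr(D^6)$. The master formula I would start from is
\begin{equation*}
\tfrac12\Tr(D^6)=\sum_{\mu_1,\dots,\mu_6\in\{1,2\}}\langle\mu_1\cdots\mu_6\rangle\Big[\sum_{\Upsilon\in\mathscr P_6}\mathrm{sgn}(I_\Upsilon)\,\TrN(K_{\mu_{\Upsilon^{\mathrm c}}})\,\TrN((K^T)_{\mu_\Upsilon})\Big],
\end{equation*}
with $\langle\mu_1\cdots\mu_6\rangle=\sum_{\chi\in\CD6}\chi^{\mu_1\cdots\mu_6}$ given by the chord-diagram sum \eqref{eq:Chords}. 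Because $g=\diag(e_1,e_2)$ is diagonal, each chord contributes $g^{\mu_i\mu_j}=e_{\mu_i}\delta^{\mu_i\mu_j}$, so only index assignments in which the two endpoints of every chord carry the \emph{same} value survive; a diagram $\chi$ with chords valued $\nu_1,\nu_2,\nu_3\in\{1,2\}$ then contributes the scalar $(-1)^{\mathrm{cr}(\chi)}e_{\nu_1}e_{\nu_2}e_{\nu_3}$.

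I would treat the single-trace part first, via Corollary \ref{thm:largeN} and formula \eqref{eq:sgeneral}, which isolates the subsets $\Upsilon=\emptyset$ and $\Upsilon=\{1,\dots,6\}$ into $N\cdot\mathcal S_6$. The key observation is that on the surviving assignments each chord-value appears exactly twice among $\mu_1,\dots,\mu_6$, so $\prod_i e_{\mu_i}=\prod_k e_{\nu_k}^2=1$; hence the reversed-word term in \eqref{eq:sgeneral} enters with coefficient $+1$, and after using cyclicity of $\TrN$ together with transpose-invariance $\TrN(W^T)=\TrN(W)$ the two contributions merge into the overall factor $2$ visible in $\mathcal S_6$. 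Running over the $15$ diagrams and, for each, the $2^3$ chord-value assignments, every assignment produces a length-six word $K_{\mu_1}\cdots K_{\mu_6}$ weighted by $(-1)^{\mathrm{cr}(\chi)}e_{\nu_1}e_{\nu_2}e_{\nu_3}$; collecting words of the same cyclic type (e.g. $K_1^6$, $K_1^4K_2^2$, $K_1^2(K_1K_2)^2$, $(K_1^2K_2)^2$ and their $1\leftrightarrow2$ images) and summing the associated signed multiplicities yields the stated $\mathcal S_6[K_1,K_2]$.

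The bi-trace part $\mathcal B_6$ is the same input processed over the \emph{non-trivial} subsets $\emptyset\neq\Upsilon\subsetneq\{1,\dots,6\}$, using Proposition \ref{thm:RandomPart} to turn each $\TrM(k_{\mu_1}\cdots k_{\mu_6})$ into a sum of products $\TrN(K_{\mu_{\Upsilon^{\mathrm c}}})\,\TrN((K^T)_{\mu_\Upsilon})$ carrying the sign $\mathrm{sgn}(I_\Upsilon)=\prod_{i\in\Upsilon}e_{\mu_i}$. For each of the $15$ diagrams the cyclic order induced on $\{1,\dots,6\}$ fixes the two factor-words; I would again restrict to chord-consistent assignments, eliminate transposes via $\TrN((\cdots)^T)=\TrN(\cdots)$, reduce each factor by cyclicity, and then bin the resulting bi-trace monomials by their joint trace type ($\TrN(K_a)\cdot\TrN(\text{quintic})$, $\TrN(K_aK_b)\cdot\TrN(\text{quartic})$, $\TrN(K_a^2)\cdot\TrN(\text{quartic})$, and the cubic$\times$cubic terms). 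Summing the signed multiplicities for each bin should reproduce $\mathcal B_6[K_1,K_2]$; because this enumeration is lengthy, I would relegate the full per-diagram accounting to Appendix \ref{sec:appB}, exactly as the paper announces.

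The main obstacle is purely one of organized bookkeeping in the bi-trace sector rather than of any conceptual difficulty: one must correctly pair, for all $15$ diagrams and all $2^6-2$ proper subsets $\Upsilon$, the crossing sign $(-1)^{\mathrm{cr}(\chi)}$ with the commutator sign $\prod_{i\in\Upsilon}e_{\mu_i}$, and then recognize when distinct (diagram, $\Upsilon$, assignment) triples collapse to the same cyclic bi-trace monomial. The safeguards are the two structural facts used throughout---the pairing constraint that kills all but the equal-index assignments, and the identities $\TrN(W^T)=\TrN(W)$ together with cyclic invariance---which keep the number of genuinely distinct monomials small and make the signed multiplicities the only thing that must be computed with care.
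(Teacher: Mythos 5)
Your proposal is correct and follows essentially the same route as the paper: specialize Proposition \ref{thm:EvenPowers} to $t=3$, $d=2$ so that only the fifteen $6$-point chord diagrams contribute, use the diagonal metric to kill all but the equal-index chord assignments (each chord contributing $e_\nu$ and hence $\prod_i e_{\mu_i}=1$, which produces the overall factor $2$ in $\mathcal S_6$), and then sort the non-trivial subsets $\Upsilon$ into bi-trace monomials. The paper merely adds an organizational layer you omit --- grouping the fifteen diagrams into five rotation classes with multiplicities $3,6,2,3,1$ and expressing each $\mathfrak b(\chi)$ in the fixed basis of templates $O,\dots,W$ of eq.~\eqref{eq:OW} --- but this is bookkeeping, not a different argument.
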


\begin{proof}

The part $\langle \mu_1\ldots\mu_6 \rangle$ concerning the chord diagrams 
evaluates to 
\begin{align} \label{eq:InGs}\raisetag{8\normalbaselineskip}
 \hspace{-.26cm}
\raisebox{-.405\height}{
\includegraphics[width=13cm]{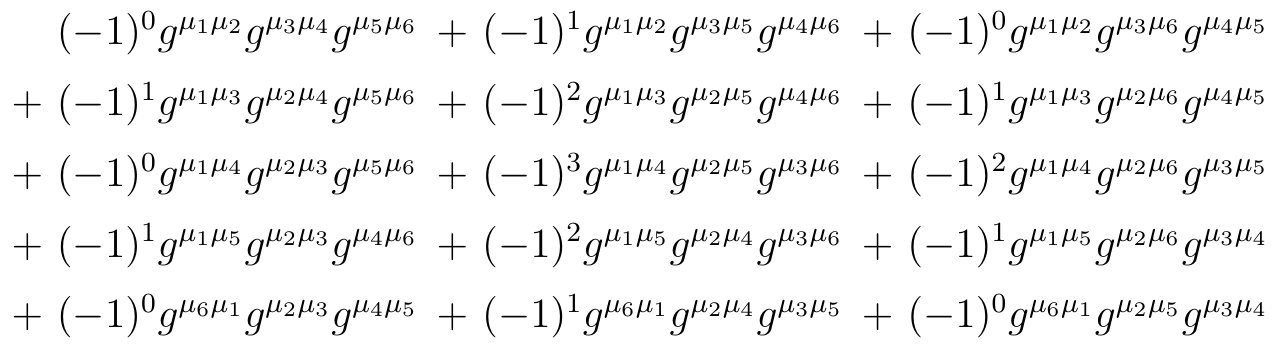}}  
\end{align}
but it is actually useful to depict these terms
as in Figure \ref{fig:Dsix},
\begin{figure}
\[
\langle \mu_1\ldots\mu_6 \rangle =\Vast\{\hspace{-.16cm}
\raisebox{-.49\height}{
\includegraphics[width=10.0cm]{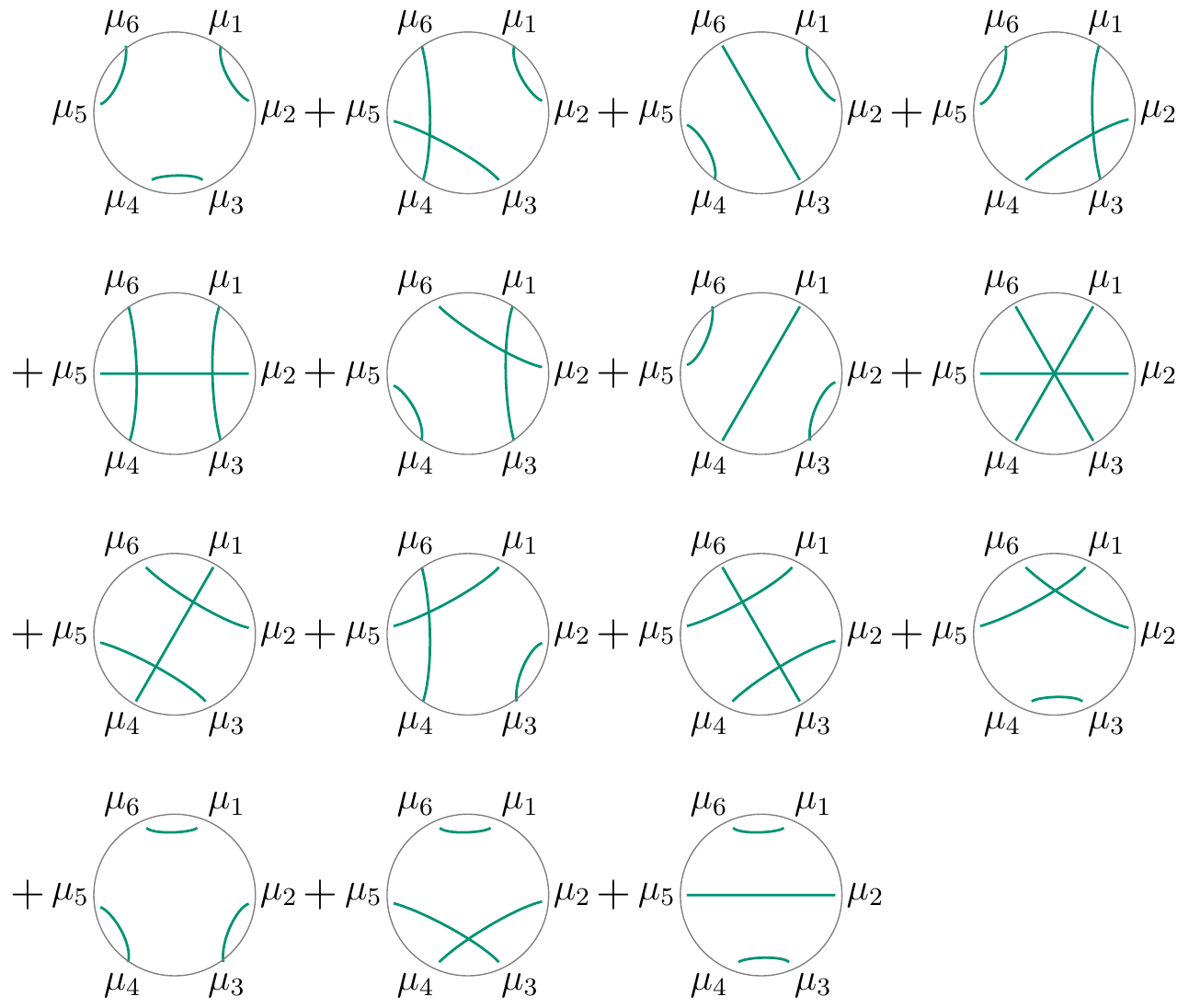}} \Vast\} \]
\caption{On the proof of Proposition \ref{thm:Dsix}\label{fig:Dsix}}
\end{figure}
for then, due to the cyclicity of $\TrN$, one can compute by classes
(modulo $\pi\Z_6/3$-rotations) of 
diagrams. To each class, a Roman number is assigned:
\begin{align} \label{eq:Roman} \raisetag{5.05\normalbaselineskip}
\hspace{-.5cm}
\raisebox{-.483\height}{
\includegraphics[width=\textwidth]{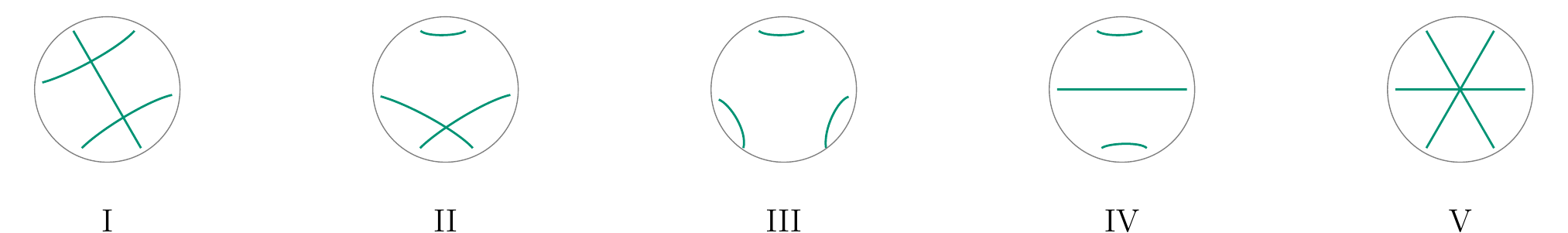}}
\end{align}
One can relabel the $\mu_j$-indices to obtain 
\begin{align*}
 \frac{1}{\dim V}
 \Tr(D^6)&=  \sum_{\chi \in \mtr{CD}_6} \mathfrak{a}( \chi )  
 = 3\mathfrak{a}( \mtr{I} ) + 6\mathfrak{a}( \mtr{II}) +2 \mathfrak{a}( \mtr{III}) +
  3\mathfrak{a}( \mtr{IV} ) +  \mathfrak{a}( \mtr{V} ) \,,
 \end{align*}
 the factors being the multiplicity of each 
 diagram class. The single-trace part $\mathcal S_6$ 
can be computed for each diagram directly (at the 
end of App. \ref{sec:appB} one of these is shown). 
We simplified the notation: $\mtf a_3$ as $\mtf a$,
and similarly we shall write  $\mtf b$ for $\mtf b_3$,
since therein only 6-point diagrams appear (a power $ 2t \geq 2$
of the Dirac operator determines the number of 
points of the chord diagrams only for dimensions $d\leq 2$).

We now compute the bi-trace term. Defining 
\begin{subequations}
\label{eq:OW}
 \begin{align}
 O_{\mu\nu\rho}&=e_{\nu} e_{\rho} \cdot \TrN (K_{\mu} ) \cdot \TrN (K_{\mu} K_{\rho} K_{\nu} K_{\rho} K_{\nu} ) \\
 P_{\mu\nu\rho}&=e_{\nu} e_{\rho} \cdot \TrN (K_{\mu} ) \cdot \TrN (K_{\mu} K_{\rho} K_{\nu}^2 K_{\rho} ) \\
 Q_{\mu\nu\rho}&= e_{\nu} e_{\rho}\cdot  \TrN (K_{\mu} )\cdot  \TrN (K_{\mu} K_{\rho}^2 K_{\nu} ^2 )\\
 R_{\mu\nu\rho}&=   e_{\rho} \cdot \TrN (K_{\mu} K_{\nu} )\cdot  \TrN (K_{\mu} K_{\nu} K_{\rho} ^2 )
\\ 
 S_{\mu\nu\rho}&= e_{\rho} \cdot  \TrN (K_{\mu} K_{\nu} )\cdot  \TrN (K_{\mu} K_{\rho} K_{\nu} K_{\rho} )\\
 T_{\mu\nu\rho}&= e_{\mu} e_{\nu} e_{\rho} \cdot \TrN (K_{\mu} ^2 ) \cdot \TrN (K_{\nu} K_{\rho} K_{\nu} K_{\rho} ) \\
 U_{\mu\nu\rho}&=e_{\mu} e_{\nu} e_{\rho}\cdot  \TrN (K_{\mu} ^2 )\cdot  \TrN (K_{\nu}^2 K_{\rho} ^2 ) \\
 V_{\mu\nu\rho}&= [\TrN (K_{\mu} K_{\nu} K_{\rho} )]^2\\
 W_{\mu\nu\rho}&=   e_{\nu} e_{\rho}  \cdot\TrN (K_{\mu} K_{\nu}^2 ) \cdot \TrN (K_{\mu} K_{\rho}^2  )  \,.
 \end{align}
 \end{subequations}
 we can find by direct computation, that for 
 any of the $6$-point chord diagrams  $\chi$
 there are integers $p_\chi,q_\chi,\ldots, v_\chi,w_\chi$ 
 such that 
 \allowdisplaybreaks[0]
\begin{subequations}
\begin{align}
\mathfrak b (\chi)  =  
 \sum _{\mu,\nu,\rho} 
 & o_\chi  O_{\mu\nu\rho}
 +
 p_\chi  P_{\mu\nu\rho}
 +
 q_\chi  Q_{\mu\nu\rho} \\  \vphantom{ \sum _{\mu,\nu,\rho} }
 +
 &r_\chi  R_{\mu\nu\rho}
 +
 s_\chi  S_{\mu\nu\rho}
 +
 t_\chi  T_{\mu\nu\rho}
 +
 u_\chi  U_{\mu\nu\rho}
 \\
 +& v_\chi  V_{\mu\nu\rho}
 +
 w_\chi  W_{\mu\nu\rho}
 \end{align} 
\end{subequations}

The terms $O,P,Q$ come from the $(1,5)$ partition of $6$, i.e. $\TrN (1\text{ matrix}) \times \TrN (5\text{ matrices} )$;
$R,S,T,U$ terms come from the $(2,4)$ partition 
and $W,V$ from the $(3,3)$ partition of $6$.  
This claim is verified by direct computation;
the proof for $\mathfrak b (\mtr I)$ is presented in Appendix \ref{sec:appB}
and the rest is similarly obtained:
\begin{subequations}
\begin{align} \nonumber
\mathfrak b  (\mathrm{I}) &= + 2 \sum_{\mu,\nu,\rho} 
   \big( 4 O_{\mu\nu\rho}  + 2 P_{\mu\nu\rho}  
        +6 R_{\mu\nu\rho} +  6 S_{\mu\nu\rho}  \phantom{+ 4 O_{\mu\nu\rho}}  \\
  & \hspace{1.323cm} +  2  T_{\mu\nu\rho} +  U_{\mu\nu\rho}
 + 4V_{\mu\nu\rho} + 6 W_{\mu\nu\rho}\big) 
\\
 \nonumber\mathfrak b  (\mathrm{II}) &= -2 \sum_{\mu,\nu,\rho} 
 \big(  2 O_{\mu\nu\rho}  +  2 P_{\mu\nu\rho}  +2  Q_{\mu\nu\rho} 
        +8R_{\mu\nu\rho} + 4 S_{\mu\nu\rho} \\
  &  \hspace{1.323cm} +   T_{\mu\nu\rho} + 2 U_{\mu\nu\rho}
 +4 V_{\mu\nu\rho} +  6W_{\mu\nu\rho}\big) \\
\mathfrak b  (\mathrm{III})&=+ 2 \sum_{\mu,\nu,\rho}  
   \big( 6  Q_{\mu\nu\rho}+ 12 R_{\mu\nu\rho} + 3 U_{\mu\nu\rho}
 + 4V_{\mu\nu\rho} +  6W_{\mu\nu\rho}\big)
\\
\mathfrak b (\mathrm{IV})&=+ 2 \sum_{\mu,\nu,\rho} \nonumber
   \big(   2 P_{\mu\nu\rho}  +4 Q_{\mu\nu\rho} 
        +8R_{\mu\nu\rho} + 4 S_{\mu\nu\rho} \\
  &  \hspace{1.323cm} + 3  U_{\mu\nu\rho}
 + 4V_{\mu\nu\rho} +  6W_{\mu\nu\rho}\big)
\\
\mathfrak b  (\mathrm{V})&= -2 \sum_{\mu,\nu,\rho}  \nonumber
   \big( 6 O_{\mu\nu\rho}  + 6R_{\mu\nu\rho} +  6S_{\mu\nu\rho} \\
  &  \hspace{1.323cm} +   3T_{\mu\nu\rho}  
 + 4V_{\mu\nu\rho} +  6W_{\mu\nu\rho}\big)
\end{align}
\end{subequations}
One then performs the sums explicitly
and arrives to the claim for $\mathcal{B}=\sum_\chi \mathfrak b(\chi)$. 
\end{proof}

\section{Four-dimensional geometries in general signature} \label{sec:d4}
 
We compute now the spectral action for 4-dimensional fuzzy geometries. 

\subsection{The term $\Tr D^2 $}\label{sec:d4D2}

For any four-dimensional 
geometry $p+q=4$ of signature 
$(p,q)$ there are 
eight matrices, $K_1,K_2,K_3,K_4, X_1,X_2,X_3 $ and $X_4 \in M_N(\C)$,  parametrizing the Dirac 
operator \begin{align}
 D\hp{p,q}= 
 \sum_{\mu=1}^4 
 \gamma^\mu\otimes 
 k_{\mu} + 
\Gamma^{\hat \mu} \otimes 
 x_\mu \,.
\end{align}
Here the lower case operators on $M_N(\C)$
are related to said matrices by 
\[
k_\mu=\{K_\mu , \balita \}_{e_\mu}\quad 
\and  \quad
x_\mu=\{X_\mu , \balita \}_{e_{\hat{\mu}}}
\]
where given a sign $\varepsilon=\pm$, the braces $\{A,B\}_\varepsilon=AB+\varepsilon BA$
represent a commutator or an anti-commutator.
As before $\Gamma^{\hat 1}=\gamma^{2}\gamma^{3}\gamma^{4},
\Gamma^{\hat 2}=\gamma^{1}\gamma^{3}\gamma^{4}$, etc,
but in favor of a lighter notation we 
have replaced $K_{\hat{\mu}}$ by $X_\mu$ . The metric here is $g=\diag(e_1,e_2,e_3,e_4)$ and the 
 spinor space is $V=\C^4$. 

\par
The numbers $u(\mu)$ and $u(\hat \nu)$ of spatial subindices 
of each (multi-)index, $\mu$ and $\hat \nu$,
can be written in terms of the signs $e_{  \mu}$ 
and $ e_{\hat \nu}$ 
that define the (anti-)hermiticity conditions
---namely $(\gamma^{ \mu})^{*}= e_{  \mu}\gamma^{  \mu}$ and 
$(\gamma^{\hat \nu})^{*}= e_{\hat \nu}\gamma^{\hat \nu}$. First, 
trivially, $e_{\mu}=(-1)^{u(\mu)}$. On the other hand 
since $u(\nu)+u(\hat \nu)$ is the 
total number $q$ of spatial indices, one 
has, by Appendix \ref{sec:App}, $e_{\hat \nu}=(-1)^{u(\hat \nu)+\lfloor 3/2\rfloor}= (-1)^{q+1+u(\nu)}=e_\nu(-1)^{q+1}$. 
Since the spinor space is four dimensional, 
by Proposition \ref{thm:Dsquare} one has
\begin{salign} 
 \frac{1}{2\cdot 4} \Tr\big[(D\hp{p,q})^2\big]=& 
\sum_{\mu =1}^4(-1)^{u(\mu)+\lfloor 1/2\rfloor} 
\big[N\cdot \TrN(K_\mu^2)+
  e_\mu(\TrN K_\mu)^2\big] 
     \\ +& \sum_{\nu =1}^4
     (-1)^{u(\hat \nu)+\lfloor 3/2\rfloor}
\big[N\cdot \TrN(X_\nu^2)+
  e_{\hat \nu} (\TrN X_\nu)^2\big]\, \label{eq:generalD2d4}\numerada \\
   =&\sum_{\mu=1}^4 
  e_\mu N \cdot\TrN \big[ K_\mu^2+(-1)^{q+1} X_\mu^2 \big]
  +(\TrN K_\mu)^2+(\TrN  X_\mu )^2 \,.
\end{salign}
In Section \ref{sec:RiemLor} 
we specialize eq. \eqref{eq:generalD2d4} to fuzzy Riemannian 
and Lorentzian geometries. Before, it will be   
useful to obtain the quartic term in order to
integrate it with the quadratic one.

\subsection{The term $\Tr D^4 $} \label{sec:d4D4}
  
To access $\Tr(D^4)$ we now detect 
the non-vanishing chord diagrams. 
  
\subsubsection{Non-vanishing chord diagrams}\label{sec:Chordsd4}

In four dimensions chord diagrams of various number of points ($2n=4,6,8,10,12$)
have to be computed  to access $\Tr(D^4)$. Next proposition helps 
to see the only non-trivial diagrams and requires some new notation. With
each multi-index $I_i$ running over eight values $I_i=\mu,\hat{\nu}$ ($\mu,\nu=\Delta_4$),
the $ 8^4$ decorations for the tensor 
$\chi^{I_1I_2I_3I_4}$ fall into the following $\tau$-types:
\begin{align} \label{eq:pelos} 
 \raisebox{-.45\height}{\includegraphics[width=2cm]{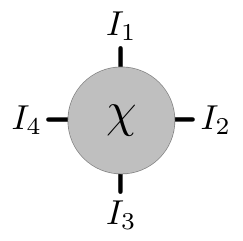}}
 =\chi^{I_1I_2I_3I_4}\in \vast\{ 
& \raisebox{-.45\height}{\includegraphics[width=2cm]{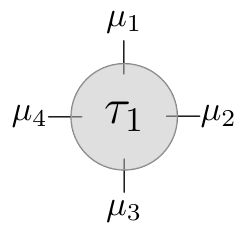}}
,
  \raisebox{-.45\height}{\includegraphics[width=2cm]{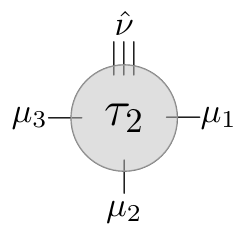}}
,
\raisebox{-.45\height}{\includegraphics[width=2cm]{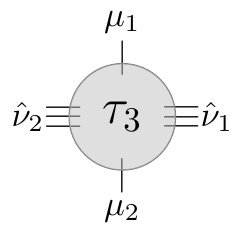}}   \raisetag{-3.2\normalbaselineskip}
\\
 &
\raisebox{-.45\height}{\includegraphics[width=2cm]{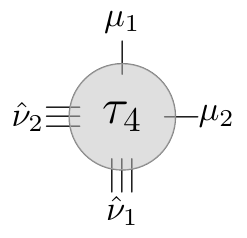}}
,\raisebox{-.45\height}{\includegraphics[width=2cm]{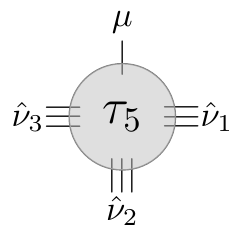}}
,\raisebox{-.45\height}{\includegraphics[width=2cm]{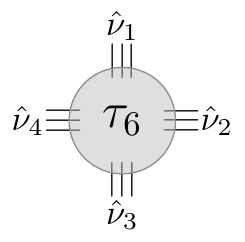}} \nonumber
\vast \} 
\end{align}
The leftmost diagram $\chi$ is of generic type. 
On the other hand, not only do the diagrams in the list indicate the number of 
points (the total number of bars
transversal to the circle), they also state how these are grouped: 
normal indices $\mu_i=1,\ldots, 4$ being a single line 
and multiple $\hat{\nu}_i$ a triple line. 
Although they are in fact ordinary chord diagrams,
they cannot have contractions between 
the grouped lines  due to the strict increasing ordering of their indices.
\par 
If a diagram $\chi$ accepts  
a decoration of the type $\tau_i$ in the LHS of \eqref{eq:pelos}, up to rotation, we symbolically write $\chi\in \tau_i$.
In the $\tau$-types of the RHS, however, 
$I_1$ corresponds strictly to the upper 
index of the respective diagram in the list, $I_2$ to the
rightmost, and so on clockwise. One can sum over the $\tau_i$ classes ---
since we are interested in products of $\chi^{I_1\ldots I_4}$
with traces (which are cyclic) and products of two traces (which
are summed over all the subsets of $\{1,2,3,4\}$, 
see eq. \eqref{eq:afrak} for details)--- and in order to do so, one 
has to include symmetry factors, namely $\{  1,4,2,4,4,1\}$ in that order. 
All the chord diagrams contributing to  $\Tr(D^4)$
in $d=4$ are then covered by 
\begin{align} \label{eq:CDsplit}
\big(\sum_{\chi \in \tau_4}  + 
4 \sum_{\chi\in \tau_2 } +
2 \sum_{\chi \in \tau_3} + 
4 \sum_{\chi \in \tau_4} +
4 \sum_{\chi \in \tau_5} +
 \sum_{\chi \in \tau_6} \big)\,.
\end{align}
A cross check is that the symmetry factors add up to $16$ and, since each
(multi)index in the list \eqref{eq:pelos} can take four values, 
the number of all diagram index decorations is $4^4 \times 16= 8^4$. 
Which of them survives is shown next:
\begin{prop}\label{thm:deltas}
Let $g=\diag (e_1,e_2,e_3,e_4)$ denote the quadratic
form given by the signature $(p,q)$.
  For any $\mu,\nu,\mu_1,\mu_2,\mu_3,\mu_4, \nu_1,\nu_2,\nu_3,\nu_4=1,\ldots, 4$
 the following holds for each one of 
 the diagrams $\chi$ of the type $\tau_i$ ---defined by eq. \eqref{eq:pelos}--- indicated to the right of each equation:
\begin{align} \tag{$\tau_1$}\label{eq:tau1}
\chi^{\mu_1\mu_2\mu_3\mu_4} & = e_{\mu_1}e_{\mu_3}  \delta_{\mu_1}^{\mu_2}  
\delta_{\mu_3}^{\mu_4}
-e_{\mu_1}e_{\mu_2}  \delta_{\mu_1}^{\mu_3}
\delta_{\mu_2}^{\mu_4}+
e_{\mu_1}e_{\mu_3}  \delta_{\mu_1}^{\mu_4}
\delta_{\mu_2}^{\mu_3}\,,\\[4pt]
\chi^{\hat\nu\mu_1\mu_2\mu_3}&= (-1)^{|\sigma|+1} e_{\mu_1}  
 e_{\mu_2}
  e_{\mu_3}\delta_{\nu\mu_1\mu_2\mu_3}\,,  \tag{$\tau_2$}\label{eq:tau2}
\end{align}
where $\sigma=\sigma(\boldsymbol \mu,\nu):=  \binom{\alpha_1 \alpha_2 \alpha_3}{\mu_1 \mu_2 \mu_3} \in \mtr{Sym }\{\alpha_1,\alpha_2,\alpha_3\}$,
with $\hat \nu = (\alpha_1,\alpha_2,\alpha_3)$ ordered as
$\alpha_1 < \alpha_2 < \alpha_3$. Also 
\begin{align} \label{eq:aboveof}
\delta_{\alpha\mu\nu\rho}=
\begin{cases}
1 &\mbox{when $\{\alpha,\mu,\nu,\rho\}=\Delta_4$}
 \\ 0&  \mbox{otherwise}
\end{cases}
\end{align}
(i.e. 
$\delta_{\alpha\mu\nu\rho}$ 
is the Levi-Civita symbol in 
absolute value). Whenever not all the four indices $\mu_1,\mu_2,\nu_1,\nu_2$ agree,
\begin{align}
\chi^{\mu_1\hat\nu_1\mu_2\hat\nu_2}&= -(-1)^{\mu_1+\mu_2}
e_1e_2e_3e_4 
(\delta_{\mu_1}^{\nu_1}\delta_{\mu_2}^{\nu_2}
-
\delta_{\mu_1}^{\nu_2}\delta_{\mu_2}^{\nu_1})
\mp
e_{\mu_1} \Big(\displaystyle\prod_{\alpha\neq \nu_1} e_\alpha \Big)
\delta_{\mu_1}^{\mu_2}\delta_{\nu_1}^{\nu_2}\,, \label{eq:tau3}
\tag{$\tau_3$}
 \\ \tag{$\tau_4$} \label{eq:tau4}
\chi^{\mu_1\mu_2\hat\nu_1\hat\nu_2}&= +
(-1)^{\mu_1+\mu_2}e_1e_2e_3e_4 
(\delta_{\mu_1}^{\nu_1}\delta_{\mu_2}^{\nu_2}
-
\delta_{\mu_1}^{\nu_2}\delta_{\mu_2}^{\nu_1}
)
\mp
e_{\mu_1} \Big(\displaystyle\prod_{\alpha\neq \nu_1} e_\alpha \Big)
\delta_{\mu_1}^{\mu_2}\delta_{\nu_1}^{\nu_2} \,.
\end{align}
(see below for the sign choice). 
Otherwise these two diagrams 
satisfy $\chi^{\mu\hat\mu\mu\hat\mu}=e_1e_2e_3e_4 $
and $\chi^{\mu\mu\hat\mu\hat\mu}=-e_1e_2e_3e_4 $. Moreover, 
letting $\sigma=\sigma(\boldsymbol \nu,\mu)=
 \binom{\theta_1 \theta_2 \theta_3}{\nu_1 \nu_2 \nu_3} \in \mtr{Sym }\{\theta_1,\theta_2,\theta_3\}$,
with $\hat \nu = (\theta_1,\theta_2,\theta_3)$ ordered as
$\theta_1 < \theta_2 < \theta_3$, one has
\begin{align} 
\chi^{\mu\hat\nu_1\hat\nu_2\hat\nu_3}&= (-1)^{|\sigma|+1} e_{\nu_1}  
 e_{\nu_2}
  e_{\nu_3}\delta_{\mu\nu_1\nu_2\nu_3}\,,  \tag{$\tau_5$} \label{eq:tau5}
\end{align}
and, finally, if $\chi \in \tau_6$
\begin{align}
\tag{$\tau_6$}
\chi^{\hat{\nu}_1\hat{\nu}_2\hat{\nu}_3\hat{\nu}_4} & = \pm \big[e_{{\nu}_1}e_{{\nu}_3}  \delta_{{\nu}_1}^{{\nu}_2}  
\delta_{{\nu}_3}^{{\nu}_4}
-e_{{\nu}_1}e_{{\nu}_2}  \delta_{{\nu}_1}^{{\nu}_3}
\delta_{{\nu}_2}^{{\nu}_4}+
e_{{\nu}_1}e_{{\nu}_3}  \delta_{{\nu}_1}^{{\nu}_4}
\delta_{{\nu}_2}^{{\nu}_3}\big]\,. \label{eq:tau6}
\end{align} 
The upper signs in equations \eqref{eq:tau3}, \eqref{eq:tau4} 
and \eqref{eq:tau6} are taken if $\chi$ has minimal crossings. 
\end{prop}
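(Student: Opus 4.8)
The whole statement reduces to unwinding the definition \eqref{eq:tensorchi} of the tensor $\chi^{I_1I_2I_3I_4}$ under two structural inputs. \emph{First}, the metric is diagonal, so $g^{\mu\nu}=e_\mu\delta^{\mu\nu}$ (no summation), which is what turns every surviving chord into a Kronecker $\delta$ carrying an $e$-prefactor. \emph{Second}, each multi-index $I_i$ --- be it a single index $\mu$ or a triple $\hat\nu$ --- has strictly increasing, hence pairwise distinct, entries; therefore $g^{\mu_a\mu_b}=0$ whenever the points $a$ and $b$ belong to the same $I_i$-block. This second observation is the real engine: it annihilates every chord diagram possessing an intra-block chord and thereby both selects the admissible diagrams (all chords run between distinct blocks) and forces the index coincidences recorded by the $\delta$'s and the $\delta_{\alpha\mu\nu\rho}$ symbols. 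The remaining sign $(-1)^{\mathrm{cr}(\chi)}$ is handled by the elementary fact that the number of crossings of a matching between linearly ordered points equals its number of inversions, so $(-1)^{\mathrm{cr}(\chi)}=\mathrm{sgn}(\sigma)$ for the matching permutation $\sigma$; this is where the permutations $\sigma(\boldsymbol\mu,\nu)$ and $\sigma(\boldsymbol\nu,\mu)$ enter.

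With these preparations I would dispatch the six types according to their point-count $2n\in\{4,6,8,10,12\}$. The base case \eqref{eq:tau1} is immediate: enumerate the three $4$-point diagrams $(12)(34),(13)(24),(14)(23)$ with crossing numbers $0,1,0$ and substitute the diagonal metric. For \eqref{eq:tau2} (one triple against three singles, $6$ points) the distinctness rule forces each of $\mu_1,\mu_2,\mu_3$ to be contracted against a distinct leg of $\hat\nu=(\alpha_1,\alpha_2,\alpha_3)$, so the diagram is nonzero exactly when $\{\nu,\mu_1,\mu_2,\mu_3\}=\Delta_4$, i.e. $\delta_{\nu\mu_1\mu_2\mu_3}=1$, and the induced matching is precisely $\sigma=\binom{\alpha_1\alpha_2\alpha_3}{\mu_1\mu_2\mu_3}$, whose sign together with the parity shift from the cyclic layout gives $(-1)^{|\sigma|+1}$. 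The $8$-point types \eqref{eq:tau3} and \eqref{eq:tau4} split into exactly two admissible topologies: either the two single legs contract into different triples (the remaining $2+2$ legs crossing over, yielding the antisymmetric $e_1e_2e_3e_4(\delta\delta-\delta\delta)$ term), or the two singles contract with each other and the two triples contract blockwise, forcing $\delta^{\mu_1}_{\mu_2}\delta^{\nu_1}_{\nu_2}$ and the collapsed $e$-prefactor $e_{\mu_1}\prod_{\alpha\neq\nu_1}e_\alpha$; the degenerate all-equal cases $\chi^{\mu\hat\mu\mu\hat\mu}$ and $\chi^{\mu\mu\hat\mu\hat\mu}$ are checked separately.

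For the triple-heavy types \eqref{eq:tau5} and \eqref{eq:tau6} the cleanest route is the $d=4$ Hodge-type duality coming from the chirality $\gamma=(-\ii)^{s(s-1)/2}\gamma^1\gamma^2\gamma^3\gamma^4$: one has $\Gamma^{\hat\nu}=\eta_\nu\,\gamma\gamma^\nu$ for a $\nu$-dependent scalar $\eta_\nu$ (a reordering sign times $e_\nu$ times a power of $\ii$), and since $\gamma$ anticommutes with each $\gamma^\nu$ and $\gamma^2=1$, a product $(\gamma\gamma^{\nu_1})\cdots(\gamma\gamma^{\nu_k})$ collapses to $\pm\gamma^{\nu_1}\cdots\gamma^{\nu_k}$. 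This identifies the trace underlying $\tau_6$ with that underlying $\tau_1$ and the one underlying $\tau_5$ with that underlying $\tau_2$, explaining why \eqref{eq:tau5} and \eqref{eq:tau6} mirror \eqref{eq:tau2} and \eqref{eq:tau1}; the per-diagram signs are then pinned down by the same crossing-equals-inversions bookkeeping as above. One finally verifies the consistency cross-check already built into the setup, namely that the symmetry factors in \eqref{eq:CDsplit} sum to $16$ so that all $8^4$ decorations are accounted for.

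I expect the genuine difficulty to be the sign discipline in \eqref{eq:tau3}, \eqref{eq:tau4} and \eqref{eq:tau6}, rather than the enumeration of admissible diagrams. Concretely, the hard part is to verify that the convention ``upper sign when $\chi$ has minimal crossings'' is exactly what the crossing count delivers after the duality rewriting: for the $12$-point diagrams of $\tau_6$ one must track six chords and reconcile the crossings of the $12$-point picture with those of its dual $4$-point picture, and for $\tau_3$ versus $\tau_4$ one must confirm that transposing the cyclic positions of the two single legs flips precisely the overall sign while leaving the two-term structure intact. Everything else is routine substitution of $g^{\mu\nu}=e_\mu\delta^{\mu\nu}$ into the admissible matchings.
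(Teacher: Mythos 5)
Your proposal is correct in substance and, for types $\tau_1$--$\tau_4$, follows essentially the same route as the paper: identify the admissible matchings forced by the diagonal metric $g^{\mu\nu}=e_\mu\delta^{\mu\nu}$ and by the pairwise distinctness of indices within each block, then track the crossing signs case by case (the paper's $\tau_3$ analysis splits into the three coincidence patterns $\mu_1=\mu_2,\nu_1=\nu_2$; $\mu_1=\nu_1,\mu_2=\nu_2$; $\mu_1=\nu_2,\mu_2=\nu_1$, which are exactly your two topologies). Where you genuinely diverge is $\tau_5$ and $\tau_6$: the paper proves these by direct combinatorics (for $\tau_5$, an explicit case analysis ruling out any coincidence among $\mu,\nu_1,\nu_2,\nu_3$; for $\tau_6$, a count of the six chord labels showing all $e$-factors square away except $e_\mu e_\nu$), whereas you invoke the Clifford duality $\Gamma^{\hat\nu}\propto\gamma\gamma^\nu$ to reduce the underlying traces to those of $\tau_2$ and $\tau_1$. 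That is an attractive shortcut, and the paper itself exploits the resulting $\tau_2\leftrightarrow\tau_5$, $\tau_1\leftrightarrow\tau_6$ symmetry downstream (in the proof of Proposition \ref{thm:Haupt}), but be aware of one point: the duality identifies \emph{traces}, i.e. sums over all chord diagrams, while Proposition \ref{thm:deltas} is a statement about \emph{individual} diagrams $\chi$, and Remark \ref{thm:Nested} shows this distinction matters (for $\chi^{\hat2\hat4\hat2\hat4}$ there are nine distinct $12$-point diagrams with the same decoration, five with positive and four with negative sign, and the later argument uses that their sum collapses onto the minimal-crossing representative). So for $\tau_5$ and $\tau_6$ you still owe a per-diagram argument both for the $\delta$-structure (easily supplied: the index $\alpha$ occurs $4-\#\{i:\nu_i=\alpha\}$ times among the twelve legs, and a perfect matching of equal indices forces each multiplicity to be even, whence the pairings) and for the sign of each individual diagram --- which your closing paragraph correctly identifies as the real work. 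With that supplement the argument is complete.
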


The minimality condition on the crossings, 
assumed for the $\tau_{3,4,6}$ classes, is meant to
shorten the proof. Exactly for those classes, the spacetime 
indices do not necessarily determine a unique diagram
by assuming that it does not vanish. 
This requirement can be left out, and in that case eq. 
\eqref{eq:tau6} should have a global sign $\pm$ 
that depends on the diagram; in the $\tau_{3,4}$
cases \eqref{eq:tau3} and \eqref{eq:tau4} the term $
e_{\mu_1} \big(\prod_{\alpha\neq \nu_1} e_\alpha \big)
\delta_{\mu_1}^{\mu_2}\delta_{\nu_1}^{\nu_2}$
would undergo a diagram-dependent sign change.
However, as we will see, these will be `effectively'
replaced by the minimal-crossing diagram, so the simplified claim suffices.

\begin{proof}
The proof is by direct, even if in cases lengthy,
computation. One first 
seeks the conditions one has to impose 
on the indices for a diagram not to vanish, typically 
in terms of 
Kronecker deltas, and then one computes their coefficients
in terms of the quadratic form $g=\diag (e_1,e_2,e_3,e_4)$. 
We order the proof
by similarity of the statements: 
\begin{itemize}
\itemB \textit{Type} $\tau_1$. The $\tau_1$-type diagram is well-known, 
for it is the only one here without any single multi-index (see the end of Sec. \ref{sec:gammas}).

\itemB \textit{Type}  $\tau_6$.
Notice that at least
two pairings of the $\nu_l$'s are needed 
for the diagram not to vanish: $\nu_i =\nu_j=:\nu$
and $\nu_t=\nu_r=:\mu$ with $\{i,j,t,r\} =\Delta_4$. Therefore, 
the Kronecker deltas are placed precisely as   
for the $\tau_1$ type. The computation 
of their $e$-factors is a matter of counting:
for each chord joining two points labeled 
with, say, $\alpha$ there is an $e_\alpha$ factor. 
There are 6 such chords, labeled 
by $ \{e_\alpha \}_{\alpha\neq \nu}  \cupdot  \{e_\rho \}_{\rho\neq \mu}    $,
for in $\hat \nu$ all the indices  
$\alpha\neq \nu$ appear and similarly for $\hat \mu$. 
Thus, if $\nu \neq \alpha \neq \mu $, 
$e_\alpha$ appears twice, so $e_\alpha^2=1$.
The two remaining chord-labels are those appearing 
either in 
$ \{e_\alpha \}_{\alpha\neq \nu} $ or in 
$ \{e_\rho \}_{\rho\neq \mu}    $. Thus the factor is $e_\mu e_\nu$ 
and we only have to compute the sign:
the $\hat \mu\hat\nu\hat\mu\hat\nu $ configuration with 
minimal crossings has sign $(-1)^5$. For 
$ \hat \mu\hat\mu\hat\nu\hat\nu $
and $ \hat \mu\hat\nu \hat\nu\hat\mu$
the crossings yield a positive sign $(-1)^6$.

\itemB \textit{Type} $\tau_2$. 
Since $\hat \nu \in \Lambda_{d=4}^-$, 
all the three indices $\alpha_i$ in 
$\hat \nu = (\alpha_1,\alpha_2,\alpha_3)$ 
different. For this diagram not to vanish,
the set equality
$\{\alpha_1,\alpha_2,\alpha_3\}=\{\mu_1,\mu_2,\mu_3\}$ 
should hold,
i.e. a permutation $\sigma \in \{\mu_1,\mu_2,\mu_3\}$
with $\alpha_{\sigma(i)}=\mu_{i}$ is needed. This says first, 
that $\nu$ cannot be any of $\mu_i$ (whence 
the $\delta_{\nu\mu_1\mu_2\mu_3}$) and 
second, that each of the three chords yields a factor 
$e_{\mu_i} $ with a sign $ (-1)^{|\sigma|+1}$. The
 extra minus is due to the convention to place the indices,
 e.g.  for $\hat 4   123$, the numbers $123123$ are put cyclicly; 
this permutation $\sigma$ is the identity, 
 which nevertheless looks like the `V diagram'
 in eq. \eqref{eq:Roman}.

\itemB \textit{Type} $\tau_5$. Suppose 
that two indices of a non-vanishing diagram agree. Then 
either $\mu=\nu_i$ or (wlog)
$\nu_1=\nu_2$. 
In the first case notice that 
in $\mu$ and $\hat \nu_i$ the indices $1,2,3,4$ all
appear  listed. This implies 
for the remaining two 
multi-indices have to be 
of the form 
$\hat \nu_l=(***)$  
and $\hat \nu_ m=(1*4)$ or
$\hat \nu_l=(1**)$  
and $\hat \nu_ m=(**4)$  
where $\{i,l,m\}=\{1,2,3\}$ and $*\in \Delta_4$.
\begin{itemize}
 \itemW In the first case, 
 $\hat \nu_l=(***)$  
and $\hat \nu_ m=(1*4)$ 
the numbers $\rho,\rho,2,3$ 
(for some $\rho \in \Delta_4$)
should fill the placeholders $*$. 
Then $\rho$ has to appear in both 
$\hat \nu_l$  
and $\hat \nu_ m$, but no
value of $\rho$ fulfills this
if the increasing ordering
is to be preserved, hence we are only left with next case

\itemW If $\hat \nu_l=(1**)$  
and $\hat \nu_ m=(**4)$, say  
$\hat \nu_l=(1wx)$  
and $\hat \nu_ m=(yz4)$,
then $\{w,x\}$ and 
$\{y,z\}$ are the sets   
$\{2,\rho\}$ and $\{3,\rho\}$ (not necessarily
in this order),
for some $\rho$. Clearly, $\rho$ cannot 
be either $1$ or $4$ since each appears once 
in one multi-index. But $\rho=2,3$ would
force also a repetition of indices in at least
one multi-index, 
which contradicts $\hat \nu_ m, \hat \nu_ l\in \Lambda_4^-$. 
\end{itemize}
This contradiction implies that if
$\mu$ equals some $\nu_i$ then the diagram vanishes.
By a similar analysis one sees that a repetition 
$\nu_i=\nu_j$ implies also that the diagram is zero.
Hence the diagram
is a multiple of $\delta_{\mu\nu_1\nu_2\nu_3}$.
Thereafter it is easy to compute the $e$-coefficients
following a similar argument to the given for 
the type $\tau_2$ diagram to 
arrive at the sign $(-1)^{|\lambda|+1}$
for $\lambda$ a permutation of $\{\nu_1,\nu_2,\nu_3\}$.

\itemB \textit{Type} $\tau_3$.
If no indices coincide then 
one gets two different 
numbers $i,j\in \Delta_4$
appearing exactly once in the list $\mu_1,\hat \nu_1,\mu_2,\hat\nu_2$. 
Since these cannot be matched by a chord,
a non-zero diagram 
requires repetitions. 
\begin{itemize}
 \itemW If $\mu_1=\mu_2$
 then $\nu_1=\nu_2$.
 Since by hypothesis 
 the four cannot agree
 the minimal crossings for this 
 configuration is seen to be one,
 so the sign is $(-1)$. The $e$-factors
 are: $e_{\mu_1}$ 
 for the chord between
 $\mu_1$ and $\mu_2$,
 the product of three 
 $\prod_{\alpha\neq \nu_1 }e_\alpha$,  
 for the three chords between $\hat \nu_1 $ and 
 $\hat \nu_2$.
 This accounts for 
 $
- 
e_{\mu_1} \Big(\displaystyle\prod_{\alpha\neq \nu} e_\alpha \Big)\delta_{\mu_1}^{\mu_2}\delta_{\nu_1}^{\nu_2}$. 

\itemW If $\mu_1=\nu_1$, then 
again $\mu_2=\nu_2$ in order for 
the indices listed in
$\mu_1,\hat \nu_1,\mu_2,\hat\nu_2$
to appear precisely twice. 
Since $\mu_1=\nu_1$ 
implies that $\mu_1$ 
does not appear in $\hat \nu_1$,
there is one chord 
(thus a factor $e_\alpha$)
for each $\alpha\in\Delta_4$.
After straightforward (albeit neither 
brief nor very illuminating)
computation one finds the sign 
$ (-1)^{\mu_1+\mu_2+1} $.
All in all, one gets 
$  (-1)^{\mu_1+\mu_2+1}
e_1e_2e_3e_4 
\delta_{\mu_1}^{\nu_1}\delta_{\mu_2}^{\nu_2}$.

\itemW If $\mu_1=\nu_2$, then 
again $\mu_2=\nu_1$. But this 
is the same as the last point
with $\nu_1 \leftrightarrow \nu_2$.
This accounts for 
$(-1)^{\mu_1+\mu_2}
e_1e_2e_3e_4 \delta_{\mu_1}^{\nu_2}\delta_{\mu_2}^{\nu_1}$.
\end{itemize}

\itemB \textit{Type} $\tau_4$. Mutatis mutandis from 
the type $\tau_3$.  \qedhere

\end{itemize}

\end{proof}

\begin{remark}\label{thm:Nested}
We just used the `minimal' number of crossings 
for diagrams with a more than two-fold index repetition. For 
instance, for the four point diagram 
evaluated in $\chi^{1111}=1$
there might be one crossing or no crossings, but crucially
two diagrams have no crossing so  $\sum_\chi \chi^{1111} \times \mbox{traces} =(1-1+1) \times \mbox{traces} $. This reappears in the computation of 
twelve-point diagrams in a nested fashion, as shown in
Figure \ref{fig:taus}. If we pick 
$\hat 2\hat 4\hat 2\hat 4$ as configuration
of the indices, then imposing 
$\chi^{\hat 2\hat 4\hat 2\hat 4} \neq 0 $
does not determine $\chi\in \CD{6}$:
the lines joining the two $2$-indices 
and the two $4$-indices diagonally are 
mandatory, but for the four $1$-indices and four $3$-indices 
one can choose at the blobs tagged  
with $\phi,\psi$ one of three 
possibilities (shown in the diagrams
of eq. \eqref{eq:chitensors} as $\theta,\xi,\zeta$).
Then there are 9 possible sign values. 
Again, it is essential  
that there are 5 positive and 4 negative
global signs in 
$\{\chi^{\hat 2\hat 4\hat 2\hat 4} (\phi,\psi)\}_{\psi,\phi \in \{\theta,\xi,\zeta\} }$ 
and the sum $\sum_\chi \chi^{\hat 2\hat 4\hat 2\hat 4} (\mbox{traces})$ can be replaced by the diagram with minimal crossings 
(of global positive sign).
\end{remark} 
\begin{figure}
\includegraphics[width=7.2cm]{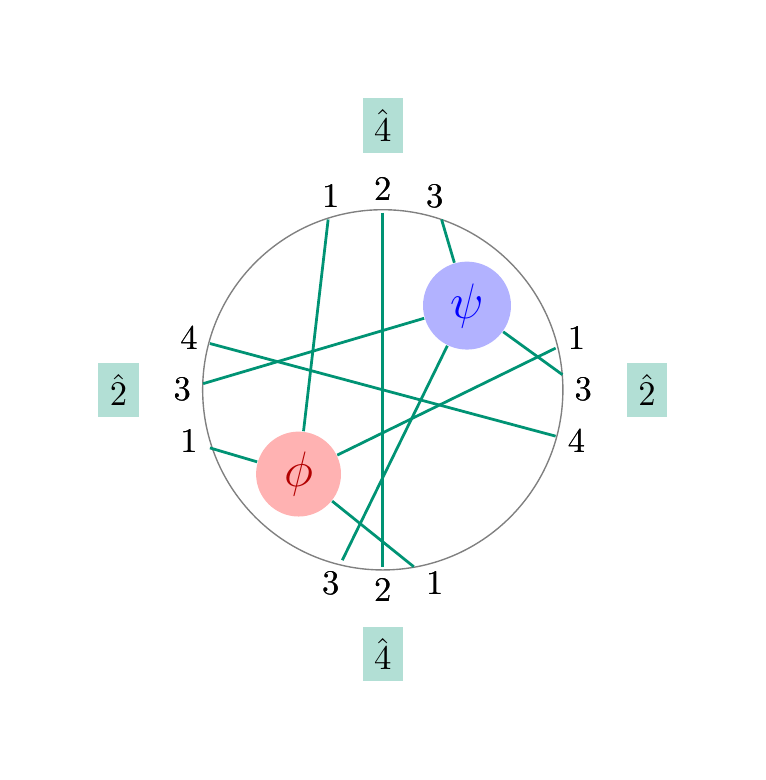}
\vspace{-4ex}
\caption{The diagram $\chi^{\hat 2\hat 4\hat 2\hat 4} (\phi,\psi)$
shows the nested structure of two $\tau_1$-type
diagrams, $\phi$ and $\psi$, in a $\tau_6$-type referred to in Remark \ref{thm:Nested}
\label{fig:taus}}
\end{figure}
As a last piece of preparation, we need to 
determine the signs $e_{I_1}e_{I_2} e_{I_3} e_{I_4}$ 
for each $\tau_i$-type. These 
turn out to be constant and fully
determined by the $\tau_i$-type:
\begin{claim}\label{thm:signos}
 Assuming that for $I_1,I_2,I_3,I_4\in \Lambda_{d=4}^-$
 the tensor 
$ \chi^{I_1I_2I_3I_4} $ does not vanish, then  
$e_{I_1}e_{I_2} e_{I_3} e_{I_4}$ reads
in each case%
\begin{align}
e_{\mu_1}e_{\mu_2} e_{\mu_3} e_{\mu_4} & \equiv + 1  \label{eq:uno} \\
e_{\hat \nu }e_{\mu_1} e_{\mu_2} e_{\mu_3} & \equiv -1  \label{eq:dos}\\
e_{\mu_1}e_{\mu_2} e_{\hat\nu_1} e_{\hat\nu_2} & \equiv +1 \label{eq:tres} \\ 
e_{\mu}e_{\hat\nu_1} e_{\hat\nu_2} e_{\hat\nu_3} & \equiv -1\label{eq:cuatro}\\
e_{\hat \nu_1}e_{\hat \nu_2} e_{\hat \nu_3} e_{\hat \nu_4} & \equiv +1\label{eq:cinco}
\end{align}
\end{claim}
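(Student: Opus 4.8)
The plan is to reduce the entire claim to a single parity count. First I would pin down each individual sign. The self-adjointness analysis of Section~\ref{sec:CharacDirac} (summarised in Table~\ref{tab:HorL}) forces, for each $I$, the gamma part $\Gamma^{I}$ and the operator part $k_{I}$ to share the same (anti-)hermiticity; concretely $k_{I}$ is of $h$-type, i.e. $e_{I}=+1$, exactly when $\Gamma^{I}$ is Hermitian, and of $l$-type, $e_{I}=-1$, exactly when $\Gamma^{I}$ is anti-Hermitian. Thus $e_{I}$ is simply the hermiticity sign $(\Gamma^{I})^{*}=e_{I}\,\Gamma^{I}$, and eq.~\eqref{eq:autoadjuncion} turns this into the clean identity
\begin{align*}
e_{I}=(-1)^{u(I)+r(I)-1},
\end{align*}
where $u(I)$ counts the time-like (anti-Hermitian) gamma matrices in $\Gamma^{I}$ and $|I|=2r(I)-1$. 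For the four multi-indices this gives
\begin{align*}
e_{I_{1}}e_{I_{2}}e_{I_{3}}e_{I_{4}}=(-1)^{\,U+\sum_{i}(r(I_{i})-1)},\qquad U:=\textstyle\sum_{i}u(I_{i}).
\end{align*}
In $d=4$ a single index has $r=1$ and a triple (hatted) index has $r=2$, so $\sum_{i}(r(I_{i})-1)$ equals the number $k$ of hatted multi-indices among $I_{1},\dots,I_{4}$, and the product collapses to $(-1)^{U+k}$.

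The crux is to show that $U$ is even whenever $\chi^{I_{1}I_{2}I_{3}I_{4}}\neq0$. This is immediate from the definition \eqref{eq:tensorchi}: since $g=\diag(e_{1},e_{2},e_{3},e_{4})$ is diagonal, a chord joining points $i,j$ contributes $g^{\mu_{i}\mu_{j}}$, which is nonzero only if the decorating spacetime values coincide. Hence a nonvanishing decorated diagram matches all $2n$ points into chords carrying equal values, so each value of $\Delta_{4}$ occurs an even number of times across $I_{1},\dots,I_{4}$. In particular the time-like values each occur an even number of times, so $U$ is a sum of even numbers and is itself even. Therefore $e_{I_{1}}e_{I_{2}}e_{I_{3}}e_{I_{4}}=(-1)^{k}$.

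Finally I would read off the five cases by their number $k$ of hatted indices: $k=0$ for $\tau_{1}$, $k=1$ for $\tau_{2}$, $k=2$ for $\tau_{3}$ and $\tau_{4}$, $k=3$ for $\tau_{5}$, and $k=4$ for $\tau_{6}$, producing $+1,-1,+1,-1,+1$ and hence eqs.~\eqref{eq:uno}--\eqref{eq:cinco}. I do not expect a genuine obstacle: the statement is pure parity bookkeeping once the identity $e_{I}=(-1)^{u(I)+r(I)-1}$ is available. The only point needing a little care is that the argument must be insensitive to whether one counts spatial or time-like indices; since the two counts add to the even total $2n=\sum_{i}|I_{i}|$, the parity of $U$ is unchanged by that choice, and the diagonal-metric argument forces evenness in either convention, so the conclusion is robust.
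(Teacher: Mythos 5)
Your proof is correct, and it takes a genuinely different and more economical route than the paper's. The paper proves Claim \ref{thm:signos} case by case: it invokes the structural consequences of non-vanishing established in Proposition \ref{thm:deltas} (e.g.\ that $\chi^{\hat\nu\mu_1\mu_2\mu_3}\neq 0$ forces $\{\mu_1,\mu_2,\mu_3\}$ to exhaust $\hat\nu$, so $u(\mu_1)+u(\mu_2)+u(\mu_3)=u(\hat\nu)$), splits the $\tau_3/\tau_4$ case into the sub-cases $\chi^{\mu\mu\hat\mu\hat\mu}$, $\chi^{\mu\mu\hat\nu\hat\nu}$, $\chi^{\mu\nu\hat\mu\hat\nu}$, and carries out a separate explicit sign computation for $\tau_5$. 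You instead isolate the single identity $e_I=(-1)^{u(I)+r(I)-1}$ — which is exactly the content of Table \ref{tab:HorL} combined with eq.~\eqref{eq:autoadjuncion}, though the paper never writes it as one closed formula — so that the product collapses to $(-1)^{U+k}$ with $k$ the number of hatted multi-indices, and then observe that diagonality of $g$ in eq.~\eqref{eq:tensorchi} forces every spacetime value, hence every anti-Hermitian decoration, to occur an even number of times on a non-vanishing diagram, making $U$ even. Your closing remark about the spatial/time-like labelling ambiguity is worth keeping, since the paper itself is not entirely consistent on which family it calls ``spatial''; as you note, the two counts sum to the even number $2n$, so the parity argument is convention-independent. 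What your approach buys is uniformity and generality: it dispenses with Proposition \ref{thm:deltas} entirely (using only the hypothesis $\chi^{I_1\cdots I_4}\neq 0$), treats all five equations at once, and extends verbatim to any even dimension $d$ and any power $2t$, giving $e_{I_1}\cdots e_{I_{2t}}=(-1)^{\sum_i(r(I_i)-1)}$; the paper's case analysis, by contrast, stays closer to the explicit index structure that is needed anyway in the subsequent evaluation of $\mathcal S_4$ and $\mathcal B_4$.
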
 

\begin{proof}
To obtain these relations one needs 
Proposition \ref{thm:deltas}.
The first and last cases are obvious,
since $\chi^{I_1I_2I_3I_4}\neq 0$
requires in each case a repetition 
$e_{\mu_i}^2e_{\mu_j}^2=1$ 
or $e_{\hat \nu_i}^2e_{\hat \nu_j}^2=1$. \par
For the second, $
e_{\hat \nu }e_{\mu_1} e_{\mu_2} e_{\mu_3} =
(-1)^{u(\hat \nu)+\lfloor 3/2\rfloor+ \sum_i u(\mu_i)}$,
by Appendix \ref{sec:App}.
The non-vanishing of $\chi^{\hat \nu  \mu_1 \mu_2 \mu_3}$
implies that $\hat\nu$ is
the multi-index containing $\mu_1,\mu_2,\mu_3$,
so $u(\mu_1)+ u(\mu_2)+u(\mu_3)=u(\hat\nu)$ 
and  eq. \eqref{eq:dos} follows.\par 
For the third identity, if $\chi^{\mu_1\mu_2\hat\nu_1\hat\nu_2}$ (and thus 
$\chi^{\mu_1\hat\nu_1\mu_2\hat\nu_2}$) 
does not vanish, then it
 is either of the form $\chi^{\mu\mu\hat\mu\hat\mu}$, 
 $\chi^{\mu\mu\hat\nu\hat\nu}$ or
   $\chi^{\mu\nu\hat\mu\hat\nu}$ ($\mu\neq \nu$). Only
   for the latter one needs a non-trivial check: 
   $e_{\mu}e_{\nu} e_{\hat\mu} e_{\hat\nu}=e_\mu\cdot (-1)^{1+u(\Delta_4-\{\mu\})} e_\nu \cdot {(-1)}^{1+u(\Delta_4-\{\nu\})}
   =e_\mu e_\nu (-1)^{2u(\Delta_4-\{u,v\})} (-1)^{u(\mu)+u(\nu)}$, 
   and since $e_\mu=(-1)^{u(\mu)}$,
     $e_{\mu}e_{\nu} e_{\hat\mu} e_{\hat\nu}=1$. In either case,
   eq. \eqref{eq:tres} follows.
   \par 
 We are left with the fourth identity. By assumption 
  all the indices $\nu_j\neq \nu_i\neq \mu$ 
  if $i\neq j$. Then by eq. \eqref{eq:autoadjuncion}
 \begin{align*}
 e_{\mu}e_{\hat\nu_1} e_{\hat\nu_2} e_{\hat\nu_3}&=e_\mu
  \cdot  (-1)^{3\times \lfloor 3/2\rfloor  + u(\Delta_4 \setminus \{\nu_1\})+
   u(\Delta_4 \setminus \{\nu_2\})
   + u(\Delta_4 \setminus \{\nu_3\})} \\
   & = -e_\mu(-1)^{3u(\mu) }  (-1)^{2u(\nu_1)} (-1)^{2u(\nu_2)}
    (-1)^{2u(\nu_3)}=-1\end{align*}
From the first to the second line 
we used $\Delta_4-\{\nu_1\}=\{\mu,\nu_2,\nu_3\}$,
and similar relations. 
\end{proof}

\subsubsection{Main claim} 
With help of these two results, we state 
the main one. We recall that the definition of the permutation ${\sigma(\boldsymbol \nu,\mu)}$, appearing next, is given in eq. \eqref{eq:aboveof}.
\begin{prop}\label{thm:Haupt}
For a 4-dimensional fuzzy geometry of signature $(p,q)$, 
the purely quartic spectral action $\frac14\Tr(D^4)=N \mtc S_4 + \mtc B_4$ 
is given by 
\begin{align}  \label{eq:NCPolynomial4}
\mtc S_4  &=  \TrN \Big\{ 
2 \sum_\mu  \Km^4 +  \Xm^4  
\\
&
+
4\sum_{\mu<\nu } e_\mu e_\nu (2 \Km^2\Kn^2 +2 \Xm^2\Xn^2  
-\Km\Kn\Km\Kn -\Xm\Xn\Xm\Xn ) \nonumber  
\\ &  \nonumber  
-\sum_{\alpha,\beta,\mu,\nu} 
\delta_{\alpha\beta\mu\nu}
e_\alpha e_\beta 
 \big[ (K_\mu X_\nu)^2+ 2K_\mu^2 X_\nu^2 \big]+ 2(-1)^q \sum_\mu\big[(K_\mu X_\mu)^2- 2K_\mu^2 X_\mu^2  \big]  \\
&
+8(-1)^{q+1} \sum_{\mu,\boldsymbol \nu} 
(-1)^{|\sigma(\boldsymbol \nu,\mu)|} \delta_{\mu\nu_1\nu_2\nu_3} 
e_\mu( X_\mu K_{\nu_1} K_{\nu_2} K_{\nu_3}
+  K_\mu X_{\nu_1} X_{\nu_2} X_{\nu_3} )
\Big\}\, , \nonumber
\end{align}
 and 
\begin{align} \label{eq:B4Haupt}
 \mathcal{B}_4 &= 
 8
\sum_{\mu,\nu}
 (-1)^{q+1}e_\nu   
\TrN X_\mu 
\cdot \TrN(X_\mu X_\nu^2) 
 +
e_\nu \TrN(\Km) \cdot  \TrN(\Km \Kn^2)
 \\
 &
 +\sum_{\mu,\nu=1}^4  \nonumber
\Big\{ 2\TrN(\Xm^2)\cdot\TrN(\Xn^2) +4 e_\mu e_\nu \big[\TrN(\Xm\Xn)\big]^2  \Big\}
\\  \nonumber
&+\sum_{\mu,\nu=1}^4 
\Big\{ 2\TrN(\Km^2)\cdot\TrN(\Kn^2) +4 e_\mu e_\nu \big[\TrN(\Km\Kn)\big]^2  \Big\}
\\  \nonumber
  &+ 4\sum_{\mu=1}^4
\Big\{ 
2(-1)^{1+q}e_\mu \TrN(\Km)\cdot \TrN(\Km\Xm^2)+2e_{ \mu} \TrN(\Xm) \cdot \TrN(\Xm \Km^2) \\
&\hspace{38pt} +(-1)^{1+q}\TrN(\Xm^2)\cdot \TrN(\Km^2) + 2\big[\TrN(\Km \Xm)\big]^2
\Big\}  \nonumber
\\  \nonumber
 &  \vphantom{\sum_{\mu,\nu=1} } -8 \sum_{\nu,\boldsymbol \mu=1}^4
 (-1)^{|\sigma(\boldsymbol \mu,\nu)|} \delta_{\nu\mu_1\mu_2\mu_3}  \cdot \big \{ 
  -\TrN(X_\nu ) \cdot \TrN(K_{\mu_1}K_{\mu_2}K_{\mu_3}) 
\\
& \hspace{38pt}  \vphantom{\sum_{\mu,\nu=1} }+e_{\mu_2}e_{\mu_3} \big(
\TrN K_{\mu_1}   \cdot \TrN(X_\nu K_{\mu_2}K_{\mu_3}) \nonumber
\big)\\
& \hspace{38pt}  
+\TrN X_{\mu_1}   \cdot \TrN(K_\nu X_{\mu_2}X_{\mu_3}) 
+(-1)^q \TrN K_\nu  \cdot \TrN(X_{\mu_1}X_{\mu_2}X_{\mu_3}) 
\big\} \nonumber \\  \nonumber
&+24 \sum_{\mu\neq \nu=1}^4  
 (-1)^{1+q} e_\nu 
 \TrN(K_\mu)\cdot \TrN\big(K_\mu X^2_{\nu}\big) +
 e_\mu \TrN(X_\nu)\cdot \TrN\big(K_\mu^2 X_{\nu}\big)  
 \\  \nonumber
  & +12 \sum_{\mu\neq \nu}
 \Big\{ 
 2 \big[\TrN(K_\mu X_\nu)\big]^2 + e_\mu e_\nu (-1)^{q+1} \TrN(K_\mu^2 ) \cdot \TrN(X_\nu^2)  
\Big\} \,  , \nonumber
\end{align}
 The eight matrices $K_\mu, X_\mu$ 
 satisfy the following (anti-)hermiticity conditions:
 \begin{align}
 K_{\mu}^*= e_\mu K_\mu  \,\,\and \,\,\, X_\mu^*= e_\mu (-1)^{q+1} X_\mu  \quad \mbox{for any }\mu\in\Delta_4\,,
 \end{align}
where each $e_\mu\in \{+1,-1\}$ is determined by 
$g=\diag (e_1,e_2,e_3,e_4)$.
\end{prop}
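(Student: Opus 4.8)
The plan is to feed $t=2$ into Proposition \ref{thm:EvenPowers} and then reduce the resulting double sum --- over the eight-valued multi-indices $I_1,\ldots,I_4 \in \Lambda_4^-$ and over chord diagrams $\chi$ --- by exploiting the complete classification of surviving diagrams established in Proposition \ref{thm:deltas}. Concretely, one has $\tfrac{1}{\dim V}\Tr(D^4) = \sum_{I_1,\ldots,I_4}\sum_\chi \chi^{I_1 I_2 I_3 I_4}[\cdots]$, where the inner bracket is the random-matrix factor furnished by Proposition \ref{thm:RandomPart}. The first move is to organize the $\chi$-sum according to eq. \eqref{eq:CDsplit}, i.e. by the six $\tau$-types weighted with the symmetry factors $\{1,4,2,4,4,1\}$. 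Since each $\chi$-tensor in Proposition \ref{thm:deltas} is a combination of Kronecker deltas (and the absolute Levi-Civita symbol $\delta_{\alpha\mu\nu\rho}$) times $e$-factors, substituting it collapses most of the index summations, and what survives is exactly the index pattern displayed in the claimed formulas \eqref{eq:NCPolynomial4} and \eqref{eq:B4Haupt}.

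Next I would split, type by type, the bracketed random-matrix factor into its single-trace and bi-trace pieces. For the $N$-proportional single-trace part I would invoke Corollary \ref{thm:largeN}: only the subsets $\Upsilon=\emptyset$ and $\Upsilon=\{1,2,3,4\}$ contribute a factor $\TrN(1_N)=N$, yielding the two terms of eq. \eqref{eq:sgeneral}, namely $\TrN(K_{I_1}K_{I_2}K_{I_3}K_{I_4})$ and its reversed companion weighted by $e_{I_1}e_{I_2}e_{I_3}e_{I_4}$. The crucial simplification here is Claim \ref{thm:signos}: for every $\tau$-type this product of signs is a type-dependent constant ($+1$ for $\tau_1,\tau_3,\tau_6$ and $-1$ for $\tau_2,\tau_5$), so the reversed trace simply adds to or subtracts from the forward one, and cyclicity of $\TrN$ lets me merge reorderings. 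This produces the monomials $\Km^4$, $\Xm^4$, $\Km^2\Kn^2$, $\Km\Kn\Km\Kn$, the mixed $K$--$X$ quartics constrained by $\delta_{\alpha\beta\mu\nu}$ or $\delta_{\mu\nu_1\nu_2\nu_3}$, together with the characteristic $(-1)^q$ and $(-1)^{q+1}$ prefactors, the latter entering through the relation $e_{\hat\nu}=e_\nu(-1)^{q+1}$ established in Section \ref{sec:d4D2} via Appendix \ref{sec:App}.

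The bi-trace part $\mathcal{B}_4$ is obtained analogously but now summing over the proper subsets $\Upsilon$ with $\Upsilon,\Upsilon^{\mtr c}\neq\emptyset$, i.e. over the $(1,3)$, $(2,2)$ and $(3,1)$ partitions of the four factors; each produces a product $\TrN(\cdots)\cdot\TrN(\cdots)$ with the same $\chi$-tensor and $e$-signs already computed for the single-trace part. I would carry out the identical delta-contraction and sign bookkeeping per $\tau$-type and then collect like terms to match \eqref{eq:B4Haupt}. Finally, the stated (anti-)hermiticity conditions $K_\mu^*=e_\mu K_\mu$ and $X_\mu^*=e_\mu(-1)^{q+1}X_\mu$ are read off directly from Table \ref{tab:HorL} together with eq. \eqref{eq:autoadjuncion} and the same $e_{\hat\nu}$ relation.

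The main obstacle is not any single identity but the sign bookkeeping in the diagrams with more-than-twofold index repetitions, flagged in Remark \ref{thm:Nested}. For $\tau_6$ and for the coincident-index sub-cases of $\tau_3,\tau_4$ a fixed index decoration does not pin down a unique chord diagram; several diagrams of differing crossing parity share the same decoration, and one must check that their signed contributions sum to the minimal-crossing diagram alone. This is precisely the five-positive-versus-four-negative cancellation among the nine nested configurations $(\phi,\psi)\in\{\theta,\xi,\zeta\}^2$ of Figure \ref{fig:taus}; verifying that this net sign always equals that of the minimal-crossing representative --- so that the simplified statement of Proposition \ref{thm:deltas} may be applied verbatim --- is the delicate step on which the entire reduction rests.
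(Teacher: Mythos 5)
Your proposal follows essentially the same route as the paper's proof: $t=2$ in Proposition \ref{thm:EvenPowers}, the $\tau$-type decomposition of eq. \eqref{eq:CDsplit} with symmetry factors $\{1,4,2,4,4,1\}$, Proposition \ref{thm:deltas} and Claim \ref{thm:signos} to collapse the index sums and fix the signs, Corollary \ref{thm:largeN} for the single-trace part and the $(1,3)$, $(2,2)$, $(3,1)$ partitions for $\mathcal B_4$, with the nested-diagram sign cancellation of Remark \ref{thm:Nested} correctly identified as the delicate point. The only labor-saving device you do not mention is the paper's systematic use of the $\tau_1\leftrightarrow\tau_6$ and $\tau_2\leftrightarrow\tau_5$ duality (swapping $K_\mu\leftrightarrow X_\mu$ without swapping $e_\mu\leftrightarrow e_{\hat\mu}$) to avoid recomputing half the diagram classes, but this is an efficiency within the same argument, not a different approach.
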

\begin{proof}
We first find $\mathcal S_{4}=\sum_{n=2}^{6}
\mathfrak{s}_n(\chi) $ using eqs. \eqref{eq:pelos} and 
\eqref{eq:CDsplit}. 
By direct computation 
\begin{align}  \label{eq:s2}
\sum_{\chi\in\CD{2}} \mathfrak{s}_{2} (\chi)&=
2 \sum_\mu \TrN(\Km^4)+
8\sum_{\mu<\nu } e_\mu e_\nu \TrN(\Km^2\Kn^2) \\
&-4 \sum_{\mu<\nu } e_\mu e_\nu \TrN(\Km\Kn\Km\Kn)\,. \nonumber
\end{align}
In view of \eqref{eq:uno} and \eqref{eq:cinco} and 
the similarity of the
$\tau_1$ and $\tau_6$ type 
diagrams, one gets the same 
result by replacing $K_\mu$ 
by $K_{\hat\mu}=X_{\mu}$, 
namely 
\begin{align}  \label{eq:s6}
\sum_{\chi\in\CD{6}} \mathfrak{s}_{6} (\chi)&=
2  \sum_\mu \TrN(\Xm^4)+
8 \sum_{\mu<\nu } e_\mu e_\nu \TrN(\Xm^2\Xn^2) \\
&-4 \sum_{\mu<\nu } e_\mu e_\nu \TrN(\Xm\Xn\Xm\Xn)\,. \nonumber
\end{align}
Next, using eq. \eqref{eq:dos}, the 6-point diagrams are evaluated:
\begin{align}
\sum_{\chi\in\CD{3}} \mathfrak{s}_{3} (\chi)&= 4 
 \sum_{\nu,\boldsymbol \mu } 
   \TrN \big[ \delta_{\nu\mu_1\mu_2\mu_3} e_{\mu_1} e_{\mu_2} e_{\mu_3} (-1)^{1+|\sigma_{\boldsymbol \mu}|}
  \\ \nonumber &\qquad \cdot (X_\nu K_{\mu_1} K_{\mu_2}K_{\mu_3} - K_{\mu_3} K_{\mu_2}K_{\mu_1}X_\nu  )
\big] \\
&=-8 \sum_{\nu,\boldsymbol \mu } \nonumber
\TrN \big[ \delta_{\nu\mu_1\mu_2\mu_3} e_{\mu_1} e_{\mu_2} e_{\mu_3} (-1)^{|\sigma_{\boldsymbol \mu}|}
X_\nu K_{\mu_1} K_{\mu_2}K_{\mu_3}\big]\\ \nonumber
& = 8(-1)^{1+q} \sum_{\nu,\boldsymbol \mu } \nonumber
\TrN \big[ \delta_{\nu\mu_1\mu_2\mu_3} e_\nu (-1)^{|\sigma_{\boldsymbol \mu}|}
X_\nu K_{\mu_1} K_{\mu_2}K_{\mu_3}\big] \,. \nonumber
\end{align}
Here, again using the duality 
between $\tau_2$ and $\tau_5$ 
evident in Proposition \ref{thm:deltas}
and Claim \ref{thm:signos}, 
the $\mathfrak{s}_{5} $ term can be 
computed by swapping each $K_{\mu}$ 
matrix with the $K_{\hat \mu}$ matrix: 
\begin{align}
 \sum_{\chi\in\CD{5}} \mathfrak{s}_{5} (\chi)=
 \sum_{\chi\in\CD{3}} \mathfrak{s}_{3} (\chi) 
 \bigg |_{K_\nu \leftrightarrow X_\nu \text{ for all $\nu=1,2,3,4$}}
\end{align}
(but there is no the sign swap $e_{\mu} \leftrightarrow e_{\hat \mu}$).\par Finally, we split the sum
$
\sum_{\chi\in\CD{4}} = 2\sum_{\chi \in \tau_3} +4\sum_{\chi \in \tau_4}$
in order to compute the term 
$\mtf s_4$. The calculation simplifies using eq. \eqref{eq:tres} and noticing that 
(for $\mu_1=\mu_2=\nu_1=\nu_2$ being false), one has 
\begin{salign}\numerada &\sum _{\boldsymbol \nu, \boldsymbol \mu} (-1)^{\mu_1+\mu_2}
(\delta_{\mu_1}^{\nu_1}\delta_{\mu_2}^{\nu_2}
-
\delta_{\mu_1}^{\nu_2}\delta_{\mu_2}^{\nu_1})  \big[
\TrN(K_{\mu_1}X_{\nu_1}K_{\mu_2} X_{\nu_2})  \\&\hspace{2.92cm}+ e_{\mu_1}e_{\mu_2} e_{\hat\nu_1} e_{\hat\nu_2}
\TrN(X_{\nu_2}K_{\mu_2}X_{\nu_1}K_{\mu_1} )\big] \\
&=\sum_{\mu \neq \nu} (-1)^{\mu+\nu}  \TrN\big\{
K_{\mu }X_{\mu}K_{\nu} X_{\nu}+ X_{\nu}K_{\nu}X_{\mu}K_{\mu }\\
&\hspace{2.92cm}-K_{\mu }X_{\nu}K_{\nu} X_{\mu}- X_{\mu}K_{\nu}X_{\nu}K_{\mu }
\big\}=0\,,
\end{salign}
using the cyclicity of the trace. Therefore 
both equations \eqref{eq:tau3} and 
\eqref{eq:tau4} the only contribution to 
$\mathfrak s_4$ comes from the 
term $e_{\mu_1} \big(\prod_{\alpha\neq \nu} e_\alpha \big)
\delta_{\mu_1}^{\mu_2}\delta_{\nu_1}^{\nu_2} $ (which require 
$\mu_i\neq \nu_i$) 
and from the terms $\chi^{\mu\hat\mu\mu\hat\mu}$ and $\chi^{\mu\mu\hat\mu\hat\mu}$.
These terms appear, respectively, in the first and second lines of
\begin{align}
\sum_{\chi\in\CD{4}} \mathfrak{s}_{4} (\chi)&= 
-\sum_{\alpha,\beta,\mu,\nu} 
\delta_{\alpha\beta\mu\nu}
e_\alpha e_\beta 
\TrN\big[ (K_\mu X_\nu)^2+ 2K_\mu^2 X_\nu^2 \big] \\ 
&\,\qquad+ 2e_1e_2e_3e_4 \sum_\mu \TrN\big[(K_\mu X_\mu)^2- 2K_\mu^2 X_\mu^2  \big]  \nonumber
\end{align}
Expressing this via the delta $\delta_{\alpha\beta\mu\nu}$ 
is motivated by $e_{\mu} \big(\prod_{\rho\neq \nu} e_\rho\big)=\prod_{\substack{\rho\neq \mu \\ \rho\neq \nu}}e_\rho$.
\par 
We now compute in steps the bi-tracial functional
\vspace{-.13cm}
\begin{align} \raisetag{3.5\normalbaselineskip}\nonumber
\mathcal B_4=
\sum_{I \in (\Lambda_4^-)^{\times 4}}
\bigg\{&
\sum_{\chi\in \CD{2n(I)}} 
\chi^{I_1I_2I_3I_4}
\times \Big[
\sum_{i=1}^4    e_{I_i} 
\TrN (K_{I_1}\cdots \widehat{ K_{{I_i}}} \cdots K_{I_4} ) \cdot \TrN  K_{I_i} 
\\  +& \sum_{   {1\leq i<j\leq 4}^{\phantom .}   } \sum_{v,w\neq i,j}
e_{I_i}e_{I_j} 
\big(
\TrN (K_{I_v} K_{I_w}) \TrN (K_{I_i} K_{I_j})
\big)  \label{eq:B4}
\\ 
 +&\sum_{i=1}^4  \Big( \prod_{j \neq i} e_{I_j} \Big) 
 \TrN (K_{I_i}) \cdot 
\TrN (K_{I_4}\cdots \widehat{ K_{{I_i}}} \cdots K_{I_1} ) \Big]
\bigg\}\,. \nonumber
\end{align}
The contribution to $\mathcal B_4$
arising from the term in the square brackets
in the first, second and third lines are 
referred to as the (1,3), (2,2) and (3,1) 
partitions, respectively. 
For a fixed number $2r$ of points,  
these are denoted by $\sum_\chi \mathfrak b^{\pi}_{r}(\chi)$,
for $\pi \in \{(1,3), (2,2), (3,1)\}$. 
In view of the 
partial duality established in Proposition \ref{thm:deltas},
we obtain the contributions to $\mathcal B_4$  
by similarity; thus we first compute 12-pt and 4-pt 
diagrams together and later 6-pt and 10-pt diagrams.
This duality would be perfect if
both replacements $K_\nu \leftrightarrow K_{\hat \nu}(=X_\nu)
$ and $e_\nu \leftrightarrow e_{\hat{\nu}}$ would
swap the eqs. ($\tau_1  \leftrightarrow \tau_6$) and 
($\tau_2  \leftrightarrow \tau_5$)
in Proposition \ref{thm:deltas}. However, 
$e_\nu \leftrightarrow e_{\hat{\nu}}$ is not 
needed for the swapping to hold.

\par 
We begin with the 12-point diagrams for the (1,3) and (3,1) partitions.
As consequence of Claim \ref{thm:signos}, 
\begin{align}
\sum_{\chi \in \tau_6}\mtf b_6^{(1,3)}(\chi) \nonumber
+\mtf b_6^{(3,1)}(\chi)
&=\hspace{-3pt}\sum_{\nu_1,\ldots,\nu_4=1}^4 \hspace{-3pt}
\chi^{\hat \nu_1 \hat \nu_2\hat \nu_3\hat \nu_4}
\big[ e_{\hat \nu_1} \TrN X_{\nu_1}  \cdot \TrN(X_{\nu_2} \{X_{\nu_3},X_{\nu_4}\})
\\[-9pt] \nonumber 
&\hspace{8cm}+\text{cyclic}\,
\big]
\\[-7.5pt] \nonumber 
&= 4\sum_{\mu,\nu}
e_\mu e_\nu e_{\hat \mu}
\TrN X_\mu 
\cdot \TrN(X_\mu\{X_\nu,X_\nu\})
\\[1.5pt]
&=8
\sum_{\mu,\nu}
 (-1)^{q+1}e_\nu   
\TrN X_\mu 
\cdot \TrN(X_\mu X_\nu^2) \label{eq:61331}
  \end{align} 
after some simplification;
the last equality follows from eq. \eqref{eq:mhm}. 
The (2,2)-partition evaluates similarly 
to 
\begin{align}\nonumber
 \sum_{\chi \in \tau_6}\mtf b_6^{(2,2)}(\chi)&= 
 2\sum_{\mu,\nu} 
 \big\{ 
 e_\mu e_\nu e_{\hmu}^2 \Tr(\Xm^2)\cdot \Tr(\Xn^2)
 +
 2 e_\mu e_\nu e_{\hmu} e_{\hnu}  \Tr(\Xm\Xn)^2
 \big\}
 \\
 &=\sum_{\mu,\nu=1}^4  \label{eq:622}\raisetag{3.2\normalbaselineskip} 
\Big\{ 2\TrN(\Xm^2)\cdot\TrN(\Xn^2) +4 e_\mu e_\nu \big[\TrN(\Xm\Xn)\big]^2  \Big\}\,,
\end{align}
since $e_\mu e_\nu e_{\hmu} e_{\hnu}=(-1)^{2(1+q)}$ by eq. \eqref{eq:mhm}. 
One then computes $
\sum_{\chi \in \tau_6} \mathfrak b_6(\chi) $
by summing eqs. \eqref{eq:622} and \eqref{eq:61331}.\par

The four-point diagrams contain ordinary indices
and their computation is not illuminating. 
Since it moreover 
resembles that for the 12-point diagrams we omit it and 
present the result:
\begin{align}
\sum_{\chi\in \tau_1}\mathfrak{b}_2(\chi)  \label{eq:noconextau1}
&=\sum_{\mu,\nu=1}^4
8 e_\nu \TrN(\Km) \cdot  \TrN(\Km \Kn^2)
\\
&+\sum_{\mu,\nu=1}^4 
\Big\{ 2\TrN(\Km^2)\cdot\TrN(\Kn^2) +4 e_\mu e_\nu \big[\TrN(\Km\Kn)\big]^2  \Big\}\,.
\nonumber
\end{align}

We now present the computation of 
6-point and 10-point diagrams. 
Again, we remark that the terms corresponding to the (1,3)
and (3,1) partitions agree, 
$\sum_{\chi} \mtf b_3\hp{1,3}(\chi)=\sum_\chi\mtf b_3\hp{3,1}(\chi)$.
In order to see this, first we notice that $\sum_{\chi} \mtf b_3\hp{1,3}(\chi)$ equals 
\begin{align*}
4 \sum_{\nu,\boldsymbol \mu}
 (-1)^{1+|\sigma|}\delta_{\nu\mu_1\mu_2\mu_3} & \cdot  \big \{ 
  -\TrN X_\nu   \cdot \TrN(K_{\mu_1}K_{\mu_2}K_{\mu_3}) 
\\[-6pt]
&
+e_{\mu_2}e_{\mu_3}
\TrN K_{\mu_1}   \cdot \TrN(X_\nu K_{\mu_2}K_{\mu_3})
\\[1.5pt]
&
+
e_{\mu_1}e_{\mu_3}
\TrN K_{\mu_2}   \cdot \TrN(X_\nu K_{\mu_1}K_{\mu_3})
\\[-1.5pt]
&
+
e_{\mu_1}e_{\mu_2}
\TrN K_{\mu_3}   \cdot \TrN(X_\nu K_{\mu_1}K_{\mu_2})
\big\}\,,
\end{align*}
due to eq. \eqref{eq:tau2}
and $
e_{\hat \nu }e_{\mu_1} e_{\mu_2} e_{\mu_3} = -1$ (Claim \ref{thm:signos}).
But also departing from $\sum_\chi\mtf b_3\hp{3,1}(\chi)$,  
using 
\eqref{eq:dos}  to convert the triple 
signs to a single one,  (e.g. $e_{\hnu} e_{\mu_1} e_{\mu_3}=-e_{\mu_2}$),
renaming indices (which gets rid of the minus sign via 
the skew-symmetric factor $(-1)^{1+|\sigma|}$) 
one arrives again to the same expression. 
Thus $
\sum_{\chi} \mtf b_3\hp{1,3}(\chi)
+
\sum_{\chi} \mtf b_3\hp{3,1}(\chi)
$ equals
\[
 8 \sum_{\nu,\boldsymbol \mu}
 (-1)^{1+|\sigma|} \cdot \big \{ 
  -\TrN X_\nu   \cdot \TrN(K_{\mu_1}K_{\mu_2}K_{\mu_3}) 
+e_{\mu_2}e_{\mu_3}
\TrN K_{\mu_1}   \cdot \TrN(X_\nu K_{\mu_2}K_{\mu_3}) \big\}\,.
\]
Using the skew-symmetry of $(-1)^{1+|\sigma|}$
and the cyclicity of the trace, one proves
easily that the (2,2)-partition 
$\mathfrak b_3\hp{2,2}$ vanishes, and 
so does in fact $\mathfrak{b}_5\hp{2,2}$.
Thus the only contributions from 10-point 
diagrams are the partitions (1,3) and (3,1)
which can be computed similarly as for 
the 6-point contributions, by a similar
token. Thus 
\begin{align*}
\sum_{\chi\in \CD{10}}\mathfrak{b}_5(\chi)&  =
2\sum_{\chi\in \tau_5}\mathfrak{b}_5\hp{1,3}(\chi)
=-8\sum_{\mu,\nu_1,\nu_2,\nu_3}
\delta_{\mu \nu_1\nu_2\nu_3} (-1)^{|\lambda_{\boldsymbol\nu}|} e_{\nu_1}
e_{\nu_2}e_{\nu_3}
\\& \times 
\big[ e_\mu \TrN K_\mu \cdot \TrN(X_{\nu_1}X_{\nu_2}X_{\nu_3})+
e_{\nu_1} \TrN X_{\nu_1} \cdot \TrN(K_{\mu}X_{\nu_2}X_{\nu_3})\\
& \hphantom{\big[}+
e_{\nu_2} \TrN X_{\nu_2} \cdot \TrN(K_{\mu}X_{\nu_1}X_{\nu_3})
+
e_{\nu_3} \TrN X_{\nu_3} \cdot \TrN(K_{\mu}X_{\nu_1}X_{\nu_2})
\big]
\end{align*}
By performing the sum of the terms in the 
last line one sees that they cancel out due
to the skew-symmetry of $(-1)^{|\lambda_{\boldsymbol \nu}|}$. 
The only contribution come therefore from the two first terms in the square brackets, which are directly 
seen to yield
\begin{align*}\sum_{\chi\in \CD{10}}\mathfrak{b}_5(\chi)
=
&-8\sum_{\mu,\nu_1,\nu_2,\nu_3}
\delta_{\mu \nu_1\nu_2\nu_3} (-1)^{|\lambda_{\boldsymbol\nu}|}
\times 
\big[ (-1)^q \TrN K_\mu \cdot \TrN(X_{\nu_1}X_{\nu_2}X_{\nu_3})
\\& \hspace{140pt}+
e_{\nu_2}e_{\nu_3} \TrN X_{\nu_1} \cdot \TrN(K_{\mu}X_{\nu_2}X_{\nu_3})
\big] \,.
\end{align*}
Concerning the 8-point diagrams,
\begin{align}  
\sum_{\chi\in \CD{8}}  \mathfrak b_4 (\chi )  
& =  \sum_{ \substack{\mu_1,\mu_2 ,\nu_1,\nu_2 \\ \text{not all equal}}} 
\big( 2\chi^{\mu_1\hnu_1 \mu_2 \hnu_2} +
4 \chi^{\mu_1 \mu_2 \hnu_1 \hnu_2} \big) \{\mbox{non-trivial partitions}\} \nonumber \\
& +\sum_\mu (
2\chi^{\mu\hmu \mu \hmu} +
4 \chi^{\mu \mu  \hmu \hmu} )  \{\mbox{non-trivial partitions}\} \,. \label{eq:costam}
\end{align}
The sum over the 8-point chord diagrams is spitted in 
the $\tau_3$ and $\tau_4$ types with their symmetry factors;
in each line these are, respectively, the two summands in parenthesis. 
Here `non-trivial partitions' in curly brackets refers to (1,3), (2,2) and (3,1). 
We call the second line $\boldsymbol \Delta$, for 
which straightforward computation yields
\begin{align}\nonumber
\boldsymbol \Delta&=
 4(-1)^{1+q}\sum_{\mu=1}^4
\Big\{ 
2e_\mu \TrN(\Km)\cdot \TrN(\Km\Xm^2)+2e_{\hat \mu} \TrN(\Xm) \cdot \TrN(\Xm \Km^2) \\
&\hspace{18ex} +\TrN(\Xm^2)\cdot \TrN(\Km^2) + 2(-1)^{1+q}\big[\TrN(\Km \Xm)\big]^2
\Big\}
 \end{align}
by rewriting $e_1e_2e_3e_4=(-1)^{q}$.
We now compute the first line of eq. \eqref{eq:costam}
 considering first
only the $\tau_3$ diagrams (the $\tau_4$-type is addressed later).
The sum $(1,3)+ (3,1)$ of partitions  can be straightforwardly obtained:
\begin{align} \label{eq:triplicar1}
&  2 \sum_{ \substack{\mu_1,\mu_2 ,\nu_1,\nu_2 \\ \text{not all equal}}} 
 \chi^{\mu_1\hnu_1 \mu_2 \hnu_2}  \{(1,3)+(3,1)\mbox{ partitions}\} \\ \nonumber
 &= 2 \sum_{ \substack{\mu_1,\mu_2 ,\nu_1,\nu_2 \\ \text{not all equal}}} \bigg[-(-1)^{\mu_1+\mu_2}
e_1e_2e_3e_4 
(\delta_{\mu_1}^{\nu_1}\delta_{\mu_2}^{\nu_2}
-
\delta_{\mu_1}^{\nu_2}\delta_{\mu_2}^{\nu_1})
-
e_{\mu_1} \Big(\displaystyle\prod_{\alpha\neq \nu_1} e_\alpha \Big)
\delta_{\mu_1}^{\mu_2}\delta_{\nu_1}^{\nu_2} \bigg]   \nonumber
\\ & \times \bigg[
e_{\mu_1} \TrN(K_{\mu_1})
\cdot \TrN\big( X_{\nu_1} \{ K_{\mu_2}, X_{\mu_2}\}\big)
+
e_{\mu_2} \TrN(K_{\mu_2})
\cdot \TrN\big( K_{\mu_1} \{ X_{\nu_1}, X_{\mu_2}\}\big)  \nonumber
\\
& \quad+
e_{\nu_1} \TrN(X_{\nu_1})
\cdot \TrN\big( K_{\mu_1} \{ K_{\mu_2}, X_{\nu_2}\}\big) 
+e_{\nu_2} \TrN(X_{\nu_2})
\cdot \TrN\big( K_{\mu_1} \{ K_{\mu_2}, X_{\nu_1}\}\big) 
\bigg] \nonumber \\[1pt]
& = -8 \sum_{\mu\neq \nu}  \Big(\displaystyle\prod_{\alpha\neq \nu} e_\alpha
\Big) \Big(\TrN(K_\mu)\cdot \TrN\big(K_\mu X_{\nu}^2\big) +
 e_\mu e_{\hnu} \TrN(X_\nu)\cdot \TrN\big(K_\mu^2 X_{\nu}\big) \Big) \nonumber\\[3pt]
 & =8 \sum_{\mu\neq \nu}  
 (-1)^{1+q} e_\nu 
 \TrN(K_\mu)\cdot \TrN\big(K_\mu X^2_{\nu}\big) +
 e_\mu \TrN(X_\nu)\cdot \TrN\big(K_\mu^2 X_{\nu}\big)  \,. \nonumber
 \end{align}
 In the first equality 
we just used the expression for $\chi^{\mu_1\hnu_1 \mu_2 \hnu_2}$. 
In order to obtain the second one, it can be shown 
that the terms proportional to $(\delta_{\mu_1}^{\nu_1}\delta_{\mu_2}^{\nu_2}
-
\delta_{\mu_1}^{\nu_2}\delta_{\mu_2}^{\nu_1})$ cancel out. Using 
eq. \eqref{eq:mhm} one simplifies the signs to obtain the last equality.
The condition of $\mu\neq \nu$ in the sum of the last 
equations 
reflects only the fact that the four indices cannot 
coincide (cf. assumptions in Prop. \ref{thm:deltas}). 
The remaining partition reads

\begin{align}\label{eq:triplicar2}
&  2 \sum_{ \substack{\mu_1,\mu_2 ,\nu_1,\nu_2 \\ \text{not all equal}}} 
 \chi^{\mu_1\hnu_1 \mu_2 \hnu_2} \times \{(2,2)\mbox{ partition}\} \\
 & = -4  \nonumber
 \sum_{\mu\neq \nu}
 \Big(\displaystyle\prod_{\alpha\neq \nu} e_\alpha
\Big) 
 \Big[ 2e_{\hnu}  \TrN(K_\mu X_\nu)^2 + e_\mu \TrN(K_\mu^2 ) \cdot \TrN(X_\nu^2)  
\Big]\\
&= +4 \sum_{\mu\neq \nu}
 \Big\{ 
 2  \big[\TrN(K_\mu X_\nu)\big]^2 + e_\mu e_\nu (-1)^{q+1} \TrN(K_\mu^2 ) \cdot \TrN(X_\nu^2)  
\Big\} \,.\nonumber
\end{align} 
Using a similar approach (which would be redundant here), 
one can similarly show that the contribution of the 
$\tau_4$-diagrams is precisely twice that of $\tau_3$, obtaining in total
$3\times $[eqs.\eqref{eq:triplicar1} + \eqref{eq:triplicar2}] for the 
8-point diagrams.
The claim follows from 
\begin{align}
\mathcal S_4=  
\sum_{r=2}^6 
\sum_{\chi\in\CD{2r}}
\mathfrak s_r(\chi) \qquad \and \qquad  
\mathcal B_4= 
\sum_{\pi} 
\sum_{r=2}^6 
\sum_{\chi\in\CD{2r}}
\mathfrak b_r^{\pi}(\chi) 
\end{align}
where $\pi$ runs 
over the non-trivial partitions $\pi \in \{(1,3)\,, (2,2)\,, (3,1)\}$.
\end{proof}

\subsection{Riemannian and Lorentzian geometries} \label{sec:RiemLor}
Before writing down the action 
functionals for Riemannian and Lorentzian geometries,
it will be helpful to restate eqs. \eqref{eq:NCPolynomial4} and \eqref{eq:B4Haupt} 
via 
\begin{salign}
-\sum_{\alpha,\beta,\mu,\nu} 
\delta_{\alpha\beta\mu\nu}
e_\alpha e_\beta 
\TrN\big[ (K_\mu X_\nu)^2+ 2K_\mu^2 X_\nu^2 \big] \\
=(-1)^{1+q}
\sum_{\mu\neq \nu}
2 e_\mu e_\nu 
\TrN\big[ (K_\mu X_\nu)^2+ 2K_\mu^2 X_\nu^2 \big] 
\end{salign} 
and by writing out (`cycl.' next means equality after cyclic reordering) 
\begin{align} \nonumber
8(-1)^{q+1} \sum_{\mu,\boldsymbol \nu} &
(-1)^{|\sigma(\boldsymbol \nu,\mu)|} \delta_{\mu\nu_1\nu_2\nu_3} 
e_\mu( X_\mu K_{\nu_1} K_{\nu_2} K_{\nu_3}
+  K_\mu X_{\nu_1} X_{\nu_2} X_{\nu_3} )
\\ \label{eq:useful1}
\stackrel{\mtr{cycl.}}{\equiv} 
 &- 8 (-1)^q \Big[ e_1 X_1\big(K_2[K_3,K_4] +K_3[K_4,K_2]+K_4[K_2,K_3] \big)
\\ & \hspace{37pt} \vphantom{\bigg)} \nonumber 
+
e_2 X_2\big(K_1[K_3,K_4] +K_3[K_4,K_1]+K_4[K_1,K_3] \big) 
\\
 & \hspace{37pt} \vphantom{\bigg)} \nonumber 
+
e_3 X_3\big(K_1[K_2,K_4] +K_2[K_4,K_1]+K_4[K_1,K_2]  \big) \\  \nonumber
 & \hspace{37pt}\vphantom{\bigg)}
+
e_4 X_4\big(K_1[K_2,K_3] +K_2[K_3,K_1]+K_3[K_1,K_2]  \big) 
\Big] \nonumber \\&  \vphantom{\bigg)}
 -8 (-1)^q \Big[ e_1 K_1\big(X_2[X_3,X_4] +X_3[X_4,X_2]+X_4[X_2,X_3] \big)  \nonumber
\\ & \hspace{37pt} \vphantom{\bigg)} \nonumber 
+
e_2 K_2\big(X_1[X_3,X_4] +X_3[X_4,X_1]+X_4[X_1,X_3] \big) 
\\
 & \hspace{37pt} \vphantom{\bigg)} \nonumber 
+
e_3 K_3\big(X_1[X_2,X_4] +X_2[X_4,X_1]+X_4[X_1,X_2]  \big) \\
 & \hspace{37pt} \vphantom{\bigg)}\vphantom{\bigg)}
+
e_4 K_4\big(X_1[X_2,X_3] +X_2[X_3,X_1]+X_3[X_1,X_2]  \big) 
\Big]  \nonumber \,,
\end{align}
as well as 
\begin{salign}
& \vphantom{\bigg)}
 -8 \sum_{\nu,\boldsymbol \mu } 
 (-1)^{|\sigma(\boldsymbol \mu,\nu)|} \delta_{\nu\mu_1\mu_2\mu_3}  \cdot \big \{ 
  -\TrN X_\nu  \cdot \TrN(K_{\mu_1}K_{\mu_2}K_{\mu_3}) 
\\[-8pt]
& \vphantom{\bigg)} \hspace{8pt}  +e_{\mu_2}e_{\mu_3} \big(
\TrN K_{\mu_1}    \cdot \TrN(X_\nu K_{\mu_2}K_{\mu_3})
+\TrN X_{\mu_1}  \cdot \TrN(K_\nu X_{\mu_2}X_{\mu_3})  \nonumber
\big)\\
& \vphantom{\bigg)} \hspace{8pt}  
+(-1)^q \TrN K_\nu  \cdot \TrN(X_{\mu_1}X_{\mu_2}X_{\mu_3}) 
\big\}\,  \label{eq:useful2} \numerada
\\
= & \vphantom{\bigg)} +24  \TrN X_1  \cdot \TrN\big(K_2[K_3,K_4]\big)   +24
   \TrN X_2 \cdot \TrN \big ( K_1[K_3,K_4] \big) \\
   &  \vphantom{\bigg)} +24\TrN X_3 \cdot \TrN \big (K _1[K_2,K_4] \big)  +24
   \TrN X_4 \cdot \TrN \big (K _1[K_2,K_3] \big)  
   \\
  & \vphantom{\bigg)}  -8 \TrN K_1  \cdot \TrN\big( e_3 e_4 [K_3,K_4] X_2 + e_2 e_4 [K_2,K_4] X_3 
  + e_2 e_3 [K_2,K_3] X_4\big) \\
  & \vphantom{\bigg)}  -8 \TrN K_2  \cdot \TrN\big( e_3 e_4 [K_3,K_4] X_1 + e_1 e_4 [K_4,K_1] X_3 
  + e_1 e_3 [K_3,K_1] X_4\big) 
  \\
  &  \vphantom{\bigg)} -8 \TrN K_3  \cdot \TrN\big( e_2 e_4 [K_2,K_4] X_1 + e_1 e_4 [K_4,K_1] X_2 
  + e_1 e_2 [K_1,K_2] X_4\big) \\
  & \vphantom{\bigg)}
   -8 \TrN K_4  \cdot \TrN\big( e_2 e_3 [K_2,K_3] X_1 + e_1 e_3 [K_1,K_3] X_2 
  + e_1 e_2 [K_1,K_2] X_3\big)
  \\
 & \vphantom{\bigg)}  +
(-1)^{1+q}   \big\{24 \TrN K_1  \cdot \TrN\big(X_2[X_3,X_4]\big) 
  +24
   \TrN K_2 \cdot \TrN \big ( X_1[X_3,X_4] \big) \\
   &  \vphantom{\bigg)}\hspace{44pt} +24 \TrN K_3 \cdot \TrN \big (X_1[X_2,X_4] \big)  +24
   \TrN K_4 \cdot \TrN \big (X_1[X_2,X_3] \big)  \big\}
   \\
  & \vphantom{\bigg)}   - 8   \TrN X_1  \cdot \TrN\big( e_3 e_4 [X_3,X_4] K_2 + e_2 e_4 [X_2,X_4] K_3 
  + e_2 e_3 [X_2,X_3] K_4\big) \\
  &  \vphantom{\bigg)} - 8 \TrN X_2  \cdot \TrN\big( e_3 e_4 [X_3,X_4] K_1 + e_1 e_4 [X_4,X_1] K_3 
  + e_1 e_3 [X_3,X_1] K_4\big) 
  \\
  &   \vphantom{\bigg)}  -8 \TrN X_3  \cdot \TrN\big( e_2 e_4 [X_2,X_4] K_1 + e_1 e_4 [X_4,X_1] K_2 
  + e_1 e_2 [X_1,X_2] K_4\big) \\
  &  \vphantom{\bigg)}-8
  \TrN X_4  \cdot \TrN\big( e_2 e_3 [X_2,X_3] K_1 + e_1 e_3 [X_1,X_3] K_2 
  + e_1 e_2 [X_1,X_2] K_3\big)\,.
\end{salign}
From these expressions
the Riemannian and Lorentzian 
cases can be readily derived. 

\subsubsection{Riemannian fuzzy geometries}
The metric $g=\diag(-1,-1,-1,-1)$ implies $e_\mu=-1$ for each $\mu\in \Delta_4$ 
and $q=4$. 
The Dirac operator $D=D(\mathbf L, \tilde{\mathbf { H}} )$  (Ex. \ref{ex:4d})
is parametrized by four anti-Hermitian matrices 
$K_\mu=L_\mu$ (where $[L_\mu,\balita\hspace{4pt}]$ corresponds to the 
derivatives $\partial_\mu$ in the smooth case) and 
four Hermitian matrices $X_\mu= {\tilde{H}}_{\mu}$  (corresponding to the spin connection $\omega_\mu$ 
in the smooth spin geometry case 
represented by $[\tilde{H}_\mu,\balita\,]$ here).
In Example \ref{ex:4d} above
these have been called ${\tilde{H}}_{1}=H_{234},\ldots,{\tilde{H}}_{4}=H_{123}$.
The bi-tracial octo-matrix model has the 
following quadratic part
\begin{equation} \label{eq:B2orazS2Riemann}
\frac{1}{8}
\Tr \big( [D^{\mtr{Riemann}}]^2 \big)
=
N \sum_{\mu=1}^4 \Tr [ 
{\tilde{H}}^2_{\mu}
-L_{\mu}^2  ]
+  (\TrN
{\tilde{H}}_{\mu} )^2
+ (\TrN
L_\mu)^2\,,
\end{equation}
which directly follows from eq. \eqref{eq:generalD2d4}.
The quartic part is
more complicated:
\[
\frac{1}{4}
\Tr \big( [D^{\mtr{Riemann}}]^4 \big) =N \mtc S_4^{\mtr{Riemann}} +  \mtc B_4^{\mtr{Riemann}}
\]
having single-trace action
\begin{align} \nonumber 
\mtc S_4^{\mtr{Riemann}}&=  \TrN \Big\{ 
2 \sum_\mu  (L_{\mu}^4  +  {\tilde{H}}_{\mu}^4)  
\\ & \nonumber +
4\sum_{\mu<\nu }   (2 L_{\mu}^2L_{\nu}^2 +2 {\tilde{H}}_{\mu}^2{\tilde{H}}_{\nu}^2  
-L_{\mu}L_{\nu}L_{\mu}L_{\nu} -{\tilde{H}}_{\mu}{\tilde{H}}_{\nu}{\tilde{H}}_{\mu}{\tilde{H}}_{\nu} ) \nonumber  
\\ &  \nonumber  
-
\sum_{\mu\neq \nu}
\big[2  (L_\mu {\tilde{H}}_\nu)^2+ 4L_\mu^2 {\tilde{H}}_\nu^2 \big]
 +   \sum_\mu\big[ 2(L_\mu {\tilde{H}}_{\mu})^2- 4L_\mu^2 {\tilde{H}}_{\mu}^2  \big]  \\
\nonumber
&+ 8   \Big[  {\tilde{H}}_1\big(L_2[L_3,L_4] +L_3[L_4,L_2]+L_4[L_2,L_3] \big)
\\ & \hspace{11pt} \vphantom{\bigg)} \nonumber 
+
  {\tilde{H}}_2\big(L_1[L_3,L_4] +L_3[L_4,L_1]+L_4[L_1,L_3] \big) 
\\
 & \hspace{11pt} \vphantom{\bigg)} \nonumber 
+
 {\tilde{H}}_3\big(L_1[L_2,L_4] +L_2[L_4,L_1]+L_4[L_1,L_2]  \big) \\  \nonumber
 & \hspace{11pt}\vphantom{\bigg)}
+
  {\tilde{H}}_4\big(L_1[L_2,L_3] +L_2[L_3,L_1]+L_3[L_1,L_2]  \big) 
\Big] \nonumber \\&  \vphantom{\bigg)}
  + 8   \Big[  L_1\big({\tilde{H}}_2[{\tilde{H}}_3,{\tilde{H}}_4] +{\tilde{H}}_3[{\tilde{H}}_4,{\tilde{H}}_2]+{\tilde{H}}_4[{\tilde{H}}_2,{\tilde{H}}_3] \big)  \nonumber
\\ & \hspace{11pt} \vphantom{\bigg)} \nonumber 
+
 L_2\big({\tilde{H}}_1[{\tilde{H}}_3,{\tilde{H}}_4] +{\tilde{H}}_3[{\tilde{H}}_4,{\tilde{H}}_1]+{\tilde{H}}_4[{\tilde{H}}_1,{\tilde{H}}_3] \big) 
\\
 & \hspace{11pt} \vphantom{\bigg)} \nonumber 
+
  L_3\big({\tilde{H}}_1[{\tilde{H}}_2,{\tilde{H}}_4] +{\tilde{H}}_2[{\tilde{H}}_4,{\tilde{H}}_1]+{\tilde{H}}_4[{\tilde{H}}_1,{\tilde{H}}_2]  \big) \\
 & \hspace{11pt} \vphantom{\bigg)}\vphantom{\bigg)}
+
  L_4\big({\tilde{H}}_1[{\tilde{H}}_2,{\tilde{H}}_3] +{\tilde{H}}_2[{\tilde{H}}_3,{\tilde{H}}_1]+{\tilde{H}}_3[{\tilde{H}}_1,{\tilde{H}}_2]  \big) 
\Big] 
\Big\}\, ,  
\label{eq:NCPolynomial4Riem}
\end{align}
and bi-tracial action 
\begin{align} \label{eq:B4Riem}
 \mtc B_4^{\mtr{Riemann}}&= 
 8
\sum_{\mu,\nu}  
\TrN {\tilde{H}}_{\mu} 
\cdot \TrN({\tilde{H}}_{\mu} {\tilde{H}}_\nu^2) 
 - \TrN(L_{\mu}) \cdot  \TrN(L_{\mu} L_{\nu}^2)
 \\
 &
 +\sum_{\mu,\nu=1}^4  \nonumber
\Big\{ 2\TrN({\tilde{H}}_{\mu}^2)\cdot\TrN({\tilde{H}}_{\nu}^2) +4   \big[\TrN({\tilde{H}}_{\mu}{\tilde{H}}_{\nu})\big]^2  \Big\}
\\  \nonumber
&+\sum_{\mu,\nu=1}^4 
\Big\{ 2\TrN(L_{\mu}^2)\cdot\TrN(L_{\nu}^2) +4   \big[\TrN(L_{\mu}L_{\nu})\big]^2  \Big\}
\\  \nonumber
  &+ 4\sum_{\mu=1}^4
\Big\{ 
2 \TrN L_{\mu} \cdot \TrN(L_{\mu}{\tilde{H}}_{\mu}^2)-2  \TrN {\tilde{H}}_{\mu}  \cdot \TrN({\tilde{H}}_{\mu} L_{\mu}^2) \\
&\hspace{38pt} -\TrN({\tilde{H}}_{\mu}^2)\cdot \TrN(L_{\mu}^2) + 2\big[\TrN(L_{\mu} {\tilde{H}}_{\mu})\big]^2
\Big\}  \nonumber
    \\  \nonumber
& \vphantom{\bigg)} +24 \sum_{\mu\neq \nu=1}^4  
 \TrN L_\mu \cdot \TrN\big(L_\mu X^2_{\nu}\big) -\TrN {\tilde{H}}_\nu \cdot \TrN\big(L_\mu^2 {\tilde{H}}_{\nu}\big)  
 \\  \nonumber
  & \vphantom{\bigg)} +12 \sum_{\mu\neq \nu}
 \Big\{ 
 2 \big[\TrN(L_\mu {\tilde{H}}_\nu)\big]^2 - \TrN(L_\mu^2 ) \cdot \TrN({\tilde{H}}_\nu^2)  
\Big\}  \nonumber
\\ 
 &  \vphantom{\bigg)}\nonumber +24  \TrN {\tilde{H}}_1  \cdot \TrN\big(L_2[L_3,L_4]\big)   +24
   \TrN {\tilde{H}}_2 \cdot \TrN \big ( L_1[L_3,L_4] \big) \\
   &  \vphantom{\bigg)} \nonumber +24\TrN {\tilde{H}}_3 \cdot \TrN \big (L_1[L_2,L_4] \big)  +24
   \TrN {\tilde{H}}_4 \cdot \TrN \big (L_1[L_2,L_3] \big)  
   \\
  & \vphantom{\bigg)}  \nonumber -8 \TrN L_1  \cdot \TrN\big(  [L_3,L_4] {\tilde{H}}_2 +  [L_2,L_4] {\tilde{H}}_3 
  +  [L_2,L_3] {\tilde{H}}_4\big) \\
  & \vphantom{\bigg)} \nonumber  -8 \TrN L_2  \cdot \TrN\big(  [L_3,L_4] {\tilde{H}}_1 +  [L_4,L_1] {\tilde{H}}_3 
  +  [L_3,L_1] {\tilde{H}}_4\big) 
  \\
  &  \vphantom{\bigg)}\nonumber  -8 \TrN L_3  \cdot \TrN\big(  [L_2,L_4] {\tilde{H}}_1 +  [L_4,L_1] {\tilde{H}}_2 
  +  [L_1,L_2] {\tilde{H}}_4\big) \\
  &  \vphantom{\bigg)}\nonumber
   -8 \TrN L_4 \cdot \TrN\big(  [L_2,L_3] {\tilde{H}}_1 +  [L_1,L_3] {\tilde{H}}_2 
  +  [L_1,L_2] {\tilde{H}}_3\big)
  \\
 &  \vphantom{\bigg)} \nonumber -    24 \TrN L_1  \cdot \TrN\big({\tilde{H}}_2[{\tilde{H}}_3,{\tilde{H}}_4]\big) 
  -24
   \TrN L_2 \cdot \TrN \big ( {\tilde{H}}_1[{\tilde{H}}_3,{\tilde{H}}_4] \big) \\ \nonumber
   &  \vphantom{\bigg)} -24 \TrN L_3 \cdot \TrN \big ({\tilde{H}}_1[{\tilde{H}}_2,{\tilde{H}}_4] \big)  -24
   \TrN L_4 \cdot \TrN \big ({\tilde{H}}_1[{\tilde{H}}_2,{\tilde{H}}_3] \big)  
   \\
  &  \vphantom{\bigg)} \nonumber  - 8   \TrN {\tilde{H}}_1  \cdot \TrN\big(  [{\tilde{H}}_3,{\tilde{H}}_4] L_2 +  [{\tilde{H}}_2,{\tilde{H}}_4] L_3 
  +  [{\tilde{H}}_2,{\tilde{H}}_3] L_4\big) \\
  & \vphantom{\bigg)}  \nonumber - 8 \TrN {\tilde{H}}_2  \cdot \TrN\big(  [{\tilde{H}}_3,{\tilde{H}}_4] L_1 +  [{\tilde{H}}_4,{\tilde{H}}_1] L_3 
  +  [{\tilde{H}}_3,{\tilde{H}}_1] L_4\big) 
  \\
  &  \vphantom{\bigg)} \nonumber   -8 \TrN {\tilde{H}}_3  \cdot \TrN\big(  [{\tilde{H}}_2,{\tilde{H}}_4] L_1 +  [{\tilde{H}}_4,{\tilde{H}}_1] L_2 
  +  [{\tilde{H}}_1,{\tilde{H}}_2] L_4\big) \\
  & \vphantom{\bigg)} \nonumber -8
  \TrN {\tilde{H}}_4 \cdot \TrN\big(  [{\tilde{H}}_2,{\tilde{H}}_3] L_1 +  [{\tilde{H}}_1,{\tilde{H}}_3] L_2 
  +  [{\tilde{H}}_1,{\tilde{H}}_2] L_3\big) \,  .  
\end{align}

\subsubsection{Lorentzian fuzzy geometries}  \label{sec:Lor}
Here we keep the usual conventions: the index $0$ for 
time and (undotted) Latin spatial indices $a=1,2,3$. 
In the Lorentzian setting   
$g=\diag(+1,-1,-1,-1)$, so $q=3$, $e_0=+1$
and $e_a=-1$ for each spatial $a$. 
A parametrization of the Dirac operator
of the form $D=D(H, L_a, \Ht, \Lt_a)$   
by six anti-Hermitian matrices 
$K_a=L_a$, $X_a=\Lt_a$ and 
two Hermitian matrices $K_0=H$ and $X_0=\Ht$ follows.
As before, we give first the quadratic part 
and then the quartic. The former follows from
eq. \eqref{eq:generalD2d4},
\begin{align}  \nonumber
 \frac{1}{8}\Tr D^2
 &= N \TrN \big\{ H^2+\Ht^2-\textstyle\sum_a (L_a^2+\Lt_a^2)\big\}\\
 & +(\TrN H)^2  +(\TrN \Ht)^2  +\sum_a(\TrN L_a)^2 
 +(\TrN \Lt_a)^2\,. \label{eq:B2orazS2Lorentz}
 \end{align}
Using eqs. \eqref{eq:useful1} and \eqref{eq:useful2}
to rewrite Proposition \ref{thm:Haupt},
one gets

\begin{align} \nonumber
\mtc S_4^{\mtr{Lorentz}}  &=  \TrN \Big\{ 
2 H^4 + 2\Ht^4 +
\sum_a \big(  L_{  a }^4 +  \Lt_{ a}^4  \big)
- \sum_a  \big[ 2 (L_a \Lt_a)^2  + 4 L_a^2 \big]
\\ \nonumber
& +
\sum_{a} \big[ 
-8H^2 L^2_{ a}
-8\Ht^2 \Lt^2_{ a}
+4(H L_{ a})^2
+
4(\Ht \Lt_{ a})^2\big]
\\
& +
\sum_{a<c } \big[ 8 L^2_{ a} L^2_{ c}
+8\Lt^2_{ a} \Lt^2_{ c} 
-4( L_{ a} L_{ c })^2
-4( \Lt_{ a} \Lt_{ c })^2
\big]
\nonumber  
\\ &  \nonumber  
- \sum_{a} \big[
2(H\Lt_a)^2 +
4 H^2 \Lt_a^2
+ 2( L_a \Ht)^2 +
4  L_a^2 \Ht^2
\big]
\\
\nonumber
&
+\sum_{a\neq c}
2(L_a \Lt_c)^2 + 4 L_a^2\Lt_c^2
-2 (H\Ht)^2 + 4   H^2\Ht^2 
\\
\nonumber
&+ 8   \Big[ \Ht\big(L_1[L_2,L_3] +L_2[L_3,L_1]+L_3[L_1,L_2] \big)
\\ & \hphantom{+8\Big[} \vphantom{\bigg)} \nonumber 
- \Lt_1\big(H[L_2,L_3] +L_2[L_3,H]+L_3[H,L_2] \big) 
\\
 & \hphantom{+8\Big[} \vphantom{\bigg)} \nonumber 
- \Lt_2\big(H[L_1,L_3] +L_1[L_3,H]+L_3[H,L_1]  \big) \\  \nonumber
 & \hphantom{+8\Big[}\vphantom{\bigg)}
- \Lt_3\big(H[L_1,L_2] +L_1[L_2,H]+L_2[H,L_1]  \big) 
\Big] \nonumber \\&  \vphantom{\bigg)}
  +8 \Big[  H\big(\Lt_1[\Lt_2,\Lt_3] +\Lt_2[\Lt_3,\Lt_1]+\Lt_3[\Lt_1,\Lt_2] \big)  \nonumber
\\ & \hphantom{+8\Big[} \vphantom{\bigg)} \nonumber 
- L_1\big(\Ht[\Lt_2,\Lt_3] +\Lt_2[\Lt_3,\Ht]+\Lt_3[\Ht,\Lt_2] \big) 
\\
 & \hphantom{+8\Big[} \vphantom{\bigg)} \nonumber 
- L_2\big(\Ht[\Lt_1,\Lt_3] +\Lt_1[\Lt_3,\Ht]+\Lt_3[\Ht,\Lt_1]  \big) \\
 & \hphantom{+8\Big[} \vphantom{\bigg)}\vphantom{\bigg)}
-L_3\big(\Ht[\Lt_1,\Lt_2] +\Lt_1[\Lt_2,\Ht]+\Lt_2[\Ht,\Lt_1]  \big) 
\Big] 
\Big\}\, ,  
\label{eq:NCPolynomial4Lor}
\end{align}
and

\begin{salign}
\mtc B_4^{\mtr{Lorentz}}  &= \vphantom{\sum_a}     \numerada \label{eq:B4Lor}
 8 \TrN \Ht \cdot \TrN  \Big\{\Ht^3
 -\textstyle\sum_a (\Ht \Lt_a^2+3L_a \Ht^2\big) +\Ht H^2 \\\vphantom{\sum_a}   
& \hspace{55pt} +
3 L_1[L_2,L_3] 
+
[\Lt_3,\Lt_2]L_1
+[\Lt_3,\Lt_1]L_2
+[\Lt_2,\Lt_2]L_3
\Big\} \\ & \vphantom{\sum_a}   
+8 \TrN H \cdot \TrN \Big\{
H^3-\textstyle\sum_a (HL_a^2 -3 H \Lt_a^2\big) + H \Ht^2
\\ \vphantom{\sum_a}   &  \hspace{55pt} 
+3 \Lt_1[\Lt_2,\Lt_3]
+[L_3,L_2]\Lt_1 
+[L_3,L_1]\Lt_2
+[L_2,L_1]\Lt_3
)
\Big\} \\ &
 \vphantom{\sum_a}   
 + 8 \sum_a \TrN \Lt_a   \cdot \TrN  \Big\{
 \Lt_aH^2  - \Lt_a \textstyle\sum_c \Lt_c^2 - 
 L_a \Lt_a^2 
 \\ &
 \vphantom{\sum_a}   \hspace{95pt}
+3 H^2 \Lt_a -3\textstyle \sum_{\substack{c (c\neq a)}}
 \Lt_a^2 L_c
\Big\}
     \\
  & \vphantom{\sum_a}   - 8 \TrN \Lt_1  \cdot \TrN\big(  [\Lt_2,\Lt_3] H - [\Lt_3,\Ht] L_2 
  - [\Lt_2,\Ht] L_3
+3
   H[L_3,L_2] \big) 
  \\
  & \vphantom{\sum_a}     -8 \TrN \Lt_2 \cdot \TrN\big(  [\Lt_1,\Lt_3] H - [\Lt_3,\Ht] L_1 
  - [\Ht,\Lt_1] L_3  
   +3 H[L_3,L_1] \big) \\
  & \vphantom{\sum_a}  -8
  \TrN \Lt_3   \cdot \TrN\big( [\Lt_1,\Lt_2] H - [\Ht,\Lt_2] L_1 
  - [\Ht,\Lt_1] L_2
   +3  H[L_2,L_1] \big)   
   \\ & \vphantom{\sum_a}   
 + 8 \sum_a \TrN L_a \cdot \TrN \Big\{
 L_aH^2  - L_a \textstyle\sum_c L_c^2 -\Lt_a L_a^2  
 \\ &
 \vphantom{\sum_a}   \hspace{95pt}
+ 3 L_a \Ht^2 -3\textstyle \sum_{\substack{c (c\neq a)}}
 L_a \Lt_c^2
\Big\} \\   
  & \vphantom{\sum_a}   -8 \TrN L_1  \cdot \TrN\big(   [L_2,L_3] \Ht - [L_3,H] \Lt_2 
  - [L_2,H] \Lt_3   +3  \Ht[\Lt_3,\Lt_2] \big)
  \\
  & \vphantom{\sum_a}   -8 \TrN L_2 \cdot \TrN\big(  [L_1,L_3] \Ht - [L_3,H] \Lt_1 
  - [H,L_1] \Lt_3     +3  \Ht[\Lt_3,\Lt_1] \big)  
    \\
  & \vphantom{\sum_a} 
   -8 \TrN L_3 \cdot \TrN\big(  [L_1,L_2] \Ht - [H,L_2] \Lt_1 
  - [H,L_1] \Lt_2  + 3\Ht[\Lt_2,\Lt_1] \big)  
  \\ 
  & \vphantom{\sum_a}  
 +6\big[\TrN{\Ht^2}\big]^2
+ \sum_a \big\{ 4 \pair{\Lt_a^2}{\Ht^2}
-8 \big[ \TrN(\Ht \Lt_a) \big]^2\big\}\\  
& \vphantom{\sum_a}+\sum_{a,c}
\Big( 2 \pair{\Lt_a^2}{\Lt_c^2} + 4 \big[\Tr(\Lt_a\Lt_c)\big]^2\Big)\\
& \vphantom{\sum_a}
+6\big[\TrN{H^2}\big]^2
+ \sum_a \big\{ 4 \pair{L_a^2}{H^2}
-8 \big[ \TrN(H L_a) \big]^2\big\}\\  
& \vphantom{\sum_a}+\sum_{a,c}
\Big( 2 \pair{L_a^2}{L_c^2} + 4 \big[\Tr(L_aL_c)\big]^2\Big)
\\  
 & \vphantom{\sum_a} +4 \pair{\Ht^2}{H^2} +8 \big[ \TrN(H\Ht)\big]^2 \\
 & \vphantom{\sum_a} +4 \sum_a \pair{\Lt_a^2}{L_a^2}
 +8 \big[ \TrN(L_a \Lt_a)\big]^2
\\ & \vphantom{\sum_a} 
+24\sum_a \big\{ \big[ \TrN(H\Lt_a) \big]^2 +   \big[ \TrN(\Ht L_a) \big]^2 \big\} 
 \\  
  & \vphantom{\sum_a}+  12 \sum_{ a\neq c} 
  2 \big[ \TrN(L_a \Lt_c) \big] ^2  + \pair{L_a^2}{\Lt_c^2}
\\ & \vphantom{\sum_a} -12 
\sum_a 
\big\{ 
\pair{L_a^2}{\Ht^2}
+\pair{\Lt_a^2}{H^2}
\big\} \,.
\end{salign}
\begin{remark}\label{thm:tracelessness}
Each anti-hermitian parametrizing matrix $L$ can be
replaced by a traceless one $L'=L-(\TrN L/N) \cdot 1_N$,  since $L$ appears in $D$ only via anti-commutators; since $\TrN L$ is purely imaginary, $L'$ is also 
anti-hermitian.
\end{remark}

\afterpage{\thispagestyle{plain}}
\section{Large-$N$ limit via free probability?}\label{sec:freeP}

Some evidence for viability of a free probabilistic 
approach to the large-$N$ limit is the 
numerical analysis \cite{BarrettGlaser} of 
 \[
 F=\frac{\sum_\mu  [\TrN H_\mu ]^2}{N\big[ \sum_\mu \TrN(H_\mu^2) \big]}\,.
 \]  
 The observable $F$ does not trivially 
vanish at $N\to \infty$, since the double trace $\TrN\cdot \TrN $ in the numerator 
competes with the denominator's $N  \times \TrN$.
In \cite[Fig. 14]{BarrettGlaser}, for the model 
$\Tr(D^4 - |\lambda| D^2)$
in some low-dimensional geometries of signatures
$(1,0), (1,1), (2,0)$ and $(0,3)$,  $F$  is plotted 
in the region $\lambda \in (1,5)$. For each 
signature, a vanishing $F \approx 0$ is found in the regions $   1 \lessapprox \lambda \lessapprox 2$. 
Preliminarily, $F$ is related to the quotient  $\mtc B_2/\mtc S_2$ 
of the bi-trace
 and single-trace functionals introduced here. This 
 motivates free probabilistic tools.

It is well-known that free probability and
(multi)matrix models are related \cite{nica_speicher,GuionnetJonesShlyakhtenko07,GuionnetSDE,NowakTarnowski}. One could start with  
noncommutative self-adjoint polynomials $P \in \re \langle 
x_1,\ldots,x_\kappa \rangle $  \cite{SpeicherLectures}.
To wit,  
$P(x_1,\ldots,x_\kappa)=P(x_1,\ldots,x_\kappa)^*$, 
if each of the noncommutative variables $x_i$ satisfies
formal self-adjointness $x_i^*=x_i$.
For instance, the next polynomials are self-adjoint: 
\begin{subequations} \label{eq:NCPs}
\begin{align}
 P_2(x_1,\ldots ,x_\kappa) &= x_1^2+ \ldots + x_\kappa^2  
 + \frac{ \lambda}2  \sum_{i\neq j} x_ix_j \,, \\
 P_4(x_1,\ldots ,x_\kappa) &=
  x_1^4+ \ldots + x_\kappa^4 +\frac12 \sum_{i\neq j}\big( \lambda_1  x_ix_jx_ix_j 
 + \lambda_2 x_i^2x_j^2 \big)\,,
\end{align}
\end{subequations}
being $ \lambda_i,\lambda$ real coupling constants.
One can instead evaluate $P$ in square matrices of size, say, $N$ 
and define 
\begin{align}
\dif \nu_N(\mathbb X\hp N)=
 \dif \nu_N(X_1\hp N,\ldots,X_\kappa\hp N) &= C_N  
 \cdot \ee^{-N^2 \TrN [P(X_1\hp N,\ldots,X_\kappa\hp N)] } \\ 
 &\qquad\qquad  \nonumber
 \cdot 
 \dif \Lambda ({X_1\hp N})\cdots \dif \Lambda ({X_\kappa\hp N})
 \end{align}
being $C_N$ a normalization constant and $\Lambda$ the 
Lebesgue measure
\[\dif \Lambda(Y)= \prod_i \dif Y_{ii} \prod_{i< j} 
\Re (\dif Y_{ij})
\Im (\dif Y_{ij}),  \qquad Y\in M_N(\C)\,.\]
The  distributions $\varphi_N$
defined by 
\begin{align}
\varphi_N(X_{j_1}\hp{N},\ldots,X_{j_k}\hp N)=\int \TrN(X_{j_1}\hp{N} \cdots  X_{j_1}\hp N) \dif \nu_N(\mathbb X\hp N)
\end{align}
and their eventual convergence to
distributions $\varphi (X_{j_1},\ldots , X_{j_k})$ for large $N$ 
is of interest in free probability. 
As we have shown in Sections 
\ref{sec:computeSA}, \ref{sec:d2} and \ref{sec:d4},
we have developed a geometrically interesting way to produce noncommutative
polynomials. Although these are not directly self-adjoint,
self-adjointness is not essential in order for one to ponder the possible
convergence of the measures they define.  As far as 
the trace does not detect it, a weaker notion suffices:

\begin{definition}
Given variables 
$z_1,\ldots, z_\kappa$,
each of which satisfies either formal
self-adjointness (i.e. for an involution $*$, $z^*_i= +z_i$ holds, in whose case we let $z_i=: h_i$) or formal 
anti-self-adjointness ($z^*_i= - z_i$; and if so write $z_i=: l_i$), 
a noncommutative (NC) polynomial $
P \in \re \langle 
z_1,\ldots,z_\kappa \rangle $ is said to be \textit{cyclic self-adjoint}
if the following conditions hold:

\begin{itemize}
\itemB 
for each word $w$ (or monomial) of $P$ there exists
a word $w'$ in $P$ such that
\begin{align}
 \qquad[w(z_1 ,\ldots,z_\kappa)]^*
 = +(\sigma \cdot w')(z_{1} ,\ldots,z_{\kappa}  ) 
  \text{ holds for some } \sigma \in \Z/ |w'|\Z \,,
 \end{align}
being 
\begin{itemize}
\itemW 
$|w|$ the length of the word $w$ (or order 
of the monomial $w$)
and 
\itemW $\sigma \cdot w'$ the action of $ \Z/ |w'|\Z$ on the word $w'$ by cyclic permutation 
of its letters.
\end{itemize}
\itemB The map defined by 
$w\mapsto w'$ is a bijection 
in the set of the words of $P$.
\end{itemize}
Similarly, a polynomial $G \in \re \langle z_1,\ldots, z_\kappa\rangle$
is \textit{cyclic anti-self-adjoint} if for each of its words $w$ 
if there exist a  $
   \sigma \in \Z/ |w'|\Z $ for which the condition 
\begin{align}
 [w(z_1 ,\ldots,z_\kappa)]^*
 =- (\sigma \cdot w')(z_{1} ,\ldots,z_{\kappa}  ) 
 \end{align}
holds, and if, additionally, the map that results from this condition, $w\mapsto w'$, is a bijection in the set of words
of $G$. 
\end{definition}

\begin{example} \label{thm:NCpolyn} Consider the formal adjoint $P^*$ of the NC polynomial $P$ given by
$P(h,l_1,l_2,l_3)=l_1  \{ h(l_2 l_3 -l_3 l_2) + l_2 (l_3 h -h l_3)+ l_3 (h l_2-l_2h) \}$. One obtains
\begin{align} \nonumber
[P(h,l_1,l_2,l_3)]^*& = 
(h[l_2,l_3] +l_2[l_3,h]+l_3[h,l_2])^* l_1^* \\
&= \{ (l_2 l_3-l_3l_2)h + (l_3 h-hl_3) l_2
+ (hl_2 -l_2h)l_3 ) \} l_1 \label{eq:Ppolyn}
\\
&=  ( h[l_2,l_3] + l_2[l_3,h] + l_3[h,l_2] ) l_1 \nonumber\,,
\end{align}
where $[\balita\,,\balita\,]$ is the commutator in 
$\re \langle h,l_1,l_2,l_3 \rangle $.
Clearly $P^* \neq P$, but 
up to the cyclic permutation 
$\sigma \in \Z/4 \Z$ defined by  
bringing the letter $l_1$ from the last 
to the first position of each 
word, a bijection of words in $P$ 
is established. Hence $P$ is cyclic self-adjoint. 
On the other hand, take $\Psi (h,l_2,l_3)= (l_2 l_3 -l_3l_2)h$. By 
a similar token, one sees that $\Psi$  is cyclic anti-self-adjoint. 
\end{example}
 
It would not be surprising that the spectral action $\Tr f(D)$ 
for fuzzy geometries (since it has to 
be real) in any dimension and allowed KO-dimension 
leads for any ordinary polynomial $f$ to 
the type of NC polynomials we just introduced. Preliminarily, we 
verify this statement only for the explicit computations
we performed in this article:
\begin{cor} 
For the following cases
\begin{itemize}
 \itemB $d=2$, in arbitrary signature and being $f$ a sextic polynomial; and  
 \itemB Riemannian and Lorentzian signatures ($d=4$) being $f$ quartic polynomial,
\end{itemize}
the spectral action $\Tr f(D) = \dim V (N\cdot \mtc S_f + \mtc B_f)$ 
for fuzzy geometries has the form 
\begin{equation}
 \label{eq:NCPolyninSA}
\mtc S_f= \TrN P \quad \text{and} \quad 
\mtc B_f = \sum_i \TrN  \Phi_i  \cdot \TrN \Psi_i\,,
 \end{equation}
where $P, \Phi_i, \Psi_i\in 
\re \langle x_1,\ldots,z_{\kappa(d)}\rangle$ with $\kappa(d)=2^{d-1}$ are NC polynomials 
such that 
\begin{itemize}
 \itemB $P$ is cyclic self-adjoint  
 \itemB and $\Phi_i$ and $\Psi_i$  are
both either cyclic self-adjoint or both 
cyclic anti-self-adjoint.  
\end{itemize}
\end{cor}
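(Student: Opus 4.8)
The plan is to read the polynomials $P,\Phi_i,\Psi_i$ directly off the explicit formulas already derived — Proposition \ref{thm:Dsix} for the $d=2$ sextic case, and Proposition \ref{thm:Haupt} together with its specializations \eqref{eq:NCPolynomial4Riem}--\eqref{eq:B4Lor} for the Riemannian and Lorentzian quartic cases — and then to verify the two structural assertions about formal adjoints, since each formula is manifestly of the announced shape $\frac{1}{\dim V}\Tr f(D)=N\,\TrN P+\sum_i\TrN\Phi_i\cdot\TrN\Psi_i$. Throughout I would lean on a single governing identity, valid for any formal word $w=K_{I_1}\cdots K_{I_m}$, that follows from the (anti-)hermiticity rule $K_I^{*}=e_IK_I$ of Table \ref{tab:HorL}: namely $w^{*}=(e_{I_1}\cdots e_{I_m})\,K_{I_m}\cdots K_{I_1}$, i.e. the formal adjoint reverses a word and multiplies it by the product of its letter-signs.

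For the single-trace part I would first note, via Corollary \ref{thm:largeN}, that before any cyclic simplification $\mathfrak s_n(\chi)=\sum_I\chi^{I_1\ldots I_{2t}}\{\TrN w_I+(e_{I_1}\cdots e_{I_{2t}})\TrN w_I^{\mathrm{rev}}\}$, and that the adjoint identity turns the second summand into $\TrN w_I^{*}$; since $\chi^{I_1\ldots I_{2t}}$ is real, the pre-simplified polynomial $P=\sum_I\chi^{I_1\ldots I_{2t}}(w_I+w_I^{*})$ obeys $P^{*}=P$. I would then pass to the compact representatives displayed in Propositions \ref{thm:Dsix} and \ref{thm:Haupt}, obtained by rewriting words under the cyclicity of $\TrN$, and establish cyclic self-adjointness exactly as in Example \ref{thm:NCpolyn}: compute $P^{*}$ and exhibit it as a cyclic, sign-preserving reordering of $P$, the bijection $w\mapsto w'$ being the one sending a word to the representative of its reversal.

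The one delicate point — and the step I expect to be the main obstacle — concerns the words whose letter-sign product $e_{I_1}\cdots e_{I_{2t}}$ equals $-1$, which by Claim \ref{thm:signos} are precisely the $\tau_2/\tau_5$-type contributions mixing a single and a triple gamma product. Taken in isolation such a word is only cyclic \emph{anti}-self-adjoint; what rescues cyclic self-adjointness of $P$ is that these words never occur alone but always assembled, through the antisymmetrizing chord-diagram coefficient $(-1)^{|\sigma|}$, into the commutator blocks displayed in \eqref{eq:useful1} and \eqref{eq:NCPolynomial4Riem}--\eqref{eq:NCPolynomial4Lor}. I would verify at the level of such a block — precisely the computation of Example \ref{thm:NCpolyn}, where $[l_1\{h[l_2,l_3]+l_2[l_3,h]+l_3[h,l_2]\}]^{*}=\{h[l_2,l_3]+l_2[l_3,h]+l_3[h,l_2]\}l_1$ — that reversal flips $(-1)^{|\sigma|}$ and thereby compensates the $-1$, so that the block is cyclic self-adjoint. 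The fact making this work is that the cyclic sum $\sum_{\mathrm{cyc}}A[B,C]$ is reversal-invariant, which I would isolate as a short lemma.

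For the bi-trace part I would treat each factor on its own. By Proposition \ref{thm:RandomPart} every $\Phi_i$ and $\Psi_i$ is again a word $K_{I_{b_1}}\cdots K_{I_{b_k}}$ (or its reversal, which $\TrN$ does not detect), so the same adjoint identity shows $\Phi_i^{*}$ equals the product of its letter-signs times the reversed word; hence $\Phi_i$ is cyclic self-adjoint when that sign is $+1$ and cyclic anti-self-adjoint when it is $-1$, with antisymmetrized commutator factors handled as above (compare the cyclic anti-self-adjoint $\Psi=[l_2,l_3]h$ of Example \ref{thm:NCpolyn}). To pin the \emph{matching} of types I would invoke reality: since $f$ is real and $D$ self-adjoint, $\Tr f(D)\in\re$, and a trace $\TrN\Phi$ is real when $\Phi$ is cyclic self-adjoint and purely imaginary when it is cyclic anti-self-adjoint; therefore each product $\TrN\Phi_i\cdot\TrN\Psi_i$ is real exactly when $\Phi_i$ and $\Psi_i$ share type, which rules out the mixed case. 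A final case-by-case remark closes the argument: for $d=2$ each spacetime index must appear an even number of times for $\chi^{I_1\ldots I_{2t}}\neq0$, so every letter-sign product is $+1$ and all factors are cyclic self-adjoint, whereas for $d=4$ both types genuinely occur, as one reads off termwise from \eqref{eq:B4Riem} and \eqref{eq:B4Lor}.
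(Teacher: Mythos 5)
Your overall strategy coincides with the paper's: read $P,\Phi_i,\Psi_i$ off the explicit formulas, use $[h,l]^*=[h,l]$, $[l_1,l_2]^*=-[l_1,l_2]$, $[h_1,h_2]^*=-[h_1,h_2]$ termwise, and resolve the genuinely non-obvious $\tau_2/\tau_5$ contributions by recognizing them as the antisymmetrized commutator blocks of Example \ref{thm:NCpolyn}. Your additional observation that, before cyclic simplification, Corollary \ref{thm:largeN} exhibits the single-trace part as $\sum_I \chi^{I_1\ldots I_{2t}}(w_I+w_I^*)$ — hence formally self-adjoint on the nose — is a nice structural upgrade on the paper's pure inspection. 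However, two steps do not close as written.

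First, the reality argument for matching the types of $\Phi_i$ and $\Psi_i$ has a logical gap: from $\Tr f(D)\in\re$ you may only conclude that the \emph{sum} $\sum_i\TrN\Phi_i\cdot\TrN\Psi_i$ is real, not that each summand is. Purely imaginary contributions from mixed-type pairs could a priori cancel among different $i$, so "which rules out the mixed case" does not follow without a linear-independence argument for the bi-trace functionals (which is not obvious here, given the massive cancellations these computations exhibit). The fix is the direct bookkeeping you already have the tools for: by Proposition \ref{thm:RandomPart} the two factors of a given term are $K_{I_{\Upsilon^{\mathrm c}}}$ and $(K^T)_{I_\Upsilon}$, whose letter-sign products $s_1=\prod_{i\in\Upsilon^{\mathrm c}}e_{I_i}$ and $s_2=\prod_{i\in\Upsilon}e_{I_i}$ multiply to $\prod_i e_{I_i}$; Claim \ref{thm:signos} makes this $+1$ except for the $\tau_2/\tau_5$ types, forcing $s_1=s_2$ and hence equal types, while for $\tau_2/\tau_5$ the antisymmetrization assembles one factor into a commutator block whose reversal picks up the compensating sign. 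Second, your closing remark for $d=2$ is false: evenness of each index in a non-vanishing $\chi^{I_1\ldots I_{2t}}$ only forces the \emph{total} letter-sign product to be $+1$, not the per-factor ones. For signature $(0,2)$, say, the term $\TrN(L_1)\cdot\TrN(L_1L_2^2)$ of Proposition \ref{thm:Dsix} has both factors cyclic \emph{anti}-self-adjoint; the corollary's conclusion still holds there, but not your claim that all $d=2$ factors are cyclic self-adjoint.
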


\begin{proof} 
Up to an irrelevant (as to assess cyclic self-adjointness) ambiguity in a global factor for $\Phi_i$ and $\Psi_i$,
all the NC polynomials can be read off 
from eqs. \eqref{eq:GeneralD2d2}, \eqref{eq:GeneralDcuatro} and 
Proposition \ref{thm:Dsix} for the $d=2$ case. 
For $d=4$, the result follows by inspection of each term,
which is immediate since formulae \eqref{eq:B4Lor}, \eqref{eq:B4Riem},
\eqref{eq:B2orazS2Riemann} and \eqref{eq:B2orazS2Lorentz} 
are given in terms of commutators. Then one uses 
that $[h,l]^*=[h,l]$, $[h_1,h_2]^*=-[h_1,h_2]$
and $[l_1,l_2]^*= - [l_1,l_2]$. 
\par 
The only non-obvious part is dealing with expressions like 
\[\mtc P= \Ht \{ L_1[L_2,L_3] +L_2[L_3,L_1]+L_3[L_1,L_2] \}\,,\] 
which appears in $\mtc S_4^{\mtr{Lorentz}}$ according to eq. \eqref{eq:NCPolynomial4Lor}. However, if $P$ is the NC polynomial given in eq. \eqref{eq:Ppolyn}
then $\mtc P$ equals 
$ P(Q,L_1,L_2,L_3) $, hence it is cyclic self-adjoint by Example \ref{thm:NCpolyn}.  Also for the NC polynomial $\Psi$ defined there, 
$-8\TrN  L_1  \cdot \TrN(\Psi(Q,L_2,L_3)) $ appears 
in the expression given by eq. \eqref{eq:B4Lor}
for $\mtc B_4^{\mtr{Lorentz}}$, being both $\Phi(L_1)=L_1$ and 
$\Psi$ cyclic
anti-self-adjoint. 
\end{proof}

\section{Conclusions} \label{sec:Conclusions} 
We computed the spectral action $\Tr f(D) $
for fuzzy geometries of even dimension $d$, whose `quantization' 
was stated as a
$2^{d-1}$-matrix model with action $ \Tr f(D) = \dim V  ( N \cdot \mtc S_f + \mtc B_f )$
being the single trace $\mtc S_f$ and bi-tracial 
parts $\mtc B_f$ of the form
\[
\qquad 
 \mtc S_f
=  \TrN F \,,
\qquad \mtc B_f=  \textstyle
\sum_i  (\TrN\otimes\TrN) \{  \Phi_i \otimes \Psi_i  \}\,.
\]
With the aid of chord diagrams
that encode non-vanishing traces on the spinor space $V$,
we organized the obtention of (finitely many) 
noncommutative polynomials $F$, $\Phi_i$ and $\Psi_i $  
in $2^{d-1}$ Hermitian or anti-Hermitian matrices in $M_N(\C)$. These 
polynomials are defined up to cyclic permutation of their words and
have integer coefficients that are independent of $N$.
We commented on a free probabilistic perspective
towards the large-$N$ limit of the spectral action
and adapted the concept of self-adjoint noncommutative
polynomial to a more relaxed one (cyclic self-adjointness) that 
is satisfied by $F$. Furthermore, for fixed $i$, either 
both $\Phi_i$ and $\Psi_i$ are cyclic self-adjoint 
or both are cyclic anti-self-adjoint.
\par 
On the one hand, we elaborated on 
2-dimensional fuzzy geometries in arbitrary signature $(p,q)$. 
When quantized (or randomized), the corresponding partition function is
\begin{align}
\mathcal Z\hp{p,q}= \int_{\mathcal M^{p,q}}
\ee^{-S(D)}
\dif{K_1}
\dif{K_2} \qquad D=D(K_1,K_2),\quad p+q=2\,.
\end{align} 
The space 
$\mathcal M^{p,q}$ of Dirac operators is
$
{\mathcal M^{p,q}} =
                      (\mathbb{H}_N)^{\times p}  \times  \mathfrak{su}(N)^{\times q} 
$,  but this simple parametrization 
does not generally hold for $d>2$. Here $\mathfrak {su}(N)=\mtr{Lie}(\mtr{SU}(N))$
stands for the (Lie algebra of) traceless $N\times N$ skew-Hermitian matrices
and $   \mathbb{H}_N$ for Hermitian matrices.  Concretely, 
in Section \ref{sec:d2} formulas for
$S(D)=\Tr(D^2 + \lambda_4 D^4 + \lambda_6 D^6 )$ are deduced,
but the present method enables to obtain $S(D)=\Tr f(D)$ 
for polynomial $f$. This first result is an 
extension (by the sextic term) of the spectral action presented by Barrett-Glaser \cite{BarrettGlaser}
up to quartic polynomials in $d=2$.
\par 
On the other hand, the novelties (to the best of our knowledge) are 
the analytic derivations we provided for  
Riemannian and Lorentzian fuzzy geometries---and in fact, 
in arbitrary signature in 4 dimensions---as well as a systematic approach 
that maps random fuzzy geometries to multi-matrix bi-tracial models. For one thing,
this sheds some light on arbitrary-dimensional geometries and, for other thing, 
on extensions to (quantum) models including bosonic fields.
For the quadratic-quartic spectral action $S(D)=\Tr(D^2 + \lambda_4 D^4 )$ 
computed in Section \ref{sec:d4} one could study the 
octo-matrix model 
\begin{align}
\mathcal Z\hp{p,q}= \int_{\mathcal M^{p,q}}
\ee^{-S(D)}
\textstyle\prod^4_{\mu=1} \dif{K_\mu}  \dif{X_\mu} \qquad  D=D(K_\mu,X_\mu),\quad  p+q=4 
\end{align}
being, in particular,
\begin{align}
{\mathcal M^{p,q}} =\begin{cases}
                     \mathbb{H}_N^{\times 4} \times \mathfrak{su}(N)^{\times 4} & p=0,\,q=4 \mbox{ (Riemannian)}, \\
                      \mathbb{H}_N^{\times 2} \times \mathfrak{su}(N)^{\times 6}  & p=1,\, q=3 \mbox{ (Lorentzian)}.
                    \end{cases}
\end{align}  
For rest of the signatures, ${\mathcal M^{p,q}}$
                  can be readily obtained with the aid of
                  the eq. \eqref{eq:mhm} as described for the Lorentzian 
                  and Riemannian cases.
As a closing point, it is pertinent to remark that 
determining whether the Dirac operator of a fuzzy geometry
is a 
truncation of a spin$^c$ geometry is a subtle 
problem addressed in 
\cite{GlaserStern} from the viewpoint of the Heisenberg 
uncertainty principle \cite{ChamseddineConnesMukhanov}.

\section{Outlook}\label{sec:Outlook}
We present a miscellanea of short outlook
topics after elaborating on the next: 

\subsection{An auxiliary model for the $d=1$ case} \label{sec:Aux}
One could reformulate the partition functions of $d=1$-models 
in terms of auxiliary models that do not contain multi-traces. 
We pick for concreteness the 
signature $(p,q)=(1,0)$ 
and the polynomial $f(x)=(x^2+\lambda x^4)/2$ 
for the spectral action $\Tr f (D)$.  
We explain why the ordinary matrix model 
given by
$ 
\Zc\peqsubind{(1,0)}^{\mtr{aux}}=\int_{ \mathbb{H}_N} \ee^{  \eta \Tr H-\alpha \Tr (H^2) +\beta  \Tr (H^3)
-N\lambda  \Tr (H^4)  }  \dif H
$ 
over the Hermitian $N\times N$ matrices $ \mathbb{H}_N$,
allows to restate the quartic-quadratic (1,0)-type  Barrett-Glaser 
model with partition function 
\begin{align} \label{eq:ZBG10}
  \Zc\peqsubind{(1,0)}^{\mtr{BG}}= \int_{\mtc M} \ee^{-\frac12\Tr (D^2+\lambda D^4)}\dif D
\end{align}
as formally equivalent to the functional 
\begin{align} \label{eq:EquivModels}
 \langle \exp \{-(3\lambda \Tr H^2 +4 \lambda \Tr H \cdot  \Tr H^3 + (\Tr H)^2)\}\rangle_{\mtr{aux},0}\,\,,
\end{align}
where the expectation value of an observable $\Phi$ is taken with respect to 
the auxiliary model 
\[
\langle \Phi   \rangle_{\mtr{aux}}
= \frac{1}{\Zc\peqsubind{(1,0)}^{\mtr{aux}}}
\int_{ \mathbb{H}_N}   \Phi(H) \ee^{-\mathscr S(H)} \dif H \,,
\]
being 
$
\mathscr S(H)
=
\alpha \Tr (H^2) + N\lambda  \Tr (H^4) +  \eta \Tr H + \beta  \Tr (H^3)
$.
The zero subindex `aux,0' means evaluation 
in the parameters 
\begin{equation}
\label{eq:param}
\alpha=N, \gamma=\beta =0\,.\end{equation}

Indeed, one can use the explicit form 
of the Dirac operator $D=\{H,  \balita\, \}$ 
to rewrite the integral in terms of 
the matrix $H$. One gets 
\begin{align} \label{eq:CambioVar}
\frac12 \Tr (D^2 +\lambda D^4) &=  N [ \Tr(H^2)+ \lambda \Tr(H^4) ]  \\
&\,\,\,\,+  3 \lambda [\Tr(H^2)]^2 +4\lambda \Tr H \cdot \Tr H^3  + [\Tr (H)]^2   
\nonumber 
\\ \nonumber 
&=   N [ \Tr(H^2)+ \lambda \Tr(H^4) ]+ \mathfrak{b}[H]\,.
\end{align}
The second line of eq. \eqref{eq:CambioVar} contains
the bi-tracial terms; this term will be denoted by $\mathfrak{b}(H)$.
Inserting last equations into \eqref{eq:ZBG10}
\begin{align*}
 \Zc\peqsubind{(1,0)}^{\mtr{BG}} & =  \int _{ \mathbb{H}_N} \ee^{-N \Tr (H^2 + \lambda H^4)} \ee^{-\mathfrak{b}[H]}
 \dif H  
\end{align*}
Since $\mathscr S(H)|_{\alpha=N, \gamma=\beta =0}=   N [ \Tr(H^2)+ \lambda \Tr(H^4) ]$,
one can replace the first exponential by $\ee^{-\mathscr S(H)}$ and 
evaluate the parameters as in eq. \eqref{eq:param}:
\begin{align*}
 \Zc\peqsubind{(1,0)}^{\mtr{BG}} & =  \int _{ \mathbb{H}_N} \Big[\ee^{-  \mathscr S(H)}\Big]_{0}
 \ee^{-\mathfrak{b}[H]}
 \dif H   \,.
\end{align*}
If one knows the partition function $\Zc^{\mtr{aux}}\peqsubind{(1,0)}$,
one can compute the model in question by taking out 
$\ee^{-\mathfrak{b} (H)}$ from the integral and 
accordingly substituting the traces by the appropriate derivatives:
\begin{align}
 \Zc\peqsubind{(1,0)}^{\mtr{BG}}= 
\bigg[ \ee^{-\mathfrak{b}_{\partial}}\int _{ \mathbb{H}_N}  \ee^{-  \mathscr S(H)}
 \dif H \bigg]_{0}\, \qquad
\where
\mathfrak{b}_{\partial}= 3\lambda \partial_\alpha^2+4\lambda \partial_\beta \partial_\gamma
 +\partial_{\alpha}\,.
\end{align}
That is $ \Zc\peqsubind{(1,0)}^{\mtr{BG}}=  [\ee^{-\mathfrak{b}_{\partial} } 
\Zc^{\mtr{aux}}\peqsubind{(1,0)} ]_0 
$, which also proves eq. \eqref{eq:EquivModels}.
This motivates to look for similar methods in order
restate, for $d\geq 2$, the bi-tracial part of the 
models addressed here as single-trace auxiliary multi-matrix models.

 \subsection{Miscellaneous}

 \begin{itemize}

 \itemB \textit{Gauge theory.} The NCG-framework pays off
 in high energy physics precisely for gauge-Higgs theories.
 A natural step would be to come back 
 to this initial motivation and to define 
 \textit{almost commutative fuzzy geometries} (ongoing 
 project) in order to derive from them the
 Yang-Mills--Higgs theory on a fuzzy base.
 \vspace{7pt}
 
 \itemB \textit{Analytic approach}. A  
  non-perturbative approach to matrix models, 
 which led to the solvability of all quartic matrix models \cite{RaimarSolution} (after key progress in \cite{RaimarPanzer}) 
 consists in exploiting the $\mtr U(N)$-Ward-Takahashi
 identities in order to descend the tower of the Schwinger-Dyson (or loop) equations (SDE).
 This was initially formulated for a 
 quartic analogue of Kontsevich's model \cite{Kontsevich:1992ti},
 but the Grosse-Wulkenhaar approach (SDE + Ward Identity \cite{GW12})
 showed also applicability to tensor field theory 
\cite{fullward,SDE}, and seems to be flexible.
 \vspace{7pt}
 
  \itemB \textit{Topological Recursion.} Probably
  the analytic approach would lead to a  
  (or multiple) Topological Recursion (TR), as it appeared 
  in \cite{RaimarSolution}. Alternatively, one could build upon the direct TR-approach \cite{KhalkhaliTR}. Namely, the blobbed \cite{BorotBlobbed}  Topological Recursion \cite{EynardOrantin,EynardTopologicalRecursion,Chekhov:2006rq,
Chekhov:2006vd} has been 
lately applied  \cite{KhalkhaliTR} to
 general multi-trace models that encompass 
 the 1-dimensional version of the models
 derived here. An extension of their 
TR to dimension $d\geq 2$ would be interesting. 
 \vspace{7pt}

 \itemB \textit{Combinatorics.} Finally, chord diagrams are combinatorially interesting by themselves. 
 For instance, together with decorated versions known as 
 Jacobi and Gau\ss$ $ diagrams, they are used in algebraic knot theory \cite[Secs 3.4 and 4]{chmutov_duzhin_mostovoy_2012} in order to describe Vassiliev invariants. 
 Those appearing here are related to 
 the Penner matrix model \cite{Penner:1988cza}. One can still explore their generating function
  \cite{GenChordDiags,Andersen:2016umz} in relation to 
  the matrix model with action 
  \[S(X)=\TrN [X^2/2 - st\cdot(1-tX)\inv]\,, \] for $X\in  \mathbb{H}_N$. The free energies 
  $\mathcal F_g$ of this Andersen-Chekhov-Penner-Reidys-Sułkowski (ACPRS) model $\mathcal Z\peqsupind{ACPRS}=\ee^{\sum_g N^{2-2g} \mathcal F_g}$
  generate numbers that are moreover important in computational biology, 
as they encode topologically non-trivial complexes of interacting RNA molecules. These
numbers are related to the isomorphism classes of chord diagrams 
with certain number cuts in the circle, 
leaving segments (`backbones', cf. \cite{GenChordDiags})
but also a connected diagram. For the ACPRS-model there is also a Topological 
Recursion (\textit{op. cit.}).

\end{itemize}

\begin{acknowledgements} 
I thank Lisa Glaser for comments that led to improvements of the manuscript
(in particular, I owe her Remark \ref{thm:tracelessness}), and Andrzej Sitarz
for hospitality. The author was supported by 
      the TEAM programme of the Foundation
for Polish Science co-financed by the European Union under the
European Regional Development Fund (POIR.04.04.00-00-5C55/17-00).
\end{acknowledgements}

\appendix
\section{Some properties of gamma matrices} \label{sec:App}

In order to deal with $d$-dimensional matrix
geometries we prove some of the properties of the corresponding 
gamma matrices.
    
First, notice that in any signature 
for each multi-index $I=(\mu_1,\ldots,\mu_r) \in \Lambda_d$ one has
\begin{align}
\label{eq:Inversion}
\ga{\mu_r}\cdots\ga{\mu_1}
=(-1)^{r(r-1)/2}
\ga{\mu_1}\cdots\ga{\mu_r}
=(-1)^{\lfloor r/2 \rfloor}
\ga{\mu_1}\cdots\ga{\mu_r}\,.
\end{align}
This can be proven by induction
on the number $r-1$ of products.
For $r=2$, this is just $\{\gam 1 ,\gam 2\} =0$,
which holds since the indices are different.
Suppose that eq. \eqref{eq:Inversion} holds
for an $r\in \N$. Then if $(\mu_1,\ldots,\mu_{r+1})\in \Lambda_d$,
one has
\begin{align*}
\gam {r+1}\gam {r} \cdots \gam{2}
\gam {1}&=(-1)^r(
\gam {r} \cdots \gam{2}
\gam {1})\gam {r+1}\\
& =(-1)^{r+r(r-1)/2}
\gam 1 \cdots \gam {r+1} \\
& =(-1)^{r(r+1)/2}
\gam 1 \cdots \gam {r+1}  
=(-1)^{\lfloor (r+1)/2 \rfloor}
\gam 1 \cdots \gam {r+1}\,.
\end{align*}
Now let us fix a signature $(p,q)$.
We dot the spacial indices $\dot c =1,\ldots,q$
and leave the temporal in usual Roman lowercase, $a=1,\ldots ,p$.
Given a mult-index $I=(a_1\ldots,a_t,\dot c_1,\ldots,\dot c_u)\in \Lambda_d$ 
(so $t\leq p$ and $u\leq q$) it will be useful
to know whether $\Gamma^I$ is Hermitian or anti-Hermitian. We compute its Hermitian conjugate $(\Gamma^I)^*$:  
\begin{align*} \label{eq:}
(\ga{a_1} \ldots  \ga {a_t}\ga {\dot c_1}\ldots \ga{\dot c_u} )^*&=
( \ga{\dot c_u})^*\ldots (\ga{\dot c_1})^*( \ga{a_t})^* \ldots  (\ga {a_1})^*
\\&=
(-1)^{u} \ga{\dot c_u}\ldots \ga{\dot c_1} \ga{a_t} \ldots  \ga {a_1}
\\
&=
(-1)^{u  + \lfloor (u+t)/2 \rfloor} \ga{a_1} \ldots  \ga {a_t}\ga {\dot c_1}\ldots \ga{\dot c_u}\,.
\end{align*}
With the conventions set in eq. \eqref{eq:ConvGammas}, have the following 
\begin{itemize}
 \itemB In Riemannian signature $(0,d)$
 a product of $u$ gamma matrices 
 associated to $I\in \Lambda_d$ 
 is (anti-)Hermitian if $u(u+1)/2$ is even (odd). 
 \itemB In $(d,0)$-signature 
 a product of $u$ gamma matrices 
 associated to $I\in \Lambda_d$
is Hermitian if $t(t-1)/2$ is even,
 and anti-Hermitian if it is odd.
 
\end{itemize}
In the main text, it will be useful
to know that 
\begin{equation}
 e_\mu e_{\hat \mu}
= (-1)^{q+1}\hspace{2cm} d=4,\mbox{ with signature } (p,q)\,. \label{eq:mhm}
\end{equation}
This follows from $
(\Gamma^{\hat \mu})^*
=e_{\hat \mu}
\Gamma^{\hat \mu}$, 
from $e_1e_2e_3e_4=(-1)^q$ 
and from
\[ 
(\Gamma^{\hat \mu})^*=
(\gamma^4)^*
\cdots 
\widehat{ (\gamma^\mu)^*}
\cdots
(\gamma^1)^*
=
-e_1e_2e_3e_4 e_{\mu} \gamma^1
\cdots 
\widehat{\gamma^\mu}
\cdots
\gamma^4 =-e_1e_2e_3e_4 e_{\mu} \Gamma^{\hmu}\,.
\]
%
%
 
 \section{Full computation of one chord diagram} \label{sec:appB}
 
Since $t=2n=6$ are constant in this section,
we drop the subindices $n$ in $\mathfrak{a}_n,\mathfrak{b}_n$
and $\mathfrak{s}_n$.
Exclusively in this appendix, we abbreviate the traces as follows:
\[
|\mu_i \mu_j \ldots  \mu_m| := \TrN(K_{\mu_i} K_{\mu_j}  \cdots K_{\mu_m} )\,
\qquad \mu_i,\mu_j,\ldots,\mu_m \in \{1,2\}\,.
\] 
Then the action functional $\mathfrak{a}(\chi)$ 
of a chord diagram $\chi$ of six points is given by  
\[
\mathfrak{a}(\chi) = 
\sum_{ \mu_1 \mu_2 \ldots  \mu_6} 
(-1)^{\mtr{cr}(\chi)} \Big(
\prod_{w\sim_\chi v} 
g^{\mu_v\mu_w}\Big)
\Big( \sum _{\Upsilon \in \mathscr P_6 }  
\big[
\prod_{i\in\Upsilon }e_{\mu_i}\big] \cdot 
|\mu(\Upsilon^c)| \cdot |\mu(\Upsilon)| 
\Big) \,,
\]
that is,
\begin{align*} 
\sum_{ \mu_1 \mu_2 \ldots  \mu_6}  &
(-1)^{\mtr{cr}(\chi)}
\prod_{w\sim_\chi v} 
g^{\mu_v\mu_w}\bigg[
N\big( \,| \mu_1 \mu_2 \ldots  \mu_6| + e  | \mu_6 \mu_5 \ldots  \mu_1| \,\big)\\
&+ \sum_i e_i \big( | \mu_1 \mu_2 \ldots \widehat{\mu_i}\ldots  \mu_6| + e | \mu_6 \mu_5 \ldots \widehat{\mu_i}\ldots  \mu_1|  \big)\cdot |\mu_i| \\
&+ \sum_{i < j} e_ie_j \big( | \mu_1 \mu_2 \ldots \widehat{\mu_{i\phantom{j}}\!}\ldots \widehat{\mu_j}\ldots   \mu_6| + e  
 | \mu_6 \mu_5 \ldots \widehat{\mu_j}\ldots \widehat{\mu_{i\phantom{j}}\!}\ldots  \mu_1| 
\big) \cdot |\mu_i\mu_j| \\
&+ \sum_{i < j < k} e_ie_je_k \Big( | \mu_1 \mu_2 \ldots \widehat{\mu_{i\phantom{j}}\!}\ldots \widehat{\mu_j}\ldots  \widehat{\mu_k}\ldots  \mu_6| \cdot |\mu_i\mu_j\mu_k|    \Big) 
\bigg]\,,
\end{align*} 
where $e=e(\mu_1,\ldots,\mu_6)=e_{\mu_1}\cdots e_{\mu_6}$. 
We just conveniently listed the terms corresponding to $\Upsilon$ and $\Upsilon^c$
together,
in the first line displaying those with 
$\# \Upsilon=0$ and $ \# \Upsilon=6$ (`trivial partitions');
in the second $\# \Upsilon=1$ or $5$;
on the third line $\# \Upsilon=2$ or $4$;
the fourth line corresponds
to the $\# \Upsilon=3$ cases. 
We also used the fact that  $e_\mu$ is a sign $\pm$, and that
$e \cdot  e_{\mu_i}e_{\mu_j}\cdots e_{\mu_v} $ 
equals the product of the $e_{\mu_r}$'s with  $r\neq i,j,\ldots,v$,
i.e. precisely those not appearing in $ e_{\mu_i}e_{\mu_j}\cdots e_{\mu_v} $.
But since  in the non-vanishing terms $e$ 
implies a repetition of indices by pairs,
$e\equiv 1$ for non-vanishing terms. Then we gain a 
factor $2$ for those terms (i.e. except for traces of three matrices) and $\mathfrak{a} (\chi)$ is therefore given by
\begin{subequations}
\label{eq:abcd}
\begin{align} 
\sum_{ \mu_1 \mu_2 \ldots  \mu_6} &
 \Big( (-1)^{\mtr{cr}(\chi)}
\prod_{w\sim_\chi v} 
g^{\mu_v\mu_v} \Big)\cdot \Big\{
\sum_{ \mu_1, \mu_2, \ldots,  \mu_6} 
2 N |\mu_1 \mu_2 \ldots  \mu_6| \label{eq:a} \\ \label{eq:b}
& + \sum_i 2 e_i  | \mu_1 \mu_2 \ldots \widehat{\mu_i}\ldots  \mu_6| \cdot |\mu_i| \\
&+ \sum_{i < j} 2 e_ie_j  | \mu_1 \mu_2 \ldots \widehat{\mu_{i\phantom{j}}\!}\ldots \widehat{\mu_j}\ldots   \mu_6|  \cdot |\mu_i\mu_j| \label{eq:c} \\
&+ \sum_{i < j < k} e_ie_je_k   | \mu_1 \mu_2 \ldots \widehat{\mu_{i\phantom{j}}\!}\ldots \widehat{\mu_j}\ldots  \widehat{\mu_k}\ldots  \mu_6| \cdot |\mu_i\mu_j\mu_k|    \Big\} \,. \label{eq:d}
\end{align}  
\end{subequations}
 We thus compute 
 the first diagram of
 6-points by giving line by line
 last expression. 
 We perform first the computation for the third line \eqref{eq:c} (since 
this is the longest)  explicitly, which can be expanded as 
\begin{align}2 \nonumber  
\sum_{\boldsymbol\mu} 
\raisebox{-.45\height}{\includegraphics[height=1.5cm]{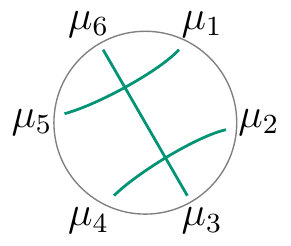}}\times
\big[&  e_{\mu_1}e_{\mu_2}
|\mu_1\mu_2 | \cdot |\mu_3\mu_4\mu_5\mu_6| 
+
e_{\mu_1}e_{\mu_3}
|\mu_1\mu_3 | \cdot | \mu_2  \mu_4\mu_5\mu_6| 
\\[-.41cm]+&   \nonumber 
e_{\mu_1}e_{\mu_4}
|\mu_1\mu_4 | \cdot | \mu_2 \mu_3 \mu_5\mu_6| 
+
e_{\mu_1}e_{\mu_5}
|\mu_1\mu_5| \cdot | \mu_2 \mu_3\mu_4 \mu_6|  \\\nonumber 
+ &  
e_{\mu_1}e_{\mu_6}
|\mu_1\mu_6| \cdot | \mu_2 \mu_3\mu_4\mu_5 | 
+
e_{\mu_2}e_{\mu_3}
|\mu_2\mu_ 3| \cdot |\mu_1 \mu_4\mu_5\mu_6| 
\\\nonumber 
+ &
e_{\mu_2}e_{\mu_4}
|\mu_2\mu_4 | \cdot |\mu_1  \mu_3  \mu_5\mu_6| 
+
e_{\mu_2}e_{\mu_5}
|\mu_2\mu_5 | \cdot |\mu_1  \mu_3\mu_4 \mu_6| 
\\+& \nonumber
e_{\mu_2}e_{\mu_6}
|\mu_2\mu_6 | \cdot |\mu_1  \mu_3\mu_4\mu_5 | 
+e_{\mu_3}e_{\mu_4}
|\mu_3\mu_4 | \cdot |\mu_1\mu_2  \mu_5\mu_6| 
\\+&\nonumber 
e_{\mu_3}e_{\mu_5}
|\mu_3\mu_5 | \cdot |\mu_1\mu_2 \mu_4\mu_6| 
+
e_{\mu_3}e_{\mu_6}
|\mu_3\mu_ 6| \cdot |\mu_1\mu_2 \mu_4\mu_5| 
\\\nonumber 
+&
e_{\mu_4}e_{\mu_5}
|\mu_4\mu_ 5| \cdot |\mu_1\mu_2 \mu_3\mu_6| 
+
e_{\mu_4}e_{\mu_6}
|\mu_4\mu_6 | \cdot |\mu_1\mu_2 \mu_3\mu_5| 
\\
+&
e_{\mu_5}e_{\mu_6}
|\mu_5\mu_6 | \cdot |\mu_1\mu_2 \mu_3\mu_4| \big]\,. \label{eq:explicitsix}
\end{align}
The diagram's meaning is the sign 
and product of $g^{\mu_v\nu_w}$'s 
before the braces in eq. \eqref{eq:abcd}. After contraction 
with the term in square brackets in 
\eqref{eq:explicitsix} one  gets 
\begin{align} \nonumber
2 \sum_{\mu,\nu,\rho} (e_\mu e_\nu e_\rho ) 
\big\{&
e_\rho |\mu\nu| \cdot |\rho\nu\mu\rho| 
+e_\rho |\mu\nu| \cdot |\rho\rho\mu\nu| +
e_\rho |\mu\nu| \cdot |\nu\rho\mu\rho|\\[-10pt] \nonumber
+& 
 e_\mu e_\rho e_\nu |\mu^2| \cdot |\nu\rho\nu\rho|
+ e_\rho |\mu\nu| \cdot |\rho\nu\rho\mu| +
e_\rho |\mu\nu| \cdot |\rho\mu\rho\nu| 
 \\
+&
 e_\mu e_\rho e_\nu |\mu^2| \cdot |\nu\rho\nu\rho|
+ e_\rho |\mu\nu| \cdot |\nu\rho\mu\rho|
+ e_\rho   |\mu\nu| \cdot |\rho\nu\mu\rho|
 \\
+&e_\rho |\mu\nu| \cdot |\rho\nu\rho\mu|
+e_\rho |\mu\nu| \cdot |\nu\rho^2\mu|  + e_\mu e_\rho e_\nu |\mu^2| \cdot |\nu\rho^2\nu| \nonumber
\\
+& 
e_\rho |\mu\nu| \cdot |\nu\mu\rho^2| +e_\rho |\mu\nu| \cdot |\rho\mu\nu\rho| 
+e_\rho |\mu\nu| \cdot |\mu\rho\nu\rho| 
\big\} \nonumber \,,
\end{align}
where the signs $ e_{\mu}e_{\nu}e_{\rho}$ 
are due to $g^{\lambda\sigma}=e_{\lambda}\delta^{\lambda\sigma}$ 
(no sum). Following the notation of eq. \eqref{eq:OW}, using 
the cyclicity of the trace and 
renaming indices, this expression can be written as 
\begin{align}\tag{\ref{eq:c}$'$}
2 \sum_{\mu,\nu,\rho}
(6R_{\mu\nu\rho}+6S_{\mu\nu\rho}+2T_{\mu\nu\rho}+U_{\mu\nu\rho}) \,.
\end{align}
Similarly, for the terms obeying 
$\#\Upsilon $ or $  \#\Upsilon^c = 1$,
i.e. line \eqref{eq:b}, one has
\begin{align*}
2 \sum_{\boldsymbol\mu} 
\raisebox{-.45\height}{\includegraphics[height=1.5cm]{figs/DiagI.pdf}}\times
\big\{&
e_{\mu_1}
|\mu_1| \cdot |\mu_2 \mu_3\mu_4\mu_5\mu_6| 
+
e_{\mu_2}
|\mu_2| \cdot |\mu_1 \mu_3\mu_4\mu_5\mu_6|
\\[-10pt]+&
e_{\mu_3}
|\mu_3| \cdot |\mu_1 \mu_2  \mu_4\mu_5\mu_6|
+
e_{\mu_4}
|\mu_4| \cdot |\mu_1 \mu_2 \mu_3 \mu_5\mu_6| \\[2pt]
 +& e_{\mu_5} |\mu_5| \cdot |\mu_1 \mu_2 \mu_3\mu_4\mu_6|
+e_{\mu_6}
|\mu_6| \cdot |\mu_1 \mu_2 \mu_3\mu_4\mu_5|
\big\}\,,
\end{align*}
which amounts to
\begin{align}\nonumber 
 &2\sum_{\mu,\nu,\rho}e_\nu e_\rho|\mu|\cdot \big\{ 
 |\nu\rho\nu\mu\rho| +|\nu\mu\rho\nu\rho|+ |\nu\rho\mu\nu\rho| 
 + |\nu\mu\rho\nu\rho| + |\mu\nu\rho\nu\rho|  +|\nu\rho\mu\rho\nu| \big\}\, ,
  \end{align}
or, relabeling, to
\begin{align}
2 \sum_{\mu,\nu,\rho} (4O_{\mu\nu\rho} +2P_{\mu\nu\rho}) \,.\tag{\ref{eq:b}$'$}
\end{align}
The terms with $\#\Upsilon = 3$ remain to be computed:
 \begin{align*}
\sum_{\boldsymbol\mu} 
\raisebox{-.45\height}{\includegraphics[height=1.5cm]{figs/DiagI.pdf}}
\times
\Big\{&
(e_{\mu_1}e_{\mu_2}e_{\mu_ 3}
+e_{\mu_4}e_{\mu_5}e_{\mu_ 6})
|\mu_1\mu_2 \mu_3| \cdot |\mu_4\mu_5\mu_6|  
\\[-12pt]
+&(e_{\mu_1}e_{\mu_2} e_{\mu_4}+e_{\mu_3}e_{\mu_5} e_{\mu_6})
|\mu_1\mu_2 \mu_4| \cdot |\mu_3\mu_5\mu_6| 
\\[3pt]
+&(e_{\mu_1}e_{\mu_2} e_{\mu_5}+ \emutres{3}{4}{6})
|\mu_1\mu_2  \mu_5| \cdot |\mu_3\mu_4\mu_6| 
\\[3pt]
+&(e_{\mu_1}e_{\mu_2} e_{\mu_6}+\emutres{3}{4}{5})
|\mu_1\mu_2  \mu_6| \cdot |\mu_3\mu_4\mu_5|
\\[3pt]+ &
(e_{\mu_1}e_{\mu_3}e_{\mu_4}+\emutres{2}{5}{6})
|\mu_1\mu_3\mu_4 | \cdot | \mu_2  \mu_5\mu_6| 
\\[3pt]
+&
(e_{\mu_1}e_{\mu_3}e_{\mu_5}+\emutres{2}{4}{6})
|\mu_1\mu_3\mu_5 | \cdot | \mu_2  \mu_4\mu_6| 
\\[3pt]+&
(e_{\mu_1}e_{\mu_3}e_{\mu_6}+\emutres{2}{4}{5})
|\mu_1\mu_3\mu_6 | \cdot | \mu_2  \mu_4\mu_5|
\\[3pt]  +&
(e_{\mu_1}e_{\mu_4}e_{\mu_5}+\emutres{2}{3}{6})
|\mu_1\mu_4\mu_5 | \cdot | \mu_2  \mu_3\mu_6|    
 \\[3pt]
  +
&(e_{\mu_1}e_{\mu_4}e_{\mu_6}+\emutres{2}{3}{5})
|\mu_1\mu_4\mu_6 | \cdot | \mu_2  \mu_3\mu_5| 
\\[-1pt]
+&(
e_{\mu_1}e_{\mu_5}e_{\mu_6}+\emutres{2}{3}{4})
|\mu_1\mu_5\mu_6 | \cdot | \mu_2  \mu_3\mu_4|  
\Big\} \,.
\end{align*}
 Although 
$\TrN(M_1M_2M_3)=\TrN(M_3M_2M_1)$ is false  for general matrices $M_1,M_2$ and $M_3$
 (e.g. for $M_j=\sigma_j$, the Pauli matrices),
 having at our disposal only two matrices, $K_1$ and $K_2$,
 the relation  $\TrN(K_\mu K_\nu K_\rho)=\TrN(K_\rho K_\nu K_\mu)$
does hold. This fact was used to obtain the last equation. 
Contracting with the diagram, as we already did for 
other partitions, one gets 
\begin{align}
\sum_{\mu,\nu,\rho}8 V_{\mu\nu\rho}+12 W_{\mu\nu\rho} \,.
\tag{\ref{eq:d}$'$}
\end{align}
By collecting 
the terms from the three equations with primed tags, 
the bi-trace term for the $\mtr I$-diagrams
one obtains
\begin{align*}
 \mathfrak b (\mathrm{I}) &= + 2 \sum_{\mu,\nu,\rho} 
   \big( 4 O_{\mu\nu\rho}  + 2 P_{\mu\nu\rho}  
        +6 R_{\mu\nu\rho} +  6 S_{\mu\nu\rho}  \phantom{+ 4 O_{\mu\nu\rho}}  \\
  & \hspace{1.523cm} +  2  T_{\mu\nu\rho} +  U_{\mu\nu\rho}
 + 4V_{\mu\nu\rho} + 6 W_{\mu\nu\rho}\big) \,,
\end{align*}
which is a claim amid the 
proof of Proposition \ref{thm:Dsix}. \par 
Notice that these integer coefficients 
add up to 62, and so will these 
(denoted $p_{\chi},q_{\chi},\ldots, w_{\chi}$ in 
the main text) in absolute value  for a general diagram $\chi$.
There are two missing terms 
to get the needed $2^6=\#\mathscr P_6$ terms.
These are the trivial cases $\Upsilon, \Upsilon^c = \emptyset$,
which can be
readily computed. 

For the $\mtr I$-diagram, 
\begin{align*}
\mathfrak s(\mtr{I})
&= 
2N \cdot\sum_{\boldsymbol\mu} 
\raisebox{-.45\height}{\includegraphics[height=1.5cm]{figs/DiagI.pdf}}
\times |\mu_1\mu_2\mu_3\mu_4\mu_5\mu_6|\\
&=  2 N\cdot\sum_{\mu,\nu,\rho} e_{\mu}e_{\nu}e_{\rho} |\mu\nu\rho\nu\mu\rho| \\[3pt]
&=2N\cdot\TrN \big\{
e_1 K_1^6+2  e_2 (K_1K_2)^2K_1^2 + e_1 (K_2^2K_1)^2 \\
&\hspace{1.77cm} +e_2 K_2^6+2  e_1 (K_2K_1)^2K_2^2 + e_2 (K_1^2K_2)^2
\big\}\,.
\end{align*}
The single-trace action $\mathcal S_6$ in
Proposition \ref{thm:Dsix} is then
obtained by summing
over all 6-point chord diagrams $\sum_\chi \mathfrak s(\chi)$, whose values are found by 
 similar computations. 
 \allowdisplaybreaks[0]

\bibliographystyle{alpha}
\end{document}